\newtheorem{theorem}{Theorem}
\newtheorem{prop}[theorem]{Proposition}
\DeclareMathOperator{\tr}{tr}
\title{The Radial Hedgehog Solution in the Landau–de Gennes Theory: Effects of the Bulk Potentials}
\author{Sophie McLauchlan, Yucen Han, Matthias Langer, Apala Majumdar}
\date{June 27, 2023}
\begin{document}
\maketitle

\begin{abstract}
We study equilibrium configurations in spherical droplets of nematic liquid crystal with strong radial anchoring, within the Landau–de Gennes theory with a sixth-order bulk potential. The sixth-order potential predicts a bulk biaxial phase for sufficiently low temperatures, which the conventional fourth-order potential cannot predict. We prove the existence of a radial hedgehog solution, which is a uniaxial solution with a single isotropic point defect at the droplet centre, for all temperatures and droplet sizes, and prove that there is a unique radial hedgehog solution for moderately low temperatures, but not deep in the nematic phase. We numerically compute critical points of the Landau–de Gennes free energy with the sixth order bulk potential, with rotational and mirror symmetry, and find at least two competing stable critical points: the biaxial torus and split core solutions, which have biaxial regions around the centre, for low temperatures. The size of the biaxial regions increases with decreasing temperature. We also compare the properties of the radial hedgehog solution with the fourth-order and sixth-order potentials respectively, in terms of the Morse indices as a function of the temperature and droplet radius; the role of the radial hedgehog solution as a transition state in switching processes; and compare the bifurcation plots with temperature, with the fourth- and sixth-order potentials. Overall, the sixth-order potential has a stabilising effect on biaxial critical points and a de-stabilising effect on uniaxial critical points and we discover an altogether novel bulk biaxial critical point of the Landau–de Gennes energy with the sixth-order potential, for which the bulk biaxiality is driven by the sixth-order potential.
\end{abstract}

 \section{Introduction}
 \label{sec:introduction}
 Nematic liquid crystals (NLCs) are classical examples of mesogenic materials that combine fluidity with the orientational ordering of crystalline solids \parencite{deGennes}. NLCs have distinguished material directions that correspond to preferred directions of averaged molecular alignment, referred to as \emph{nematic directors}. Consequently, NLCs have direction-dependent physical, optical and rheological properties i.e. they are anisotropic materials and this anisotropy drives NLC applications in science and technology \parencite{lagerwallscalia}. 

 Defects are a defining feature of confined NLC samples. Loosely speaking, a defect is a point, line or surface wherein the NLC directors are not uniquely or properly defined \cite{deGennes}. Defects have pronounced optical signatures and NLCs are often recognised by the celebrated \emph{Schlieren textures} \cite{deGennes}. Defects can be undesirable in applications, since they can result in poor optical resolution but equally, defects can play crucial roles in self-assembly mechanisms acting as attractors or repellents for assembling mechanisms \parencite{Lavrentovich1994, lagerwallscalia}. For example, there is experimental evidence that stable arrays of point defects in liquid crystals can be exploited for novel applications in photonics and sensors \parencite{cholesterics}.

 There are several open questions about a rigorous mathematical description of NLC defects \parencite{ballreview}. 
 There are at least three different competing continuum theories for NLCs in the literature - the Oseen–Frank theory, the Ericksen theory and the Landau–de Gennes theory, ordered in terms of increasing generality \cite{deGennes, MottramNewton2014}. The Oseen–Frank and Ericksen theories are restricted to uniaxial NLCs, or NLC phases with a single nematic director such that all directions perpendicular to the uniaxial director are equivalent; these descriptions are limited in their abilities to describe higher-dimensional defects. The Landau-de Gennes theory is the most general continuum theory, in the sense that it can account for both uniaxiality and biaxiality, for which the NLC phase can have a primary and secondary director, along with defects of all dimensionality \cite{MajumdarZarnescu2010}. In this paper, we work within the celebrated Landau–de Gennes (LdG) theory for NLCs, for which the NLC state is described by a LdG $\mathbf{Q}$-tensor order parameter, whose eigenvectors model the nematic director(s) and the eigenvalues are a measure of the degree of the orientational order about the corresponding eigenvector \cite{deGennes}. The LdG theory is a variational theory, so that physically observable configurations are modelled by local or global minimisers of an appropriately defined LdG free energy, which is a nonlinear and non-convex functional that depends on both the order parameter and its derivatives. The LdG free energy density will typically comprise a bulk potential, which determines the NLC phase as a function of the temperature, and an elastic energy density which penalises spatial inhomogeneities and can account for geometric frustration and/or boundary effects. Mathematically, the energy minimisers are typically classical solutions of the associated Euler–Lagrange equations, which are a system of nonlinear and coupled partial differential equations, and the energy minimisers (or indeed any critical points of the LdG free energy) are analytic \cite{MajumdarZarnescu2010}. This makes a rigorous mathematical definition of a NLC defect in the LdG theory challenging, since they can naturally appear in the solution profiles without obvious blow-up characteristics. However, whilst the $\mathbf{Q}$-tensor solutions of the Euler–Lagrange equations are analytic, one can associate NLC defects with the discontinuities of the eigenvectors or with interfaces on which the number of distinct eigenvalues changes, or regions of normalised energy concentration \cite{Majumdar2012, MajumdarZarnescu2010}.

 In this paper, we focus on the canonical Radial Hedgehog (RH) defect \cite{RossoVirga1996}; this defect has been studied by several authors and we do not provide comprehensive references. The RH defect is essentially a degree $+1$ vortex in superconductivity and there are analogies with cavitation in elasticity as well \parencite{bbhbookGL}. Simply put, the RH defect is a spherically symmetric nematic point defect, such that the nematic director points radially outwards everywhere away from the RH defect. We study the RH defect on a three-dimensional spherical droplet with homeotropic boundary conditions i.e. the director is everywhere radial on the droplet surface or normal to the droplet surface. It is intuitively clear from symmetry considerations that one might expect a RH solution, with a single point defect at the droplet centre for which the director is not defined, and the director points radially outwards away from the centre to match the homeotropic boundary conditions. The RH solution has been studied mathematically in the Oseen–Frank and Landau–de Gennes frameworks, see for example \parencite{LinLiu2001} where the authors survey results on the existence and stability of the RH solution on spherical droplets in the Oseen–Frank theory, as a function of material properties. In recent years, there has been a splurge of mathematical activity on the study of NLC defects in the LdG framework. In a batch of papers \parencite{SchopohlSluckin1987, RossoVirga1996, GartlandMkaddem1999, MkaddemGartland2000, Majumdar2012, Lamy2013, IgnatNguyen2015}, and the list is certainly not complete, the authors prove the existence of a RH solution as a critical point of a LdG free energy on spherical droplets with homeotropic boundary conditions, study its stability and also study other competing critical points which might be energetically preferable depending on the droplet size, material properties and the temperature. In these papers, the authors consider a LdG energy with a fourth-order bulk potential, that is a quartic polynomial in the LdG $\mathbf{Q}$-tensor order parameter. The fourth-order potential can only admit isotropic (disordered) or uniaxial critical points i.e. the minimiser of the fourth-order potential is an ordered uniaxial phase for low temperatures and more details are given in the next section. In the LdG framework with the fourth-order potential, the RH solution is a uniaxial critical point of the LdG free energy, with the nematic director being the radial unit vector and the RH point defect being an isolated isotropic point at the droplet centre. It is known that the RH solution is globally stable for small droplets and for relatively high temperatures (that can be quantified) and unstable for large droplets and low temperatures. In particular, the authors numerically observe the competing \emph{biaxial torus} and \emph{split core} critical points of the LdG free energy for low temperatures, which replace the isotropic point defect of the RH solution by biaxial structures around the droplet centre. The isotropic point defect is energetically expensive for low temperatures, since the fourth-order bulk potential has an energy maximum at the isotropic phase for high temperatures, and hence, biaxiality arises from geometric frustration and energetic considerations for these examples.

 It is natural to ask if the properties of the RH solution strongly depend on the form of the LdG free energy density, in particular the choice of the bulk potential and the elastic energy density. In fact, there has been little work on the effect of the bulk potential on NLC defects and LdG solution landscapes. The fourth-order LdG bulk potential is the simplest polynomial that allows for a first-order isotropic-nematic phase transition, but higher-order polynomials are possible, that can allow for greater diversity in bulk NLC phases \parencite{AllenderLonga2008}. In this paper, we compare and contrast the fourth-order and a sixth-order LdG bulk potential. Following previous work in \cite{AllenderLonga2008}, we compute the minimisers of the sixth-order potential as a function of the temperature, and the minimisers are uniaxial with positive order parameter (so that the molecules, on average, align along the uniaxial director, whereas a uniaxial state with negative order parameter describes a state wherein the nematic molecules are approximately perpendicular to the uniaxial director) for moderately low temperatures, and there are no stable uniaxial critical points of the sixth-order potential for sufficiently low temperatures. In fact, the bulk energy minimiser is biaxial for sufficiently low temperatures. 
 
 We study critical points of a LdG free energy, with the sixth-order bulk potential, on a spherical droplet with homeotropic boundary conditions. The key difference, compared to previous work, is that biaxiality can now be a bulk effect as opposed to a localised phenomenon in the fourth-order case. We prove the existence of a RH solution in this case, with a uniaxial radial director and an isotropic point defect at the droplet centre. For moderately low temperatures for which the sixth-order bulk potential favours an ordered uniaxial phase, the qualitative properties of the RH solution are almost identical for the fourth- and sixth-order bulk potentials i.e. there is a unique RH solution with a monotonically increasing order parameter profile, away from the droplet centre. For low temperatures, when the sixth-order potential favours a bulk biaxial phase, we obtain multiple RH solutions, including RH solutions with negative order parameters. By contrast, the RH solution is unique with positive order parameter, for all low temperatures, with a fourth-order potential. We compare the stability of the RH solution as a function of the droplet size and temperature, with the fourth- and sixth-order potentials. As expected, the RH solution has a smaller domain of stability with the sixth-order potential, since the sixth-order potential promotes biaxiality for low temperatures and the RH solution is uniaxial everywhere away from the droplet centre. We also numerically compute the biaxial torus and split core solutions with symmetry constraints, with the sixth-order potential, and as expected the biaxial regions are larger and these solutions have enhanced stability with the sixth-order potential. In fact, we demonstrate that the RH solution can act as a transition state between the biaxial torus and split core solutions for low temperatures i.e. if one wants to design a switching process between the biaxial torus and split core solutions, the switching can be mediated by a RH solution. Heuristically, the biaxial torus shrinks to an isotropic point defect at the droplet centre for the RH solution, and then grows into the split core biaxial defect and vice-versa during the switching process. These numerical results are also complemented by bifurcation plots (under symmetry assumptions) as a function of the temperature; the bifurcation plots are qualitatively similar for the fourth- and sixth-order potentials with shifted bifurcation points that reflect the enhanced stability of the biaxial torus solution and reduced stability of the RH solution, with the sixth-order potential. However, the exciting questions pertain to the existence of altogether new LdG critical points or LdG energy minimisers, in the presence of a sixth-order bulk potential, which are biaxial in the bulk, and not merely near defects or in localised regions akin to the biaxial torus and the split core solutions. The answer is affirmative and we have not performed an exhaustive study on these lines, but have provided an example of a stable LdG critical point, that is biaxial in the bulk, for low temperatures and a large droplet (to be made precise), and the bulk biaxiality is driven by the sixth-order potential. The findings of this paper suggest that the choice of the LdG bulk potential can impact the multiplicity of (unstable) uniaxial solutions e.g. RH solutions; the domains of stability of uniaxial solutions; and importantly, give rise to new stable biaxial structures, which are outside the remit of the fourth-order potential. However, the experimental implications of these bulk biaxial solutions needs careful discussion.

 In Section~\ref{sec:prelim}, we set up our modelling framework and discuss the sixth-order potential and its stationary points, including minimisers in Section~\ref{sec:sixthfb}. In Section~\ref{sec:analysis}, we prove a batch of analytical results for the RH solution with the sixth-order potential, drawing out on the similarities and differences between the results with the fourth- and sixth-order potentials, respectively. In Section~\ref{sec:numerical}, we present illustrative numerical results including computations of the Morse indices of the RH solution and bifurcation plots as a function of the temperature. In Section~\ref{sec:conclusion}, we give a numerical example of a stable bulk biaxial LdG critical point, which is biaxial almost everywhere, away from the droplet centre and the droplet boundary, and the bulk biaxiality is driven by the sixth-order potential. We discuss the implications of this numerical observation i.e. is it an artefact of the mathematical model or can it be foundational for new experiments. We conclude with some perspectives and open questions.

\section{Preliminaries}
\label{sec:prelim}
We work with the Landau–de Gennes (LdG) theory wherein the NLC configuration is modelled by the LdG order parameter - the \textbf{Q}-tensor, which is a symmetric, traceless, \(3\times3\) matrix with five degrees of freedom \parencite{deGennes}. The \textbf{Q}-tensor can be written as \parencite{MajumdarZarnescu2010, MottramNewton2014}
	\begin{equation}
		\textbf{Q} = s\bigg(\boldsymbol{n}\otimes\boldsymbol{n} - \frac{1}{3}\textbf{I}\bigg) + p\bigg(\boldsymbol{m}\otimes\boldsymbol{m} - \frac{1}{3}\textbf{I}\bigg),
	\end{equation}
	where \(\boldsymbol{n}\) and \(\boldsymbol{m}\) are orthonormal eigenvectors that model the nematic directors, while \(s\) and \(p\) are scalar order parameters that measure the degree of order about \(\boldsymbol{n}\) and \(\boldsymbol{m}\), respectively. The liquid crystal configuration is biaxial if both \(s\) and \(p\) are nonzero and non-equal; uniaxial if only one of \(s\) and \(p\) is nonzero or if $s=p$; and isotropic if both \(s = p = 0\). Physically, a biaxial state has two preferred directions of orientational ordering or two directors, whereas a uniaxial state has a uniquely defined director that corresponds to the eigenvector with the largest positive eigenvalue. An isotropic state has no orientational ordering, so that all directions in space are physically equivalent and there is no notion of a director.
 
	Our domain is a spherical droplet, $B(0,R) = \left\{ \mathbf{x} \in \mathbb{R}^3; |\mathbf{x}| \leq R \right\}$, where $R$ is the droplet radius and we impose uniaxial homeotropic boundary conditions i.e. the uniaxial director is normal/orthogonal to the droplet surface, $|\mathbf{x}| = R$. The equilibrium configurations are critical points of the LdG free energy, which, in the absence of surface energies and external fields, is of the form \parencite{deGennes}
	\begin{equation}
		\mathcal{F}[\textbf{Q}] = \int_{B(0,R)}\frac{L}{2}|\nabla\textbf{Q}|^2 + f_B(\textbf{Q})\,dV, \label{LdGgeneral}
	\end{equation}
	where \(\frac{L}{2}|\nabla\textbf{Q}|^2\) is the one-constant elastic energy density with material-dependent elastic constant $L>0$, and $|\nabla\textbf{Q}|^2 = Q_{ij,k}Q_{ij,k}$, $Q_{ij,k} = \frac{\partial Q_{ij}}{\partial x_k}$, $i,j,k = 1,2,3$ penalises spatial inhomogeneities, noting that we use the Einstein summation convention here and throughout this paper. Further, \(f_B(\textbf{Q})\) is the bulk potential which determines the preferred bulk NLC phase (uniaxial/biaxial/isotropic) in spatially homogeneous systems as a function of temperature.
	
	We work with two different bulk potentials throughout this paper, a fourth-order potential which only admits uniaxial or isotropic critical points and a sixth-order potential which allows for uniaxial, biaxial and isotropic critical points \parencite{LewisThesis, AllenderLonga2008}. Our aim is to assess the impact of the bulk potential on the emergence of biaxiality for equilibrium configurations, whether it arises from mere geometric frustration or whether biaxiality can arise from bulk effects too. The fourth-order bulk potential is given by \parencite{MottramNewton2014}:
	\begin{equation} \label{eq:f1}
		f_B(\textbf{Q}) = \frac{A}{2}\tr\textbf{Q}^2 - \frac{B}{3}\tr\textbf{Q}^3 + \frac{C}{4}\big(\tr\textbf{Q}^2\big)^2,
	\end{equation}
	where $\tr \textbf{Q}^2 = Q_{ij} Q_{ij}$, $\tr \textbf{Q}^3 = Q_{ij} Q_{jk} Q_{ki}$, $i,j,k = 1,2,3$. The constant \(A\) is a material- and temperature-dependent constant and \(B\) and \(C\) are material-dependent constants. The physical meaning of these constants is not entirely established in the literature, but it is commonly accepted that  \(B > 0\) corresponds to rod-like molecules; and \(B < 0\) corresponds to discotic molecules \parencite{GramsbergenLonga1986}. In this manuscript, we assume $B>0$. This is the simplest form of bulk potential which captures a first-order phase transition between the nematic and isotropic phases \parencite{Lamy2013,MajumdarZarnescu2010} i.e. the critical points of \eqref{eq:f1} are either uniaxial or isotropic. 
    When $A\leq \frac{B^2}{27 C}$, the minimisers of $f_B$ are uniaxial; when $0\leq A\leq \frac{B^2}{24C}$, the minimisers are either uniaxial or isotropic; when $A\geq\frac{B^2}{24C}$, the unique minimiser is isotropic. The parameter regime $A <0$ describes the low temperature regime, for which the bulk potential strongly favours an ordered uniaxial nematic phase, so that biaxiality is only induced by geometric frustration or the competition between $f_B$ and the elastic energy density.
	
	We also consider a more general and more complicated  sixth-order bulk potential in this paper, which is of the form \parencite{AllenderLonga2008}
	\begin{equation}\label{eq:f2}
		f_B(\textbf{Q}) = \frac{A}{2}\tr\textbf{Q}^2 - \frac{B}{3}\tr\textbf{Q}^3 + \frac{C}{4}\big(\tr\textbf{Q}^2\big)^2 + \frac{D}{5}\tr\textbf{Q}^2\tr\textbf{Q}^3 + \frac{E}{6}\big(\tr\textbf{Q}^2\big)^3 + \frac{(F - E)}{6}\big(\tr\textbf{Q}^3\big)^2.
	\end{equation}
	The sixth-order bulk potential admits biaxial  critical points in addition to uniaxial and isotropic bulk critical points, so that biaxiality can be a bulk effect as opposed to the fourth-order potential in \eqref{eq:f1}. Again, \(A\) is a material- and temperature-dependent constant, while \(B,C,D,E,\) and \(F\) are material-dependent constants.  Moreover, we require that \(E \geq 0, F > 0\) to guarantee the stability of the expansion \parencite{AllenderLonga2008}. As with \eqref{eq:f1}, $A<0$ describes the low-temperature phase, and the bulk potential \eqref{eq:f2} admits biaxial minimisers for sufficiently low temperatures, as will be demonstrated in the next section.
	
	Throughout this paper, we work with a nondimensionalised version of the LdG free energy (\ref{LdGgeneral}), inspired by \parencite{HuQu2016}. Let
	\begin{equation}
		\tilde{\boldsymbol{x}} = \frac{\boldsymbol{x}}{R}, \quad \widetilde{\textbf{Q}} = \sqrt{\frac{27C^2}{2B^2}}\textbf{Q}. \label{nondimvariables}
	\end{equation}
	Then the dimensionless LdG free energy with the fourth-order potential is given by
	\begin{equation}
		\mathcal{F}_{four} = \widetilde{\mathcal{F}}[\widetilde{\textbf{Q}}] = \int_{B(0,1)}\Bigg(\frac{\varepsilon^2}{2}|\nabla\widetilde{\textbf{Q}}|^2 + \frac{t}{2}\tr\widetilde{\textbf{Q}}^2 - \sqrt{6}\tr\widetilde{\textbf{Q}}^3 + \frac{1}{2}\big(\tr\widetilde{\textbf{Q}}^2\big)^2\Bigg)\,d\widetilde{V},\label{4thNDLdG}
	\end{equation}
	and with the sixth-order potential is given by
	\begin{multline}
		\mathcal{F}_{six}=\widetilde{\mathcal{F}}[\widetilde{\textbf{Q}}] = \int_{B(0,1)}\Bigg(\frac{\varepsilon^2}{2}|\nabla\widetilde{\textbf{Q}}|^2 + \frac{t}{2}\tr\widetilde{\textbf{Q}}^2 - \sqrt{6}\tr\widetilde{\textbf{Q}}^3 + \frac{1}{2}\big(\tr\widetilde{\textbf{Q}}^2\big)^2 \\
		+ \frac{d}{5}\tr\widetilde{\textbf{Q}}^2\tr\widetilde{\textbf{Q}}^3 + \frac{e}{6}\big(\tr\widetilde{\textbf{Q}}^2\big)^3 + \frac{(f - e)}{6}\big(\tr\widetilde{\textbf{Q}}^3\big)^2\Bigg)\,dV,\label{6thNDLdG}
	\end{multline}
	where the characteristic length scale \(\xi = \sqrt{\frac{27CL}{B^2}}\),  \(\varepsilon = \frac{\xi}{R}\) and 
	\begin{equation}
		t = \frac{27AC}{B^2}, \quad d = \frac{2\sqrt{6}BD}{9C^2}, \quad e = \frac{4B^2E}{27C^3}, \quad f = \frac{4B^2F}{27C^3}. \nonumber
	\end{equation}
	This rescaling reduces the computational domain to the unit ball in three dimensions, $B(0,1)$, and the geometrical properties are captured by the parameter \(\varepsilon\). We refer to \(t\) as the temperature for convenience, although it is, more precisely, a function of the absolute temperature. We drop the tildes for brevity in the remainder of this manuscript and all results are interpreted in terms of the dimensionless variables.
	
	The homeotropic boundary condition is encoded by the Dirichlet condition \parencite{SonnetKillian1995}
	\begin{equation}
		\textbf{Q}_{s_+} = s_+\bigg(\hat{\boldsymbol{r}}\otimes\hat{\boldsymbol{r}} - \frac{1}{3}\textbf{I}\bigg), \quad \boldsymbol{r} \in \partial B(0,1), \label{DirichletBC}
	\end{equation}
	where \(\hat{\boldsymbol{r}}\) is the unit vector in the radial direction,  \(s_+\) is the minimiser of
 \[
 \left\{ f_B(\mathbf{Q}): \mathbf{Q} = s\left(\mathbf{n}\otimes \mathbf{n} - \mathbf{I}/3 \right); \mathbf{n} \in \mathbb{R}^3; |\mathbf{n}|=1; s \geq 0, \right\}
 \]
 and $f_B$ is given by either \eqref{eq:f1} or \eqref{eq:f2}.  
  There is an explicit expression for $s_+$ for the fourth-order potential in \eqref{eq:f1}:
	\begin{equation}
		s_+ = \sqrt{\frac{3}{2}}\frac{3 + \sqrt{9 - 8t}}{4},
	\end{equation}
 when $t<\frac{9}{8}$.
	For the sixth-order potential in \eqref{eq:f2}, \(s_+\) is the largest positive minimiser of the function
	\begin{equation}
		g(s) := \frac{t}{3}s^2 - \frac{2\sqrt{6}}{9}s^3 + \frac{2}{9}s^4 + \frac{4d}{135}s^5 + \frac{4e}{81}s^6 + \frac{2(f - e)}{243}s^6, \label{gdefn}
	\end{equation}
	and the function \(g\) is simply the potential \eqref{eq:f2} restricted to uniaxial $\mathbf{Q}$-tensors.
	
	The equilibrium configurations are (classical) solutions of the Euler–Lagrange (EL) equations associated with the LdG free energy. In the case of \eqref{eq:f1}, the EL equations are given by:
	\begin{equation}
		\varepsilon^2\Delta Q_{ij} = tQ_{ij} - 3\sqrt{6}\bigg(Q_{ik}Q_{kj} - \frac{1}{3}\delta_{ij}\tr\textbf{Q}^2\bigg) + 2Q_{ij}\tr\textbf{Q}^2, \label{4thEL}
	\end{equation}
	and with the sixth-order potential \eqref{eq:f2}, the EL equations are given by
	\begin{multline}
		\varepsilon^2\Delta Q_{ij} = tQ_{ij} - 3\sqrt{6}\bigg(Q_{ik}Q_{kj} - \frac{1}{3}\delta_{ij}\tr\textbf{Q}^2\bigg) + 2Q_{ij}\tr\textbf{Q}^2
		+ \frac{2d}{5}Q_{ij}\tr\textbf{Q}^3 \\
        + \frac{3d}{5}\tr\textbf{Q}^2\bigg(Q_{ik}Q_{kj} - \frac{1}{3}\delta_{ij}\tr\textbf{Q}^2\bigg) + eQ_{ij}\big(\tr\textbf{Q}^2\big)^2 + (f - e)\tr\textbf{Q}^3\bigg(Q_{ik}Q_{kj} - \frac{1}{3}\delta_{ij}\tr\textbf{Q}^3\bigg). \label{6thELQ}
	\end{multline}
	We note that
	\begin{equation}
		\sqrt{6}\,\delta_{ij}\tr\textbf{Q}^2, \quad \text{and} \quad
		\frac{1}{3}\delta_{ij}\tr\textbf{Q}^2\bigg(3\sqrt{6} - \frac{3d}{5}\tr\textbf{Q}^2 - (f - e)\tr\textbf{Q}^3\bigg) \nonumber
	\end{equation}
	are Lagrange multipliers  for the tracelessness constraint.
	
	The admissible space for the LdG $\mathbf{Q}$-tensors is taken to be  \parencite{Majumdar2012}
	\begin{equation}
		\mathcal{A}_{\textbf{Q}} := \{\textbf{Q} \in W^{1,2}(B(0,1),\bar{S}):\textbf{Q} = \textbf{Q}_{s_+} \,\,\text{on}\,\,\partial B(0,1)\}, \label{AQ}
	\end{equation}
	where \(W^{1,2}(B(0,1),\bar{S})\) is the Sobolev space
	\begin{equation}
		W^{1,2}(B(0,1),\bar{S}) = \Biggl\{\textbf{Q}:B(0,1) \to\bar{S}:\int_{B(0,1)}|\textbf{Q}|^2 + |\nabla\textbf{Q}|^2\,dV < \infty\Biggr\}, \nonumber
	\end{equation}
	and \(\bar{S}\) is the space of symmetric, traceless \(3\times3\) matrices $
		\bar{S} := \{\textbf{Q}\in \mathbb{M}^{3\times3}:Q_{ij} = Q_{ji}, Q_{ii} = 0\}.$

	We focus on a special exact solution of the EL equations in \eqref{4thEL} and \eqref{6thELQ} in the admissible space \(\mathcal{A}_\textbf{Q}\); the so-called radial hedgehog (RH) solution on spherical droplets with homeotropic anchoring conditions
	\begin{equation}
		\textbf{Q}^*(\boldsymbol{r}) = s^*(r)\bigg(\hat{\boldsymbol{r}}\otimes\hat{\boldsymbol{r}} - \frac{1}{3}\textbf{I}\bigg). \label{Qrh}
	\end{equation}
	This is a uniaxial solution and the uniaxial director is the radial unit vector, with a scalar order parameter \(s^*\) that only depends on the radial distance, \(r\), from the droplet centre. The corresponding admissible space for \(s^*\) is \parencite{Majumdar2012}
	\begin{equation} \label{eq:As}
		\mathcal{A}_s := \big\{s\in W^{1,2}([0,1],\mathbb{R}):s(1) = s_+\big\}.
	\end{equation}
 In what follows, we study the qualitative properties of the RH solution with the fourth- and sixth-order bulk potentials in \eqref{eq:f1} and \eqref{eq:f2}, to compare and contrast the effects of the bulk potential on $s^*$, and also on other competing critical points (uniaxial and biaxial) of the LdG free energy and the role of biaxiality on the corresponding solution landscapes.
 
 \section{The Sixth-Order Potential}
 \label{sec:sixthfb}

In what follows, we study the critical points of the sixth-order potential in \eqref{eq:f2}, firstly when restricted to uniaxial $\mathbf{Q}$-tensors and secondly, in the whole space $\bar{S}$ of symmetric, traceless $3\times 3$ matrices. There are two key differences when compared to the fourth-order potential in \eqref{eq:f1}: (i) in the restricted class of uniaxial $\mathbf{Q}$-tensors,  \eqref{eq:f2} has two non-trivial minimisers, with positive and negative order parameters, $s_+$ and $s_-$ respectively, and for sufficiently low temperatures, $s_-$ is the global minimiser of $g(s)$ defined in \eqref{gdefn}. In contrast, the uniaxial critical point with positive order parameter, $s_+$ is always the global minimiser of the fourth-order potential in \eqref{eq:f1} for low temperatures. (ii) Secondly, the sixth-order potential admits biaxial critical points, while the fourth-order potential can only admit uniaxial or isotropic critical points, and the global minimiser of \eqref{eq:f2} is actually biaxial, deep in the nematic phase. This is outside the scope of the fourth-order potential in \eqref{eq:f1}. We give more details in the sub-sections below. 
	
	\subsection{Uniaxial Critical Points of the Sixth-Order Potential}
In what follows, we work with parameters in \eqref{eq:f2} so that $f_B(\mathbf{Q})$, restricted to uniaxial $\mathbf{Q}$-tensors, has a single isotropic critical point for high temperatures and two well-defined critical points (with positive and negative uniaxial order parameter) for low temperatures. To this end, we consider the quartic  polynomial \(g'(s)/s\), where \(g\) is defined in \eqref{gdefn}.

 We determine the nature of the roots of the quartic polynomial \(g'(s)/s\) by considering its discriminant \parencite{Rees1922}
	\begin{multline} \nonumber
		\Delta = \frac{512}{59049}(5e + f)^3t^3 + \frac{1024}{177147}\bigg(-\frac{512}{27}(5e + f)^2 - d^4 + \frac{32}{3}(5e + f) + \frac{32\sqrt{6}}{9}d(5e + f)^2\bigg)t^2 \\
		+ \frac{1024}{6561}\bigg(-\frac{64}{243}d^2 + \frac{512}{243}(5e + f) + \frac{32}{9}(5e + f)^2 - \frac{2}{27}d^2(5e + f) - \sqrt{6}d^2 + \frac{320\sqrt{6}}{243}d(5e + f)\bigg)t \\
		+ \frac{1024}{2187}\bigg(-\frac{64}{81}(5e + f) + \frac{4\sqrt{6}}{81}d^3 + \frac{2}{3}d(5e + f) + \frac{8}{81}d^2 - (5e + f)^2\bigg),
	\end{multline}
	and the quantities 
	\begin{equation}
		P = 8(5e + f) - d^2, \quad R = d^2 - \frac{32}{27}f(5e + f) - \frac{16\sqrt{6}}{27}(5e + f)^2. \nonumber
	\end{equation}
	The signs of the discriminant, \(\Delta\), and the quantities \(P\) and \(R\) characterize the roots as follows. For \(P > 0\), and \(R \neq 0\), we conclude that a quartic polynomial has two real roots and two complex conjugate roots if \(\Delta < 0\); two pairs of complex conjugate roots if \(\Delta > 0\); and a real double root and two complex conjugate roots if \(\Delta = 0\).
 
 We restrict our parameters \(d, e\), and \(f\) to satisfy one of these three sets of conditions for all values of \(t\) and \(\varepsilon\). Figure \ref{fig:discriminant} demonstrates that for the specific choice, \(d = 1, e = 0, f = 1\), \(P > 0\) and \(R \neq 0\) and there is some transition temperature \(t_0\) such that \(\Delta < 0\) when \(t < t_0\); \(\Delta = 0\) when \(t = t_0\); and \(\Delta > 0\) when \(t > t_0\). These results can be translated into properties of the function \(g\) in (\ref{gdefn}) in this parameter regime. Namely, under these conditions, the function \(g\) will have one real stationary point, \(s = 0\), above some transition temperature \(t_0\), two stationary points at \(t_0\), and three real stationary points at temperatures below \(t_0\). In particular, we note that \(g\) is a double-well potential at lower temperatures $t< t_0$ for which there is at least a positive local minimiser $s=s_+$ of the sixth-order polynomial $g(s)$. In the remainder of this manuscript, we work with $e =0$ and $d = f= 1$.
	
	\begin{figure}
		\centering
		\includegraphics[width = 0.3\textwidth, angle = -90]{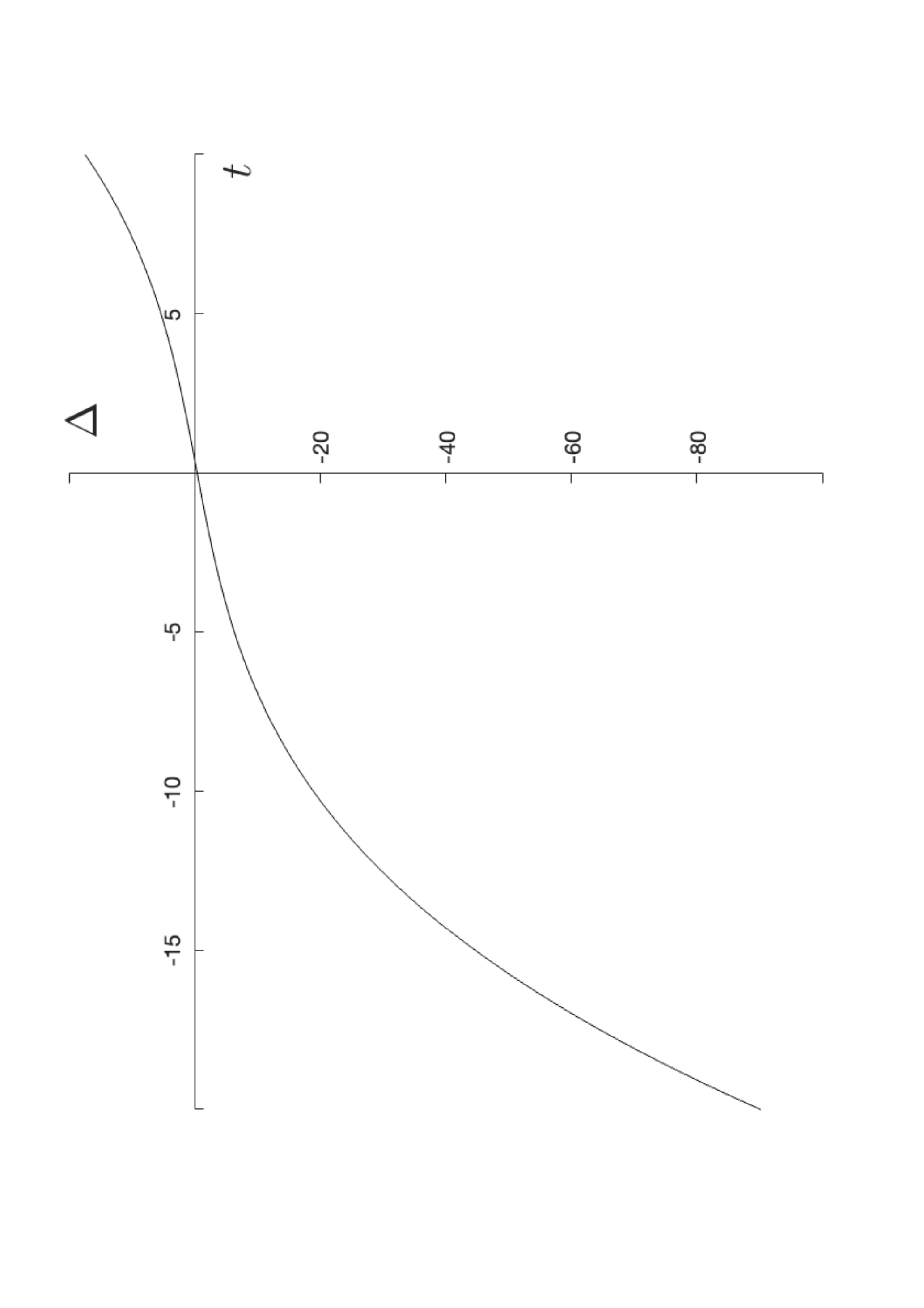}
		\caption{The function \(\Delta\) for \(d = 1, e = 0, f = 1\) and \(t\) from \(-20\) to 5.} \label{fig:discriminant}
	\end{figure}
 	
 In Figure \ref{fig:6thgplots}, we plot the function \(g\) for three different temperatures. One can clearly see the isotropic state is the global minimiser for the high temperature $t=5$; \(s_+ >0\) is the global minimiser for the low temperature $t=-25$; and as the temperature further decreases to $t=-100$, \(s = s_- <0\) is the global minimiser of \(g \). Nonzero critical points first appear at the transition temperature \(t_0 \approx 0.97\), and the minimisers \(s_+\) and \(s_-\) have the same energy at the transition temperature \(t^* \approx -48.6\). This is further illustrated in Figure \ref{fig:6thunisp}, where we plot the stationary points of \(g\) as a function of $t$, and indicate their stability. We observe that for $t<0$, there are two local minimisers, $s=s_+>0$ and $s=s_{-}<0$ of $g(s)$, and  \(s = s_+\) is the global minimiser for moderately low temperatures, and $s=s_{-}$ is the global minimiser of $g$ for $t < -50$. 
	\begin{figure}[!ht]
		\begin{minipage}{0.2\textwidth}
			\centering
			\includegraphics[width=\textwidth]{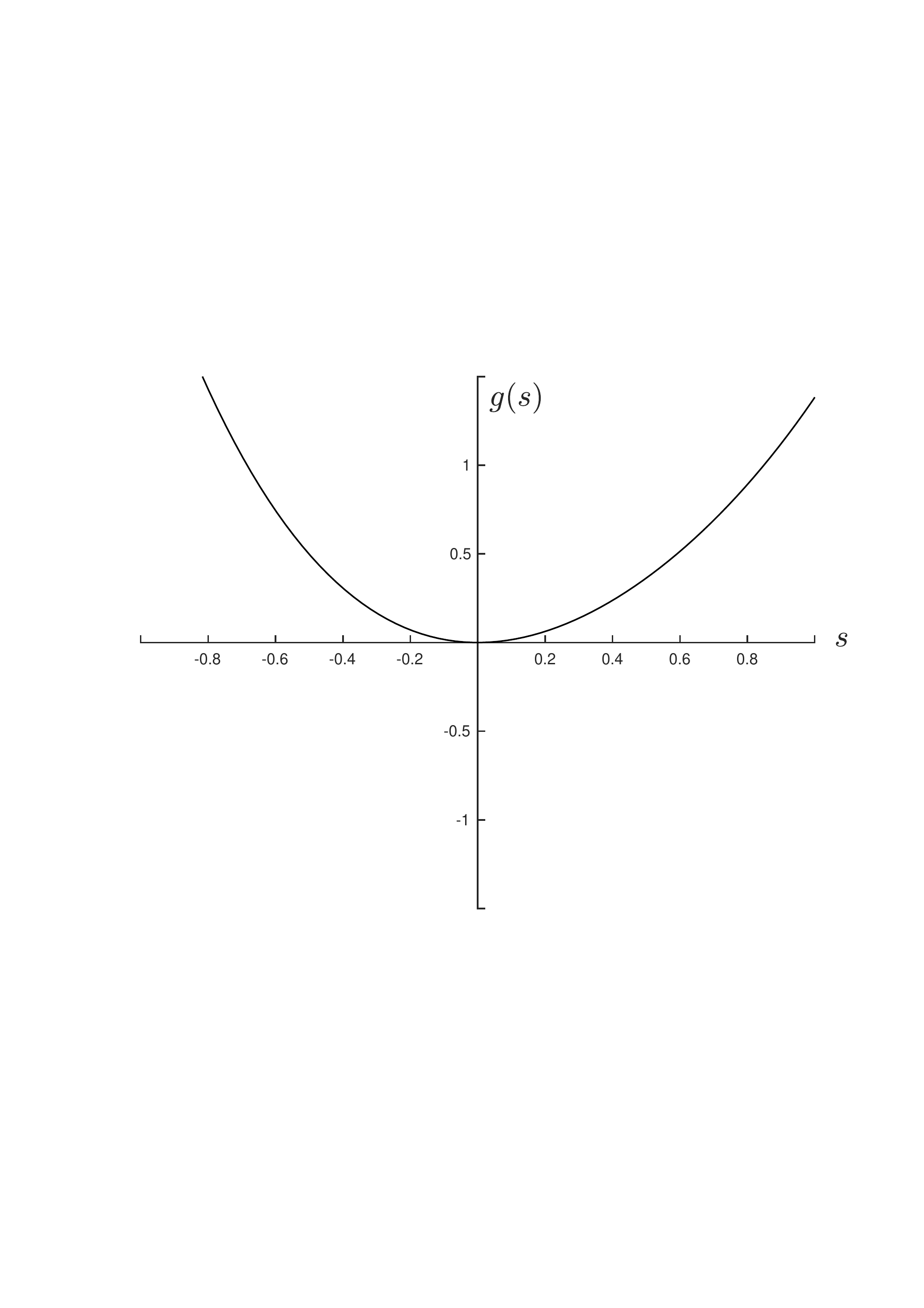}
			\subcaption{}
		\end{minipage}\hfill
		\begin{minipage}{0.2\textwidth}
			\centering
			\includegraphics[width=\textwidth]{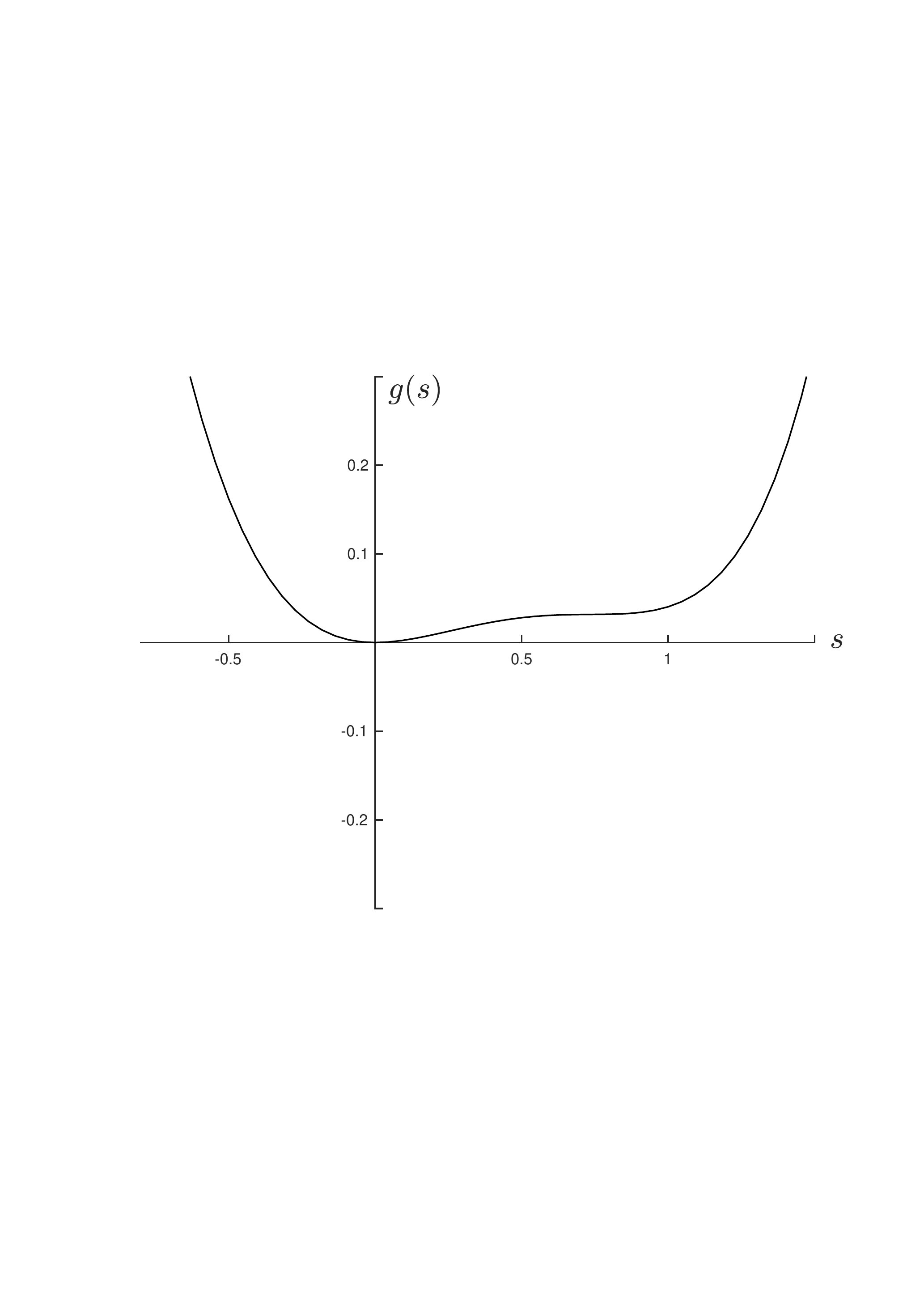}
			\subcaption{}
		\end{minipage}\hfill
		\begin{minipage}{0.2\textwidth}
			\centering
			\includegraphics[width=\textwidth]{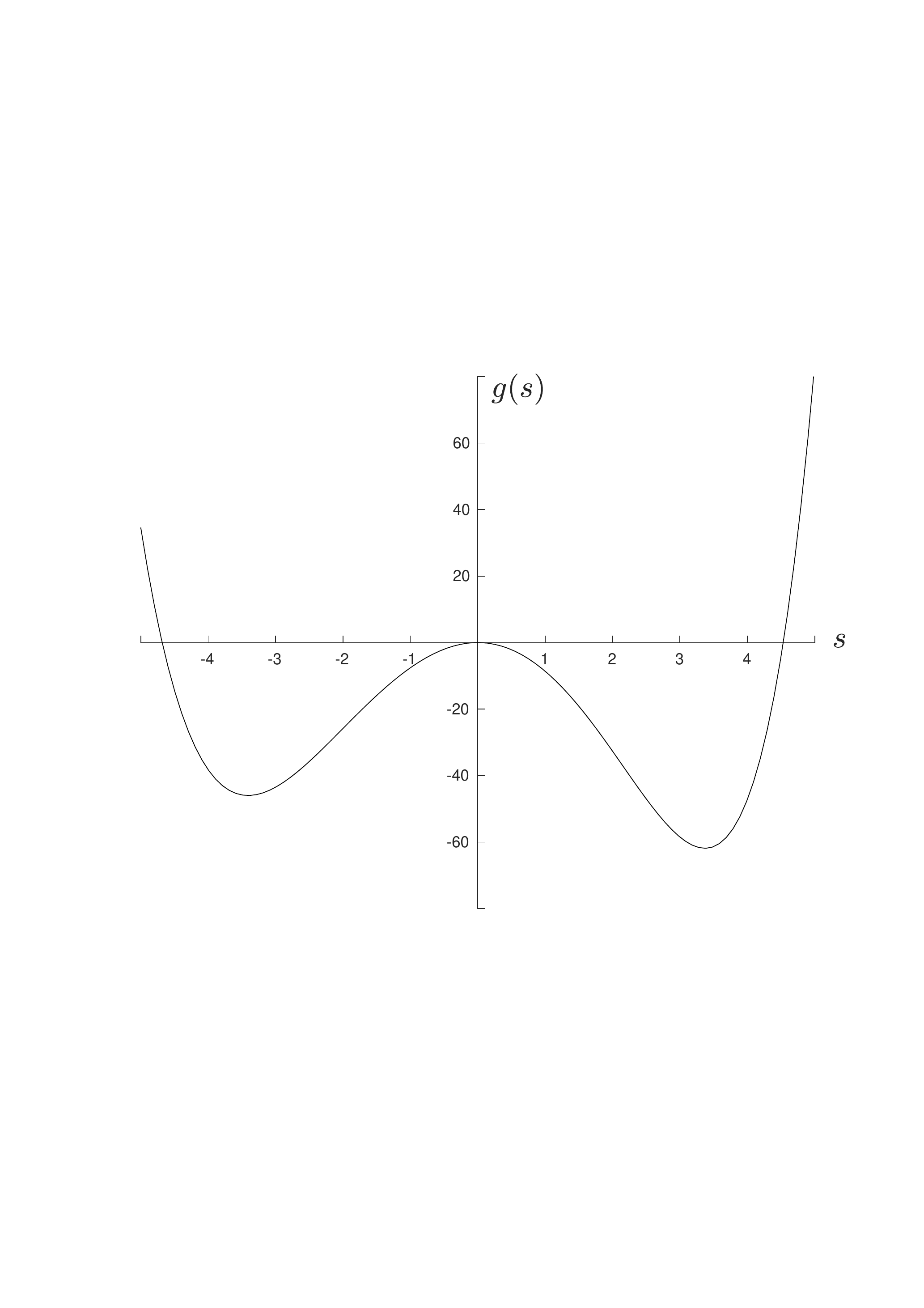}
			\subcaption{}
		\end{minipage}\hfill
            \begin{minipage}{0.2\textwidth}
			\centering
			\includegraphics[width=\textwidth]{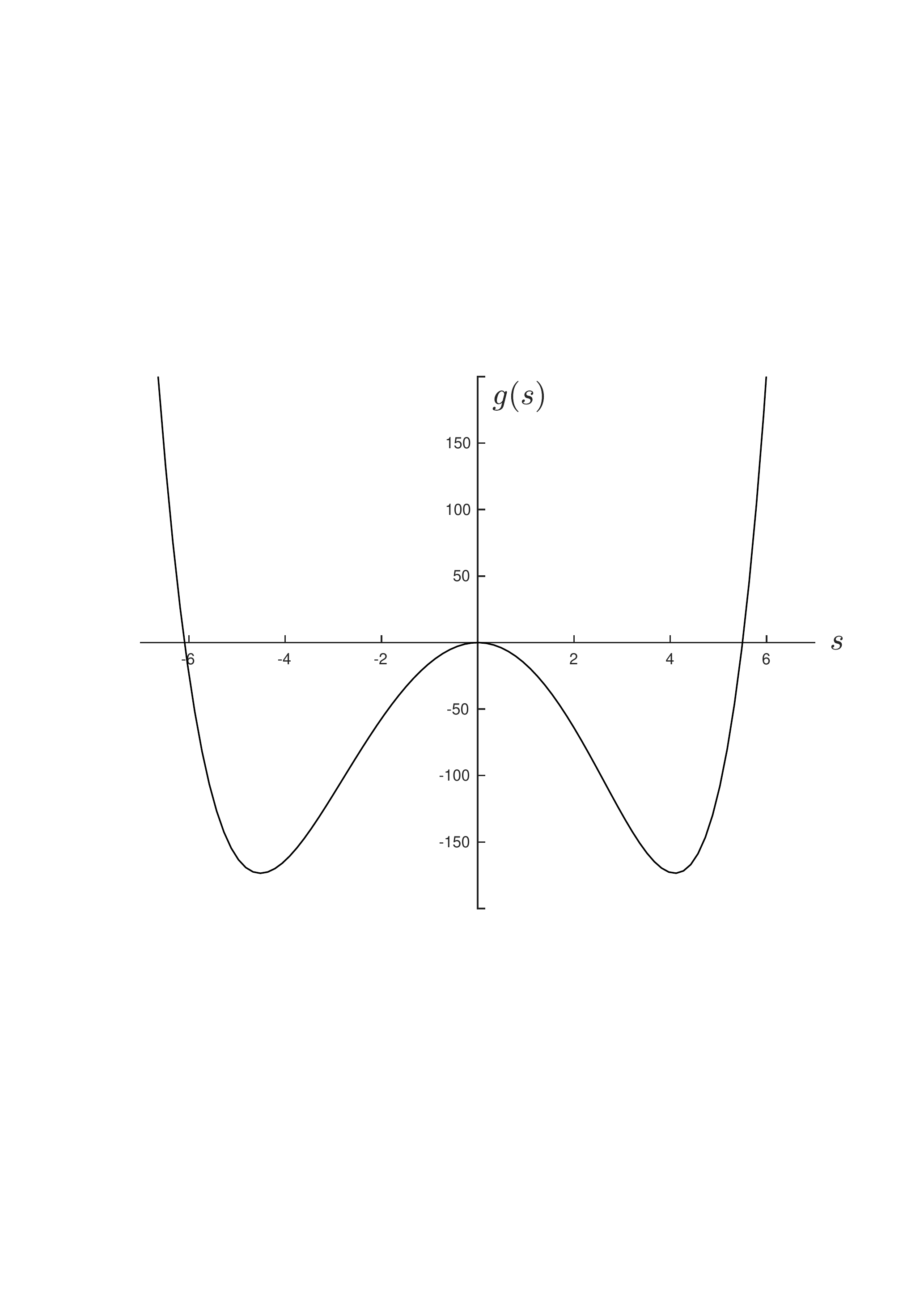}
			\subcaption{}
		\end{minipage}\hfill
            \begin{minipage}{0.2\textwidth}
			\centering
			\includegraphics[width=\textwidth]{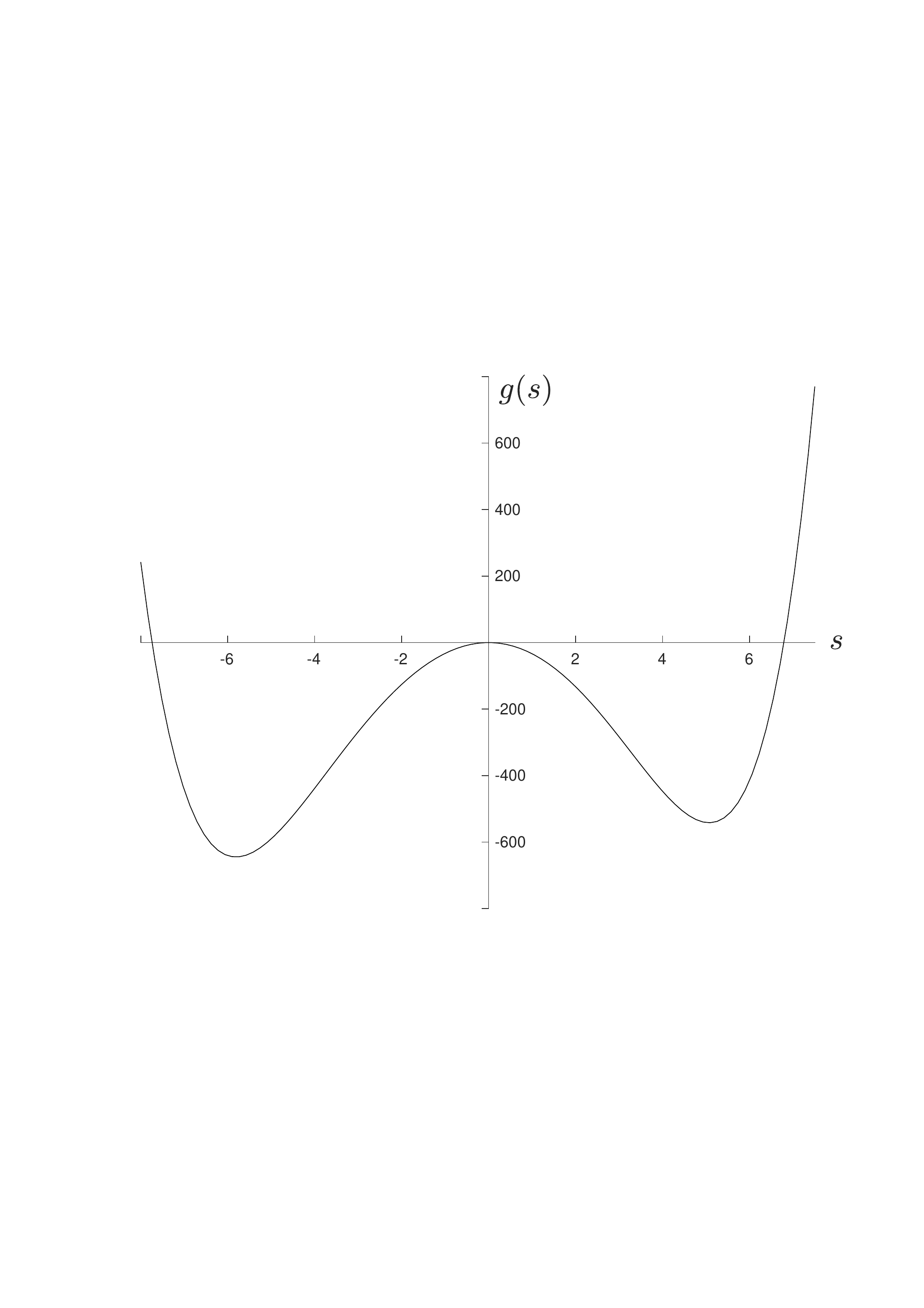}
			\subcaption{}
		\end{minipage}
		\caption{The function \(g\) with \(d = 1, e = 0, f = 1\) at (a) \(t = 5\); (b) \(t = t_0\); (c) \(t = -2\); (d) \(t = t^*\); and (e) \(t = -100\).}\label{fig:6thgplots}
	\end{figure}
	\begin{figure}[!ht]
		\centering
		\includegraphics[width=0.3\textwidth, angle=-90]{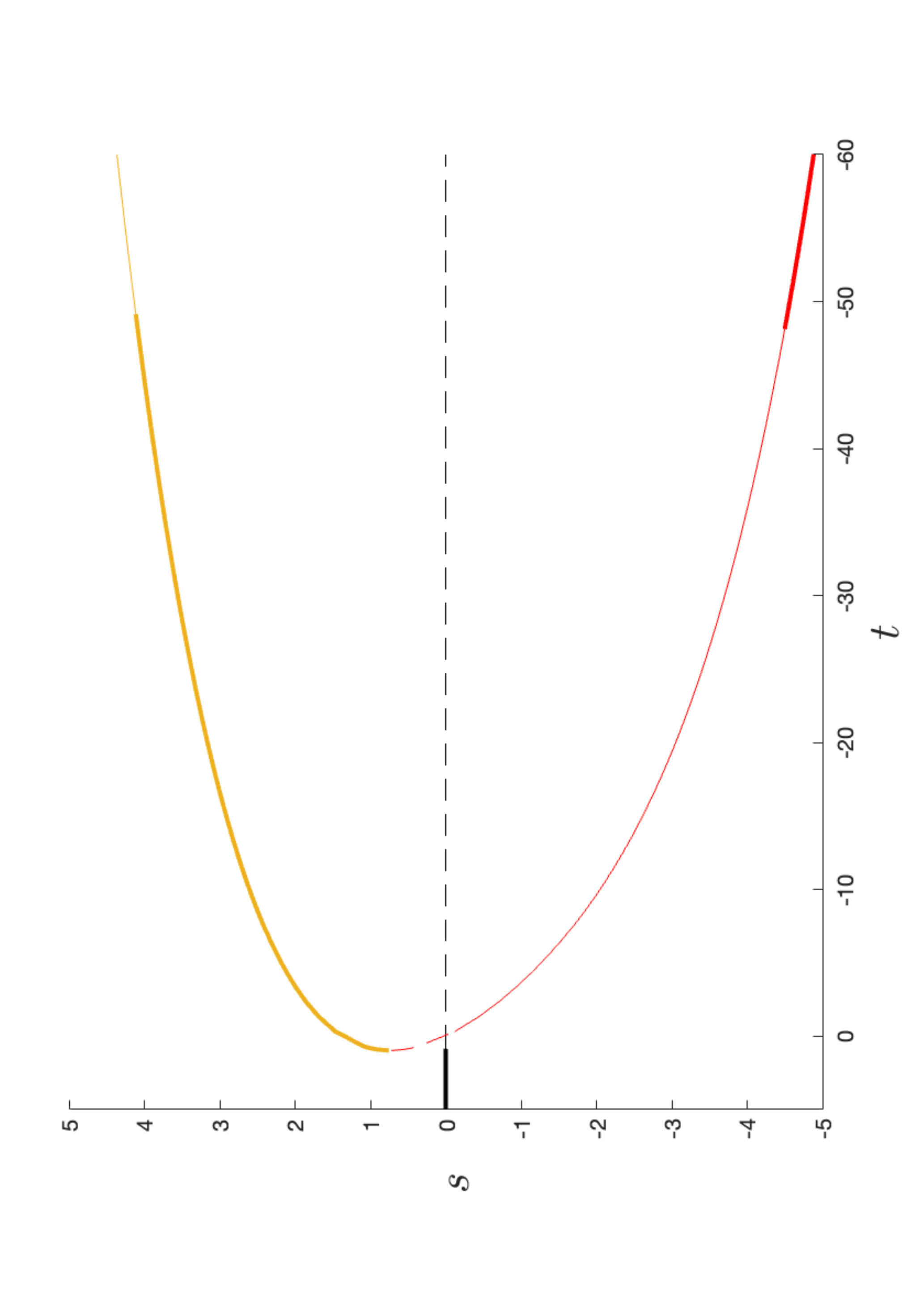}
		\caption{The stationary points of the function \(g\) for decreasing temperature. Bold lines indicate a global minimium, thin solid lines indicate a local minimum and dashed lines indicate instability (negative second derivative of $g$).} \label{fig:6thunisp}
	\end{figure}
	
	\subsection{Biaxial Stationary Points of the Sixth-Order Potential}
	
	Next, we consider critical points of $f_B$ in \eqref{eq:f2} in the full class of LdG $\mathbf{Q}$-tensors i.e.
	\begin{equation} \label{eq:b}
		\textbf{Q} = s\bigg(\boldsymbol{n}\otimes\boldsymbol{n} - \frac{1}{3}\textbf{I}\bigg) + p\bigg(\boldsymbol{m}\otimes\boldsymbol{m} - \frac{1}{3}\textbf{I}\bigg),
	\end{equation}
	i.e. we substitute \eqref{eq:b} into \eqref{eq:f2} and compute the stationary points in terms of the pairs $(s, p)$.
	
	There are no biaxial stationary points of the fourth-order bulk potential (see \parencite[Proposition 1]{Majumdar2010}). In Figure \ref{fig:4thstpts}, we plot the stationary points of the fourth-order bulk potential in (\ref{eq:f1}). The isotropic phase is the global minimiser above \(t = 1.125\), and below this temperature we have four nonzero stationary points, two of which are global minimisers: one minimiser has positive $s$ and zero $p$ (yellow solid line in Fig. \ref{fig:4thstpts}); the second minimiser has $s = p <0$ (red solid line in Fig. \ref{fig:4thstpts}). Both minimisers are uniaxial $\mathbf{Q}$-tensors and rotations of each other. The same relationship holds for the two unstable stationary points: both correspond to a uniaxial configuration with negative order parameter. Thus, Figure \ref{fig:4thstpts} shows that below \(t = 1.125\), there is one stable uniaxial stationary point with positive order parameter, and one unstable uniaxial stationary point with negative order parameter; there are no biaxial stationary points; and the isotropic phase loses stability for $t<0$. These facts are well-known in the literature about the fourth-order potential.
	
	\begin{figure}[!ht]
		\begin{minipage}{0.4\textwidth}
			\centering\hspace*{-0.5cm}\includegraphics[width=0.75\textwidth,angle=-90]{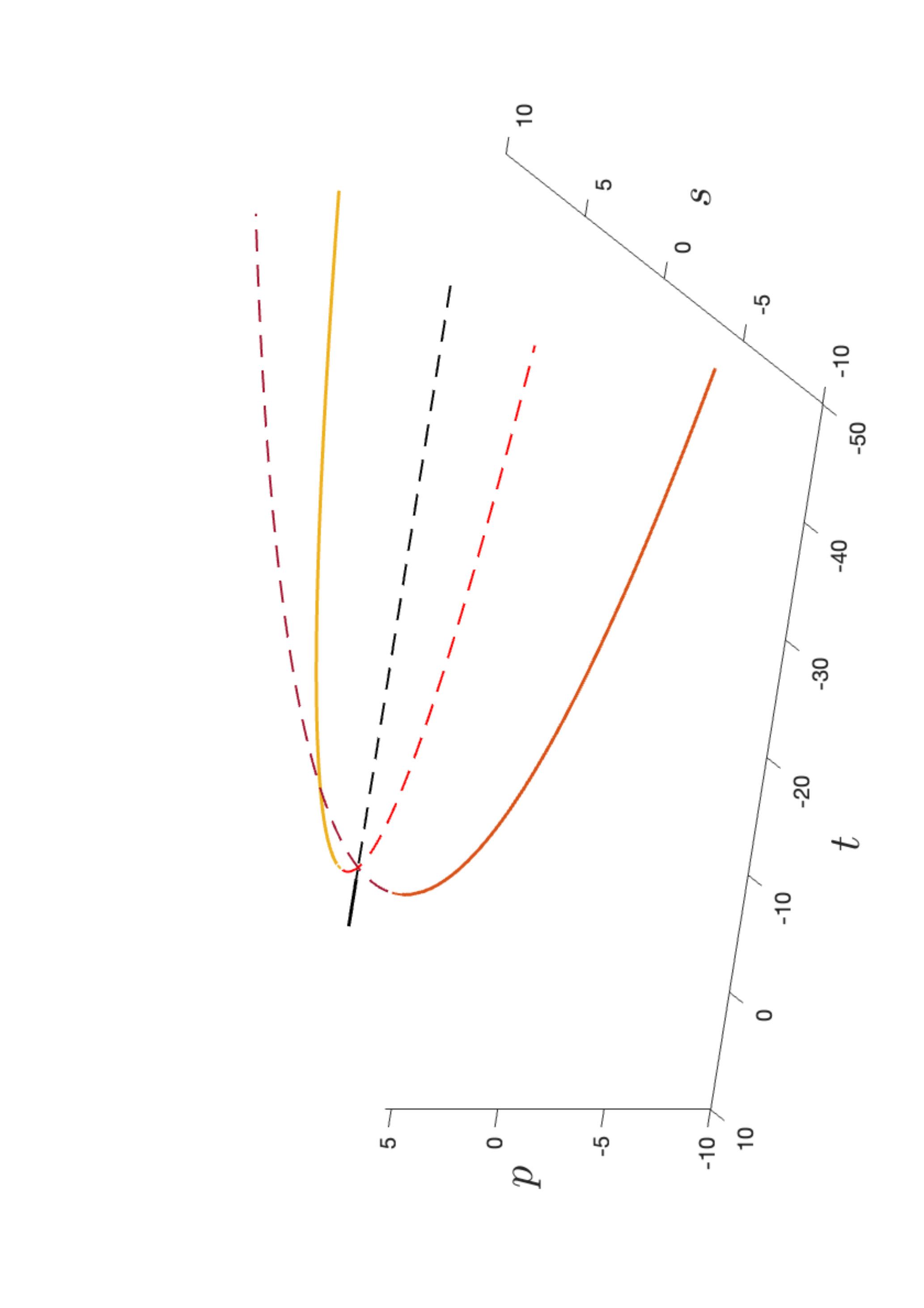}
			\subcaption{}
		\end{minipage}\hfill
		\begin{minipage}{0.29\textwidth}
			\centering\hspace*{-0.25cm}\includegraphics[width=0.75\textwidth,angle=-90]{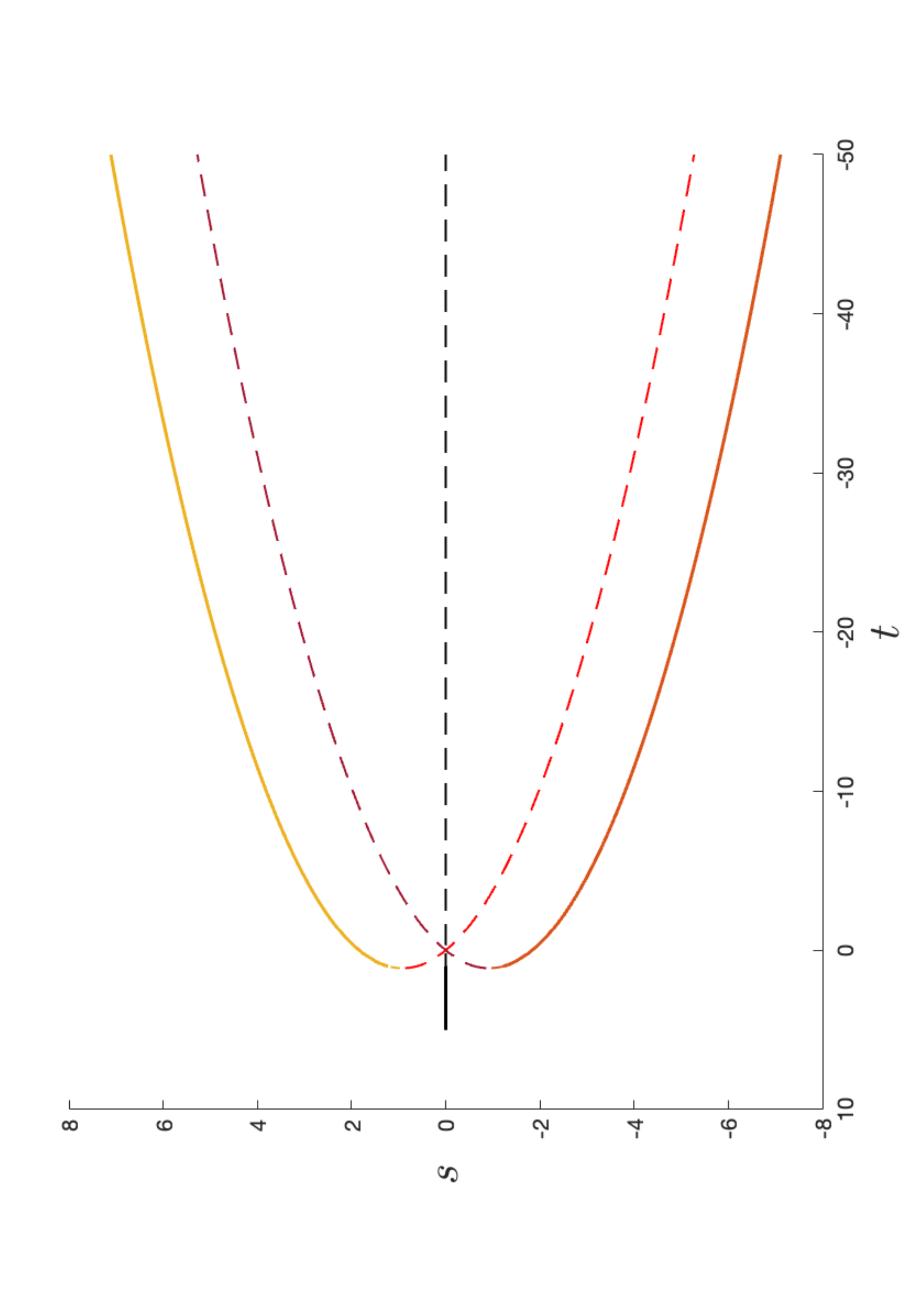}
			\subcaption{}
		\end{minipage}\hfill
		\begin{minipage}{0.29\textwidth}
			\centering\hspace*{-0.25cm}\includegraphics[width=0.75\textwidth,angle=-90]{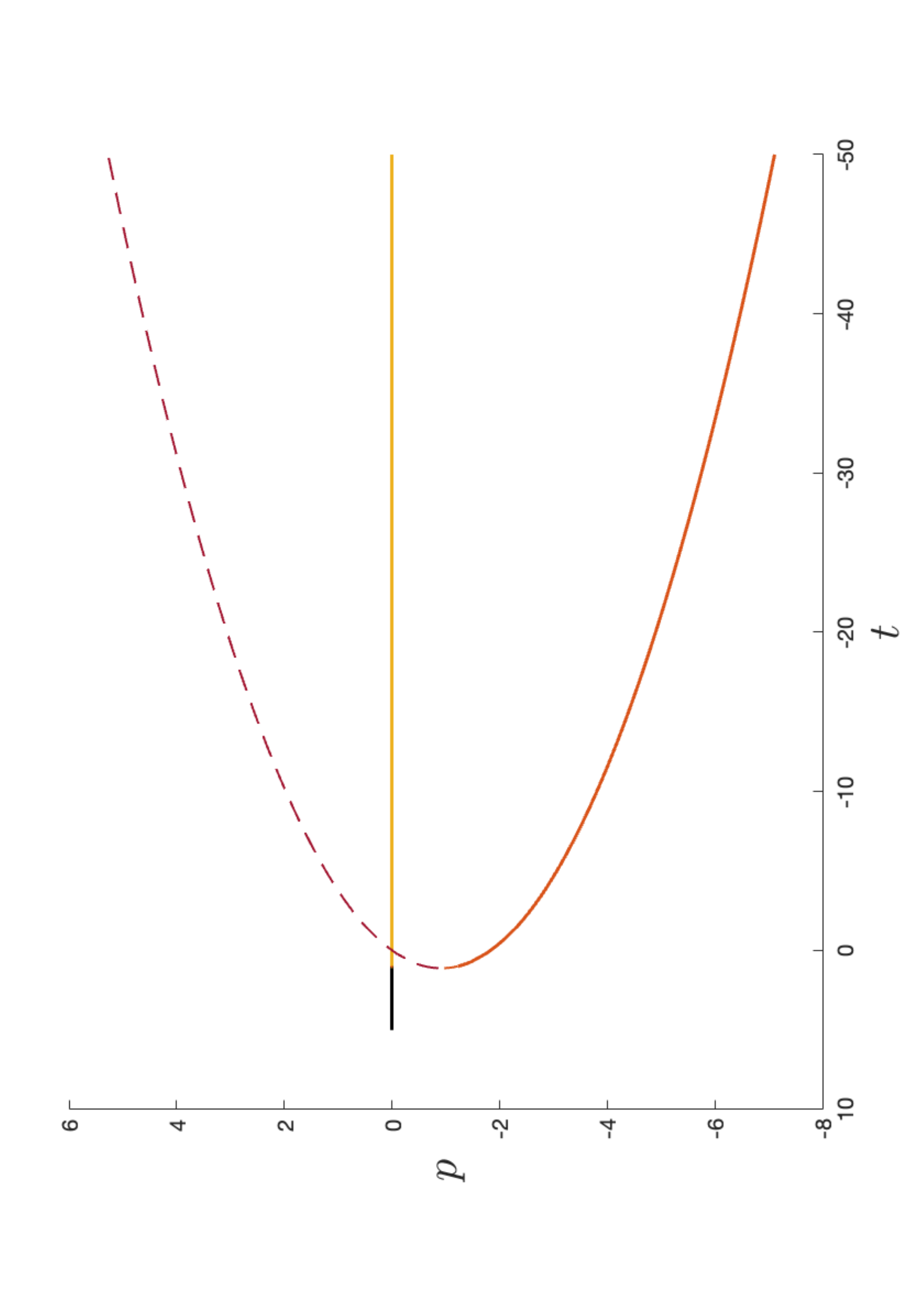}
			\subcaption{}
		\end{minipage}
		\caption{Stationary points of the fourth-order potential, for the temperature range \(t = 5\) to \(t = -50\). (a) Both scalar order parameters, \(s\) and \(p\), plotted against \(t\). (b) Scalar order parameter \(s\) plotted against \(t\). (c) Scalar order parameter \(p\) plotted against \(t\).} \label{fig:4thstpts}
	\end{figure}
	
	We fix \(d = 1, e = 0, f = 1\) and plot the critical points of \eqref{eq:f2} (the sixth-order potential) in Figure \ref{fig:6thstpts}. There are certain similarities to Figure \ref{fig:4thstpts} i.e. the isotropic phase is the global minimiser at high temperatures; there are four non-zero stationary points for moderate temperatures below some transition temperature; and the isotropic phase loses stability at \(t = 0\). These stationary points correspond to one uniaxial global minimiser and one unstable uniaxial critical point for the same reasons as in the fourth-order case, as detailed above. The two uniaxial stationary points emerge at the approximate transition temperature \(t = 1\), with \((s,p) =(s_+,0)\) being the global minimiser (red and yellow solid lines in Fig. \ref{fig:6thstpts}), and \((s,p)=(s_-, 0)\) being the unstable stationary point (purple and red dashed lines), where \(s_+\) and \(s_-\) are defined above. The stationary point $(s_+,0) $ remains stable until biaxial stationary points appear at approximately \(t = -11.6\), at which point (unlike with the fourth-order potential in \eqref{eq:f1}) there are no stable uniaxial stationary points and there is a \emph{unique} global biaxial minimiser of \eqref{eq:f2} (three blue solid lines).
	
	\begin{figure}[!ht]
		\begin{minipage}{0.4\textwidth}
			\hspace*{-0.5cm}\includegraphics[width=0.75\textwidth,angle=-90]{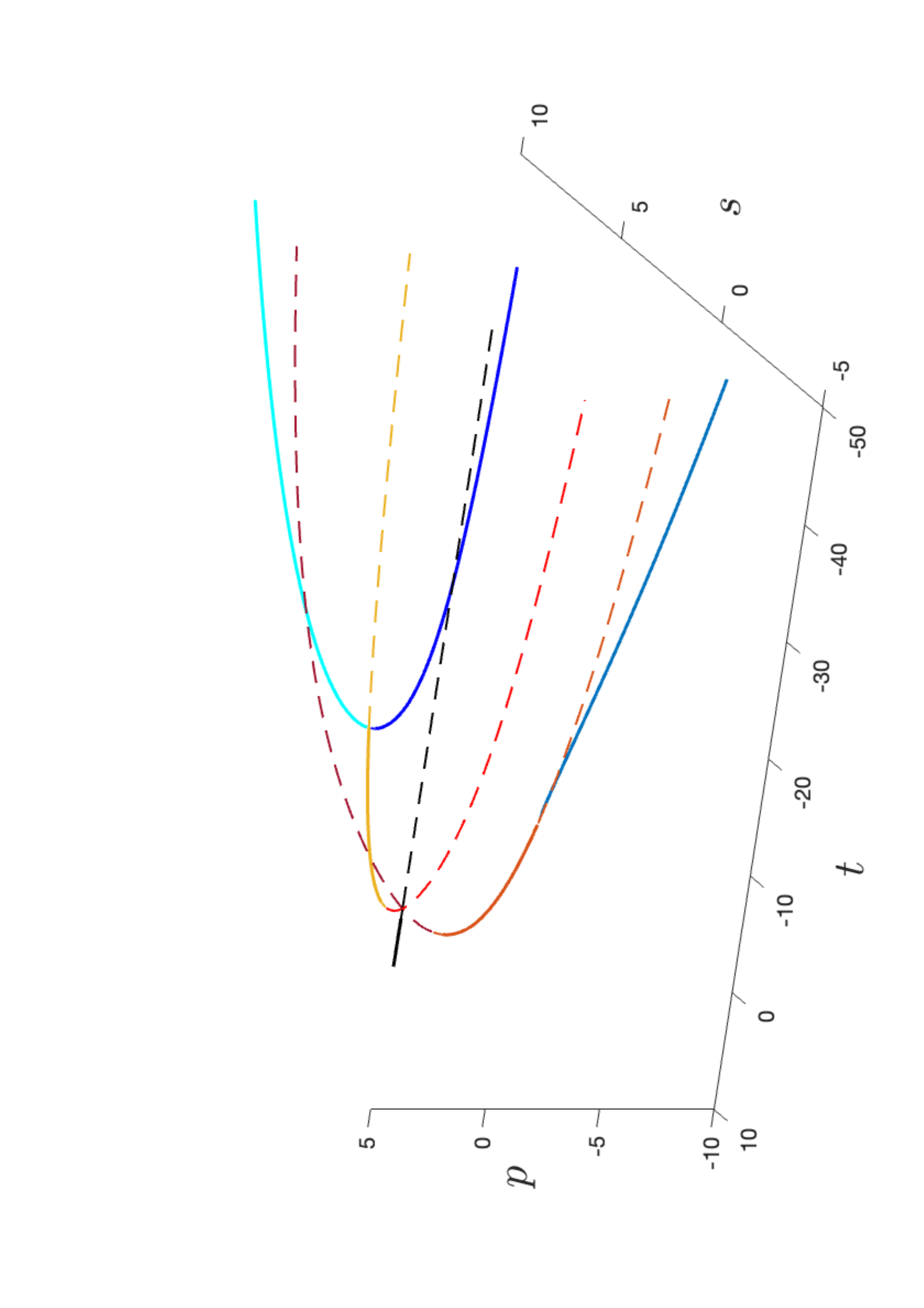}
			\subcaption{}
		\end{minipage}\hfill
		\begin{minipage}{0.29\textwidth}
			\hspace*{-0.25cm}\includegraphics[width=0.75\textwidth,angle=-90]{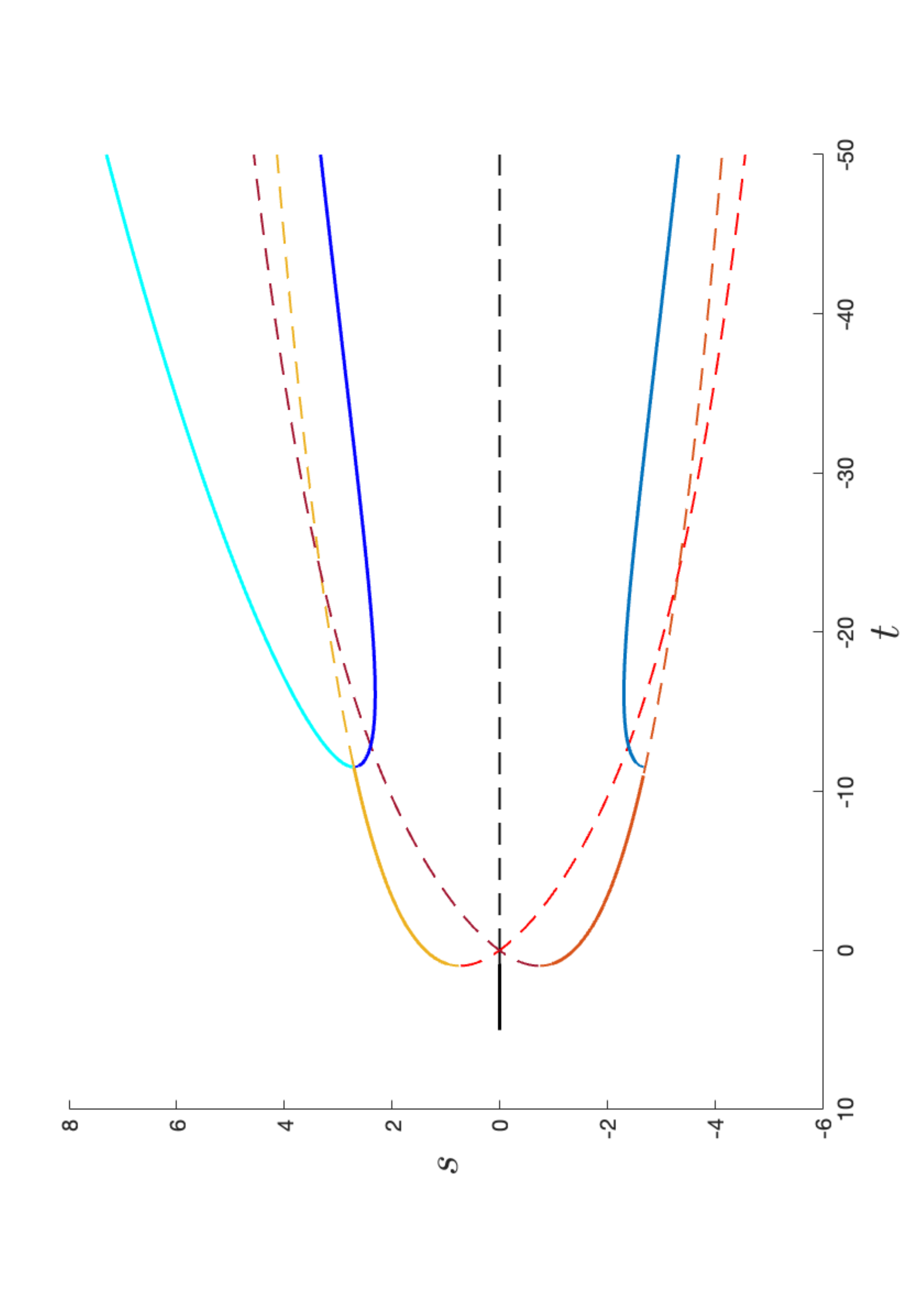}
			\subcaption{}
		\end{minipage}\hfill
		\begin{minipage}{0.29\textwidth}
			\hspace*{-0.25cm}\includegraphics[width=0.75\textwidth,angle=-90]{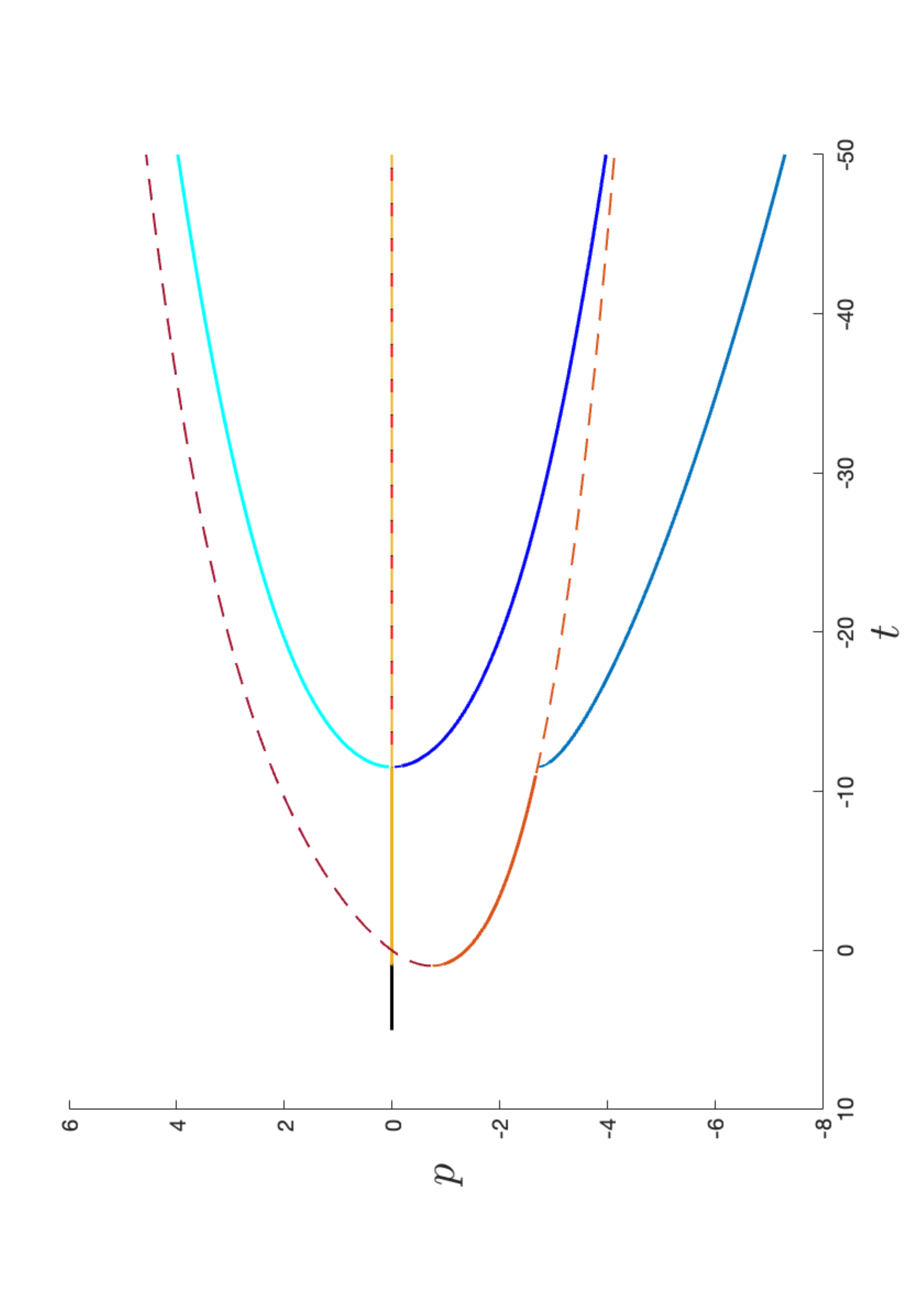}
			\subcaption{}
		\end{minipage}
		\caption{Stationary points of the sixth-order potential with \(d = 1, e = 0, f = 1\), for \(t = 5\) to \(t = -50\). (a) Scalar order parameters, \(s\) and \(p\), plotted against \(t\). (b) The scalar order parameter \(s\) plotted against \(t\). (c) The scalar order parameter \(p\) plotted against \(t\). Yellow and red lines label uniaxial stationary points; and blue lines label biaxial stationary points.} \label{fig:6thstpts}
	\end{figure}

	\section{Analysis of the Radial Hedgehog Solution}
 \label{sec:analysis}
	The RH solution has been studied extensively with the fourth-order potential in the literature. In particular, for the LdG energy with the fourth-order potential in \eqref{eq:f1}, there are strong analytic results on the existence of the RH solution, the corresponding order parameter $s^* $ is positive away from the origin, is monotonic, bounded and there is a unique RH solution for $t<0$ \parencite{Majumdar2012,Lamy2013}. Furthermore, it is known that the RH solution is stable for sufficiently small droplets, and is unstable for large droplets and for low temperatures \parencite{Majumdar2012, HenaoMajumdar2017,IgnatNguyen2015,RossoVirga1996}. In this section, we perform a parallel analysis of the RH solution with the sixth-order potential \eqref{eq:f2} to understand the dependence of the RH solution on the choice of $f_B$ and the new possibilities offered by the more general nature of the sixth-order potential in \eqref{eq:f2}. 
	
	Our first result concerns the existence of the RH solution and is analogous to Proposition 2.1 in \parencite{Majumdar2012}. 
 The proof is similar to those in \parencite{Majumdar2012} 
 and is therefore omitted.
	\begin{prop} 
		\begin{enumerate}[(a)]
			\item Consider the energy functional
			\begin{equation}
				I[s] = \int_{0}^{1}\Bigg(\varepsilon^2\bigg(\frac{1}{2}\bigg(\frac{ds}{dr}\bigg)^2 + \frac{2}{r^2}s^2\bigg) + \frac{t}{3}s^2 - \frac{2\sqrt{6}}{9}s^3 + \frac{2}{9}s^4 + \frac{4d}{135}s^5 + \frac{4e}{81}s^6 + \frac{2(f - e)}{243}s^6\Bigg)r^2\,dr, \label{LdGs}
			\end{equation}
			defined for functions \(s \in \mathcal{A}_s\). There exists a global minimiser \(s^*\in\mathcal{A}_s\) for \(I\). The function \(s^*\) is a solution of the ordinary differential equation
			\begin{equation}
				\varepsilon^2\Bigg(\frac{d^2s}{dr^2} + \frac{2}{r}\frac{ds}{dr} - \frac{6}{r^2}s\Bigg) = ts - \sqrt{6}s^2 + \frac{4}{3}s^3 + \frac{2d}{9}s^4 + \frac{4e}{9}s^5 + \frac{2(f - e)}{27}s^5, \label{sODE}
			\end{equation}
			subject to the boundary conditions
			\begin{equation}
				s(0) = 0, \quad s(1) = s_+. \label{sBCs}
			\end{equation}
			The global minimiser \(s^*\) is analytic for all \(r \geq 0\).
			
			\item The RH solution is defined in \eqref{Qrh}, where \(s^*\) is a global minimiser of \(I\) in the admissible space \(\mathcal{A}_s\), and the RH solution is a critical point of the LdG energy functional {\normalfont (\ref{LdGgeneral})}.
			\item The function \(s^*\) satisfies \((s^*)'(0) = 0\).
		\end{enumerate}
	\end{prop}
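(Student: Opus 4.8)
The plan is to handle the three parts in turn, following the template established for the fourth-order potential in \parencite{Majumdar2012} and highlighting only where the extra terms of \eqref{eq:f2} enter. For part (a) I would prove existence of a minimiser of $I$ in \eqref{LdGs} by the direct method, working in the weighted Sobolev space attached to the measure $r^2\,dr$. The first step is coercivity: since we assume $e\geq 0$ and $f>0$ (here $e=0$, $d=f=1$), the net coefficient of $s^6$ in the integrand, namely $\frac{4e}{81}+\frac{2(f-e)}{243}>0$, dominates every lower-order polynomial term, so the bulk part of the integrand is bounded below and grows like $s^6$; combined with the elastic term $\frac{\varepsilon^2}{2}(s')^2 r^2$ this bounds $I$ from below and controls $\|s\|$ and $\|s'\|$ in the weighted norm along a minimising sequence. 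I would then extract a weakly convergent subsequence and pass to the limit: the integrand is convex in $s'$, giving weak lower semicontinuity of the elastic part, while a Rellich-type compact embedding upgrades weak to strong convergence in $L^p$ for the continuous polynomial terms. The Dirichlet condition $s(1)=s_+$ (with $s_+$ the admissible value supplied by the analysis of $g$ in Section~\ref{sec:sixthfb}) survives in the limit by trace continuity, so the weak limit $s^*$ lies in $\mathcal{A}_s$ and minimises $I$, and \eqref{sODE} follows from the first variation.

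The part of (a) needing real care, and where I expect the main obstacle, is regularity up to the singular endpoint $r=0$. Away from the origin the coefficients of \eqref{sODE} are analytic, so $s^*$ is analytic on $(0,1]$ by standard ODE theory; at $r=0$ I would treat the origin as a regular singular point. First I would obtain an a priori $L^\infty$ bound on $s^*$ by a truncation argument (replacing $s^*$ by its truncation at the level where the bulk derivative has a definite sign strictly lowers $I$). Next I would argue $s^*(0)=0$: if $s^*$ were continuous at the origin with $s^*(0)=c\neq 0$, the singular term $-\frac{6\varepsilon^2}{r^2}s^*$ in \eqref{sODE} would behave like $-6\varepsilon^2 c/r^2$ and blow up, while the right-hand side stays bounded, a contradiction; hence $c=0$. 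Equivalently, in the three-dimensional picture the tensor \eqref{Qrh} can be continuous at the origin only if $s^*(0)=0$, since $\hat{\boldsymbol r}\otimes\hat{\boldsymbol r}$ is genuinely direction-dependent there, and then the known analyticity of critical points of \eqref{LdGgeneral} applies. Analyticity at $r=0$ finally follows from the indicial analysis of \eqref{sODE}, whose homogeneous solutions behave like $r^2$ and $r^{-3}$ near the origin; the singular root is excluded by finiteness of the elastic energy, leaving an analytic profile.

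For part (b) I would invoke the principle of symmetric criticality, or equivalently substitute directly: inserting \eqref{Qrh} into the full Euler--Lagrange system \eqref{6thELQ} and using $\Delta\big(s(r)(\hat{\boldsymbol r}\otimes\hat{\boldsymbol r}-\tfrac13\mathbf I)\big)$ together with the algebraic identities for powers of $\hat{\boldsymbol r}\otimes\hat{\boldsymbol r}-\tfrac13\mathbf I$, every tensorial term collapses onto the common factor $\hat{\boldsymbol r}\otimes\hat{\boldsymbol r}-\tfrac13\mathbf I$, and the scalar coefficients reproduce exactly the ODE \eqref{sODE}. Hence whenever $s^*$ solves \eqref{sODE}--\eqref{sBCs}, the tensor \eqref{Qrh} solves \eqref{6thELQ} and is a critical point of \eqref{LdGgeneral}; the only additional bookkeeping relative to the fourth-order case is the two new $\tr\textbf{Q}^2\tr\textbf{Q}^3$ and $(\tr\textbf{Q}^3)^2$ contributions, which are again proportional to the same factor.

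For part (c) I would exploit the analyticity established in (a). Since $s^*(0)=0$, write $s^*(r)=\sum_{k\geq 1}a_k r^k$ and substitute into \eqref{sODE}. Collecting the most singular ($1/r$) terms on the left, $\frac{2}{r}(s^*)'$ contributes $2a_1/r$ and $-\frac{6}{r^2}s^*$ contributes $-6a_1/r$, giving net coefficient $-4\varepsilon^2 a_1/r$, whereas the right-hand side is $O(r)$ and carries no $1/r$ term; matching forces $a_1=0$, that is $(s^*)'(0)=0$, consistent with the indicial exponent $r^2$ found in (a). This closing step is short once analyticity and $s^*(0)=0$ are in place, so the genuine difficulty of the proposition sits in the regularity analysis at the origin in part (a).
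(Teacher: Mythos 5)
Your proposal is correct and follows essentially the same route as the paper, which omits the proof and defers to the direct-method argument of Proposition~2.1 in \parencite{Majumdar2012}: existence via coercivity (the net coefficient $\tfrac{2(5e+f)}{243}>0$ of $s^6$) and weak lower semicontinuity, criticality of \eqref{Qrh} by direct substitution into \eqref{6thELQ}, and the behaviour at the origin via analyticity and the indicial exponents $2$ and $-3$ of \eqref{sODE}. The only loose point is your first justification of $s^*(0)=0$ (boundedness of the right-hand side alone does not preclude the terms $s''+\tfrac{2}{r}s'$ from compensating the $-6\varepsilon^2 s/r^2$ singularity), but your alternative arguments --- continuity of the tensor \eqref{Qrh} at the origin forcing $s^*(0)=0$, and exclusion of the $r^{-3}$ branch by finiteness of the elastic energy --- close this correctly and are the ones used in the cited reference.
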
 Next, we derive a maximum principle which yields upper bounds for $s^*$ in \eqref{Qrh}.
	\begin{prop}
		A global minimiser \(\normalfont \textbf{Q}^*\) of the LdG free energy {\upshape (\ref{6thNDLdG})}, in the class of uniaxial {\upshape \textbf{Q}}-tensors, in the admissible space, \(\normalfont \mathcal{A}_{\textbf{Q}}\) in {\upshape (\ref{AQ})}, satisfies the upper bound \(\normalfont |\textbf{Q}^*|^2 \leq \frac{2}{3}\max\big\{s_+^2, s_-^2\big\}\) on \(B(0,1)\), where \(s_+\) and \(s_-\) are the two nonzero stationary points of \((\ref{gdefn})\).
	\end{prop}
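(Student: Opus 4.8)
The plan is to pass to the scalar order parameter and run a truncation (clamping) argument against the variational characterisation of \(\mathbf{Q}^*\). Write the (uniaxial) minimiser as \(\mathbf{Q}^*=s(\mathbf{x})\,\mathbf{N}(\mathbf{x})\) with \(\mathbf{N}=\boldsymbol{n}\otimes\boldsymbol{n}-\tfrac13\mathbf{I}\), so that \(|\mathbf{N}|^2=\tr\mathbf{N}^2=\tfrac23\) and \(\mathbf{N}_{ij}\partial_k\mathbf{N}_{ij}=\tfrac12\partial_k|\mathbf{N}|^2=0\). Two pointwise identities follow. First, \(|\mathbf{Q}^*|^2=\tr(\mathbf{Q}^*)^2=\tfrac23 s^2\), so the claimed bound is equivalent to the scalar estimate \(|s(\mathbf{x})|\le M\) on \(B(0,1)\), where \(M:=\max\{|s_+|,|s_-|\}\). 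Second, since \(f_B\) depends on \(\mathbf{Q}\) only through \(\tr\mathbf{Q}^2=\tfrac23 s^2\) and \(\tr\mathbf{Q}^3=\tfrac29 s^3\), its restriction to uniaxial tensors is exactly \(f_B(\mathbf{Q}^*)=g(s)\) with \(g\) as in \eqref{gdefn}; moreover \(|\nabla\mathbf{Q}^*|^2=\tfrac23|\nabla s|^2+s^2|\nabla\mathbf{N}|^2\), the cross term vanishing by the orthogonality above.

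For the estimate \(|s|\le M\) I would argue by contradiction. Suppose \(\Omega:=\{\mathbf{x}\in B(0,1):|s(\mathbf{x})|>M\}\) has positive measure, and set \(s^{\mathrm{tr}}:=\min\{\max\{s,-M\},M\}\), keeping the same director field \(\boldsymbol{n}\), so that \(\mathbf{Q}^{\mathrm{tr}}:=s^{\mathrm{tr}}\mathbf{N}\) is again uniaxial. Because the Dirichlet datum in \eqref{DirichletBC} has scalar value \(s_+\in[-M,M]\), we have \(s^{\mathrm{tr}}=s\) on \(\partial B(0,1)\), so \(\mathbf{Q}^{\mathrm{tr}}\) is admissible in \(\mathcal{A}_{\mathbf{Q}}\). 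Comparing the energies termwise using the identities above: clamping is \(1\)-Lipschitz with vanishing gradient on \(\Omega\), so the first elastic term obeys \(|\nabla s^{\mathrm{tr}}|^2\le|\nabla s|^2\); since \((s^{\mathrm{tr}})^2\le s^2\) pointwise and \(|\nabla\mathbf{N}|^2\ge 0\), the second elastic term does not increase; and it remains to show that the bulk term strictly decreases on \(\Omega\).

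The crux is the monotonicity of \(g\) outside \([-M,M]\). In the working regime the only stationary points of \(g\) are \(0,s_+,s_-\), all of which lie in \([-M,M]\), while \(g(s)\to+\infty\) as \(|s|\to\infty\) because its leading coefficient \(\tfrac{10e+2f}{243}\) is positive (using \(e\ge0\), \(f>0\)). Hence \(g'\) has no zero in \((M,\infty)\cup(-\infty,-M)\), so \(g\) is strictly increasing on \([M,\infty)\) and strictly decreasing on \((-\infty,-M]\). Consequently \(g(\pm M)<g(s(\mathbf{x}))\) at every point of \(\Omega\), and integrating this strict pointwise inequality over the positive-measure set \(\Omega\) gives a strict decrease of \(\int_{B(0,1)}g(s)\,dV\). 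Together with the two non-increasing elastic contributions this yields \(\mathcal{F}_{six}[\mathbf{Q}^{\mathrm{tr}}]<\mathcal{F}_{six}[\mathbf{Q}^*]\), contradicting global minimality. Therefore \(|s|\le M\) a.e.; by analyticity the bound holds everywhere, and \(|\mathbf{Q}^*|^2=\tfrac23 s^2\le\tfrac23\max\{s_+^2,s_-^2\}\) as claimed.

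I expect the monotonicity of \(g\) beyond its outermost stationary points to be the only real content, since it is exactly where the structure of the sixth-order potential enters and must be read off from the root count established via the discriminant \(\Delta\) in Section~\ref{sec:sixthfb}; the remaining steps (cross-term cancellation, admissibility of the clamp, termwise comparison) are routine. As an alternative route for the radial hedgehog solution, which solves \eqref{6thELQ}, one can apply the pointwise maximum principle to \(w:=|\mathbf{Q}^*|^2\): contracting the Euler–Lagrange equation with \(Q^*_{ij}\) (the Lagrange-multiplier terms dropping out by tracelessness) gives \(\tfrac{\varepsilon^2}{2}\Delta w=\varepsilon^2|\nabla\mathbf{Q}^*|^2+Q^*_{ij}\,\partial f_B/\partial Q_{ij}\), and on uniaxial tensors \(Q^*_{ij}\,\partial f_B/\partial Q_{ij}=s\,g'(s)>0\) whenever \(|s|>M\), which is incompatible with \(\Delta w\le 0\) at an interior maximum; the maximum is thus attained on \(\partial B(0,1)\), where \(w=\tfrac23 s_+^2\le\tfrac23 M^2\).
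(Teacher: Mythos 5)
Your argument is correct, and your primary route is genuinely different from the paper's. The paper proves this bound (Appendix A, proof of Proposition 2) by a maximum principle: it contracts the Euler–Lagrange equations \eqref{6thELQ} with \(Q^*_{ij}\), evaluates at an interior maximum of \(|\textbf{Q}^*|^2\) where \(\Delta|\textbf{Q}^*|^2\le 0\), and derives a contradiction from \(h(\textbf{Q}^*)=\sqrt{3/2}\,|\textbf{Q}^*|\,g'\bigl(\pm\sqrt{3/2}\,|\textbf{Q}^*|\bigr)>0\) beyond the outermost critical points of \(g\) — which is precisely the ``alternative route'' you sketch in your last paragraph, including the identity \(Q^*_{ij}\,\partial f_B/\partial Q_{ij}=s\,g'(s)\). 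Your main argument instead compares energies against the clamped competitor and uses only minimality: the elastic term does not increase because clamping is a \(1\)-Lipschitz retraction (equivalently, the nearest-point projection \(\textbf{Q}\mapsto\min\{1,c/|\textbf{Q}|\}\textbf{Q}\) onto the ball of radius \(c=\sqrt{2/3}\,M\), which sidesteps any regularity worries about the director decomposition near \(s=0\)), and the bulk term strictly decreases on \(\{|s|>M\}\) by the same monotonicity of \(g\) outside \([s_-,s_+]\). The trade-off is one of scope: your truncation argument applies to any global (indeed local) minimiser in the constrained uniaxial class without assuming it solves the unconstrained Euler–Lagrange system \eqref{6thELQ} or is regular — an assumption the paper's proof does make, and which is automatic only for the RH solution \eqref{Qrh} — whereas the paper's maximum principle applies to arbitrary classical critical points, not just minimisers. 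Both hinge on the same single piece of structure, namely that \(g'\neq 0\) outside \([-M,M]\) together with \(g\to+\infty\) as \(|s|\to\infty\), which you correctly trace back to the root count from the discriminant analysis of \(g'(s)/s\) and to \(e\ge 0\), \(f>0\).
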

	
	In the next proposition, we work with fixed $d=1, e=0, f=1$ and with temperatures for which the potential \eqref{eq:f2} has a uniaxial global minimiser with positive order parameter $s_+$, consistent with the imposed Dirichlet condition in \eqref{sBCs}. In this case, we can prove that $s^*$ is bounded, positive, monotonic and unique, by direct analogy with the results for \eqref{eq:f1}. The differences arise for low temperatures, for which \eqref{eq:f2} has a biaxial global minimiser and no stable uniaxial critical points, and we use heuristic arguments to show that $s^*$ is negative and non-monotonic deep in the nematic phase. 
    Recall that \(s_+\) is the largest positive minimiser of the function \(g\) in \eqref{gdefn}.
	Since \(g(s) \to +\infty\) as \(s \to +\infty\),
 then \(g'(s) > 0\) for \(s > s_+\). Therefore, \(s_+\) increases as $|t|$ increases  for $t<0$ and bounds on $s_+$ can be translated to bounds for $t$.
	
	\begin{prop} \label{prop:2}
		Let \(s^*\) be the global minimiser of \(I\) in {\normalfont (\ref{LdGs})} in the moderately low temperature regime for which \eqref{eq:f2} has a global uniaxial minimiser, characterised by \(t < 0\), 
		\begin{equation} \label{eq:ranget}
			s_+^2 - \frac{15\sqrt{6}}{2d} < 0,
		\end{equation}
  and
  \begin{equation} \label{eq:m2}
			\frac{4(f + 5e)}{81}s_+^3 + \frac{4d}{27}s_+^2 + \frac{8}{9}s_+ - \frac{2\sqrt{6}}{3} < 0.
		\end{equation}
  Then, $s^*$ is unique;
 vanishes at the origin; satisfies the bounds \(0 \leq s^* \leq s_+\); and is positive and monotonic for $r>0$.
	\end{prop}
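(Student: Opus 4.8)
My plan is to establish the assertions in the order: the bound $s^*\le s_+$, non-negativity, strict positivity on $(0,1)$, monotonicity, and finally uniqueness, working throughout with the radial profile equation \eqref{sODE}, the boundary data \eqref{sBCs}, and the energy \eqref{LdGs}, and following the template of \parencite{Majumdar2012,Lamy2013} for the quartic potential. That $s^*$ vanishes at the origin and that $(s^*)'(0)=0$ are already supplied by the existence proposition above, so the content lies in the bounds, the sign, and monotonicity/uniqueness; I expect the two hypotheses \eqref{eq:ranget} and \eqref{eq:m2} to supply exactly the sign and convexity inputs that hold automatically for the quartic force but fail for the sextic one. Writing $h$ for the right-hand side of \eqref{sODE}, note that $h$ is a positive multiple of $g'$ with $g$ as in \eqref{gdefn}; hence $h(s_+)=0$, and since $s_+$ is the largest positive minimiser of $g$ one has $h(s)>0$ for $s>s_+$ and $h(s)<0$ on $(0,s_+)$ (and, since $s_+\ge|s_-|$ throughout this regime, also $h(s)<0$ for $s<-s_+$).

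First I would prove $|s^*|\le s_+$ by the maximum principle. If $s^*$ had an interior maximum at $r_0\in(0,1)$ with $s^*(r_0)>s_+$, then $(s^*)'(r_0)=0$ and $(s^*)''(r_0)\le0$ make the left-hand side $\varepsilon^2\big((s^*)''-6r_0^{-2}s^*\big)$ of \eqref{sODE} strictly negative, while $h(s^*(r_0))>0$: a contradiction. The symmetric argument at an interior minimum below $-s_+$ uses $h(s)<0$ for $s<-s_+$; together with $s^*(0)=0$, $s^*(1)=s_+$ this yields $|s^*|\le s_+$ (equivalently one may invoke the bound $|\textbf{Q}^*|^2\le\tfrac23\max\{s_+^2,s_-^2\}$ from the maximum-principle proposition above). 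Applied to the strict inequality, the same argument shows $s^*<s_+$ on $(0,1)$, since the constant $s_+$ is not a solution of \eqref{sODE}.

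The genuinely new step is non-negativity, and this is where \eqref{eq:ranget} is decisive. I would compare $s^*$ with $|s^*|$: the elastic part of \eqref{LdGs} is unchanged since $\big((|s^*|)'\big)^2=\big((s^*)'\big)^2$ a.e. and $|s^*|^2=(s^*)^2$, so it is enough that the bulk density not increase, i.e. that $g(|s^*|)\le g(s^*)$ where $s^*<0$. The even part of $g$ cancels, so this is the requirement that the odd part $-\tfrac{2\sqrt6}9\sigma^3+\tfrac{4d}{135}\sigma^5$ be $\le0$ for $0\le\sigma=|s^*|\le s_+$, which is precisely $\sigma^2\le\tfrac{15\sqrt6}{2d}$; by $|s^*|\le s_+$ and hypothesis \eqref{eq:ranget} this holds. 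Hence $I[|s^*|]\le I[s^*]$, so $|s^*|$ is also a global minimiser and therefore an analytic solution of \eqref{sODE}. Since $s^*$ and $|s^*|$ are both analytic and agree on a neighbourhood of $r=1$ (where $s^*>0$), they coincide on all of $[0,1]$, giving $s^*=|s^*|\ge0$. Strict positivity on $(0,1)$ then follows from the strong maximum principle: writing $h(s^*)=c(r)\,s^*$ with $c(r)=h(s^*)/s^*$ bounded, $s^*$ solves a linear second-order ODE, so an interior zero would be a minimum with vanishing first derivative and would force $s^*\equiv0$, contradicting $s^*(1)=s_+>0$.

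The hard part will be monotonicity and uniqueness, and this is where I expect \eqref{eq:m2} to enter. For monotonicity I would differentiate \eqref{sODE}: with $\psi=(s^*)'$ one gets $\varepsilon^2\big(\psi''+\tfrac2r\psi'-\tfrac8{r^2}\psi\big)-h'(s^*)\psi=-12\varepsilon^2 r^{-3}s^*<0$, so $\psi$ is a supersolution of $\mathcal L w:=\varepsilon^2\big(w''+\tfrac2r w'-\tfrac8{r^2}w\big)-h'(s^*)w$, with boundary signs $\psi(0)=(s^*)'(0)=0$ and $\psi(1)\ge0$ (from $s^*<s_+$ on $(0,1)$ with equality at $r=1$). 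The maximum principle for $\mathcal L$ then gives $\psi\ge0$, i.e. monotonicity; its validity amounts to $s^*$ being a positive supersolution of $\mathcal L$, which a short computation reduces to the sign condition $s^*\,(h(s^*)/s^*)'+2\varepsilon^2 r^{-2}\ge0$, and I expect \eqref{eq:m2} to be exactly this condition, the singular $-6\varepsilon^2 r^{-2}$ term and the quadratic vanishing $s^*\sim r^2$ at the origin absorbing the small-$s$ region where $(h(s)/s)'|_{0}=-\sqrt6<0$. For uniqueness I would use the lattice property of minimisers: for two minimisers $s_1,s_2$, both $\min(s_1,s_2)$ and $\max(s_1,s_2)$ are minimisers, hence analytic solutions, which forces $s_1,s_2$ to be ordered (an analytic $\min$ admits no corner, so the graphs cannot cross transversally); then $w=\max-\min\ge0$ solves the linearised equation with $w(0)=w(1)=0$, and the same sign information from \eqref{eq:m2} shows the linearisation is non-negative with trivial Dirichlet kernel, so $w\equiv0$. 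The main obstacle is exactly this positivity of the linearised operator: because the sextic force is non-monotone near $s=0$, neither a Brezis–Oswald comparison nor a naive maximum principle applies directly, and one must exploit the singular elastic term together with the $r^2$-vanishing of $s^*$ at the origin to control that region.
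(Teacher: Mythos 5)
Your treatment of the pointwise bound, non-negativity and strict positivity is essentially sound and tracks the paper: the bound $|s^*|\le s_+$ is the maximum principle of Proposition 2; your reflection comparison is the paper's non-negativity argument (the odd part $-\tfrac{2\sqrt6}{9}s^3+\tfrac{4d}{135}s^5$ of $g$ is exactly what \eqref{eq:ranget} controls, and the paper obtains a \emph{strict} energy decrease on $\{s^*<0\}$, which is slightly cleaner than routing through ``$|s^*|$ is also a minimiser''); and your Cauchy-uniqueness argument at an interior zero is a legitimate variant of the paper's all-derivatives-vanish/analyticity argument.

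The genuine gap is in uniqueness and monotonicity, which you leave resting on a positivity property of the linearised operator $\mathcal{L}w=\varepsilon^2\bigl(w''+\tfrac2r w'-\tfrac{8}{r^2}w\bigr)-h'(s^*)w$ that you do not establish and that does not follow from the stated hypotheses. Since $h=\tfrac32 g'$ and $h'(0)=t<0$, the zeroth-order coefficient has the wrong sign wherever $s^*$ is small unless the singular term dominates, and the region where $s^*$ is small is not confined to $r=O(\varepsilon^{0})\cdot 0$ uniformly in $t$ and $\varepsilon$; neither \eqref{eq:ranget} nor \eqref{eq:m2} supplies the inequality you would need, so the maximum principle for $\mathcal L$ (and likewise the ``trivial Dirichlet kernel'' step in your uniqueness sketch) does not close. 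The paper avoids the linearisation entirely. For uniqueness it uses a Pohozaev identity: two minimisers, ordered as $s_1<s_2$ on $(0,1)$ by the cut-and-paste/analyticity argument you also sketch, satisfy
\begin{equation*}
6\int_0^1 r^2\bigl(g(s_1)-g(s_2)\bigr)\,dr=\bigl(s_2'(1)\bigr)^2-\bigl(s_1'(1)\bigr)^2 ;
\end{equation*}
the right-hand side is non-positive because $s_1<s_2$ with $s_1(1)=s_2(1)=s_+$, while the left-hand side is positive because \eqref{eq:m2} is precisely the condition that $g'<0$ on $(0,s_+)$, i.e.\ $g$ is decreasing there. So \eqref{eq:m2} enters to sign a bulk integral, not to make a linearised operator coercive. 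Monotonicity is likewise not obtained from the differentiated ODE but by the energy-comparison argument of \parencite{Lamy2013}, which again uses only that $g$ is decreasing on $[0,s_+]$. Without the Pohozaev identity (or an equivalent device) your uniqueness step does not go through.
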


 The proof of Proposition~\ref{prop:2} follows from analogous arguments for the fourth-order potential in \parencite{Lamy2013}, precisely because the sixth-order potential \eqref{eq:f2} has a uniaxial global minimiser in the temperature ranges specified by \eqref{eq:ranget} and \eqref{eq:m2}. Some details are given in the Appendix, and the same arguments do not work when \eqref{eq:f2} admits a biaxial minimiser for lower temperatures.
	
	The next result demonstrates that the RH solution is the only LdG critical point for droplets of sufficiently small radius with the sixth-order potential, and is hence globally stable in this regime. This follows from the convexity of the LdG free energy with polynomial bulk potentials for small domains.
	\begin{prop}
		For \(\varepsilon\) sufficiently large, the radial hedgehog configuration {\normalfont \(\textbf{Q}^*\)} in {\normalfont (\ref{Qrh})} is the unique critical point of the Landau–de Gennes free energy {\normalfont (\ref{6thNDLdG})}.
	\end{prop}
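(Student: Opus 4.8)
The plan is to show that, for $\varepsilon$ large enough, the energy $\mathcal{F}_{six}$ admits \emph{at most} one critical point in $\mathcal{A}_{\mathbf{Q}}$; since the radial hedgehog solution $\mathbf{Q}^*$ is already known to be a critical point of the full functional (by part (b) of the existence proposition), it must then be the unique one. The underlying mechanism is the one flagged in the statement: when $\varepsilon$ is large the elastic term $\tfrac{\varepsilon^2}{2}\int|\nabla\mathbf{Q}|^2$ dominates the non-convex bulk contribution, so the functional becomes strictly convex on the relevant bounded set of admissible tensors, and a strictly convex functional has at most one critical point. I would realise this concretely by showing that the difference of two critical points must vanish.

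First I would establish an a priori $L^\infty$ bound, \emph{uniform in} $\varepsilon$, valid for \emph{every} critical point $\mathbf{Q}\in\mathcal{A}_{\mathbf{Q}}$ (not merely for minimisers): there is a constant $M$, depending only on $t,d,e,f$ and $s_+$, with $\|\mathbf{Q}\|_{L^\infty(B(0,1))}\le M$. This follows from a maximum principle applied to $\rho:=\tr\mathbf{Q}^2$. Using the Euler–Lagrange equation (\ref{6thELQ}) one computes $\varepsilon^2\Delta\rho = 2\varepsilon^2|\nabla\mathbf{Q}|^2 + 2\,\mathbf{Q}\!:\!\mathbf{h}(\mathbf{Q}) \ge 2\,\mathbf{Q}\!:\!\mathbf{h}(\mathbf{Q})$, where $\mathbf{h}(\mathbf{Q})$ denotes the right-hand side of (\ref{6thELQ}); the Lagrange-multiplier (trace) terms drop out upon contraction with the traceless $\mathbf{Q}$, and by the Euler relation for homogeneous polynomials the dominant contribution for large $|\mathbf{Q}|$ is six times the degree-six part of $f_B$, which is positive because $e\ge 0$, $f>0$ (equivalently $E\ge0$, $F>0$). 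Hence $\mathbf{Q}\!:\!\mathbf{h}(\mathbf{Q})>0$ once $|\mathbf{Q}|$ exceeds some $\varepsilon$-independent threshold, so $\rho$ cannot attain an interior maximum above that threshold; comparing with the boundary value $|\mathbf{Q}_{s_+}|$ then yields $M$. I expect this step to be the main obstacle, precisely because the sixth-order potential is only \emph{locally} (not globally) Lipschitz: the uniformity of $M$ in $\varepsilon$, together with its validity for all critical points through the maximum principle rather than through energy minimality, is what must be argued with care.

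With the bound in hand, $\mathbf{h}$ is Lipschitz on $\{|\mathbf{Q}|\le M\}$ with a constant $L_M$ depending only on $M$ and the coefficients. I would then take two critical points $\mathbf{Q}_1,\mathbf{Q}_2\in\mathcal{A}_{\mathbf{Q}}$ and set $\mathbf{W}:=\mathbf{Q}_1-\mathbf{Q}_2$, which is symmetric, traceless, and vanishes on $\partial B(0,1)$ since both tensors share the Dirichlet datum $\mathbf{Q}_{s_+}$. Subtracting the two Euler–Lagrange equations, testing against $\mathbf{W}$, and integrating by parts (the boundary term vanishes) gives the identity $\varepsilon^2\int_{B(0,1)}|\nabla\mathbf{W}|^2 = -\int_{B(0,1)}\bigl(\mathbf{h}(\mathbf{Q}_1)-\mathbf{h}(\mathbf{Q}_2)\bigr)\!:\!\mathbf{W}$, whose right-hand side is bounded in modulus by $L_M\int_{B(0,1)}|\mathbf{W}|^2$ via the Lipschitz estimate. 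Invoking the Poincaré inequality $\int_{B(0,1)}|\nabla\mathbf{W}|^2\ge\lambda_1\int_{B(0,1)}|\mathbf{W}|^2$, where $\lambda_1>0$ is the first Dirichlet eigenvalue of $-\Delta$ on the unit ball, produces $(\varepsilon^2\lambda_1-L_M)\int_{B(0,1)}|\mathbf{W}|^2\le 0$.

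Finally, choosing $\varepsilon$ so large that $\varepsilon^2\lambda_1>L_M$ forces $\int_{B(0,1)}|\mathbf{W}|^2=0$, hence $\mathbf{Q}_1\equiv\mathbf{Q}_2$; this is uniqueness of critical points, and so $\mathbf{Q}^*$ is the unique one, as claimed. The very same computation, read with $\mathbf{Q}_1=\mathbf{Q}_2=\mathbf{Q}$ and $\mathbf{W}$ an arbitrary admissible variation, shows $\delta^2\mathcal{F}_{six}[\mathbf{Q}][\mathbf{W}]\ge(\varepsilon^2\lambda_1-L_M)\int_{B(0,1)}|\mathbf{W}|^2>0$, which makes precise the strict convexity of $\mathcal{F}_{six}$ on $\{\|\mathbf{Q}\|_{L^\infty}\le M\}$ for large $\varepsilon$ and, in particular, the global stability of $\mathbf{Q}^*$ in this regime.
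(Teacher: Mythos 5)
Your proposal is correct and follows essentially the same strategy as the paper: first an $\varepsilon$-independent $L^\infty$ bound on all critical points, obtained by contracting the Euler--Lagrange equations with $\mathbf{Q}$ and applying the maximum principle to $|\mathbf{Q}|^2$ (using positivity of the highest-order bulk terms for large $|\mathbf{Q}|$ together with the bound $|\tr\mathbf{Q}^3|\le |\mathbf{Q}|^3/\sqrt{6}$), and then a Poincar\'e-plus-local-Lipschitz estimate showing the elastic term dominates the bulk contribution on that bounded set once $\varepsilon$ is large. The only cosmetic difference is that you run the second step directly on the difference of two solutions of the PDE, whereas the paper phrases it as strict (midpoint) convexity of $\mathcal{F}$ on the a priori bounded set and then observes that a strictly convex functional admits at most one critical point --- the two formulations are equivalent.
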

	\begin{proof}
		First, we show that a critical point, \(\textbf{Q}^*\), of \(\mathcal{F}\) in the admissible space (\ref{AQ}) satisfies the upper bound
		\begin{equation}
			|\textbf{Q}^*| \leq \max\big\{M(t, d, e, f), |\textbf{Q}_{s_+}|\big\} =: M'
		\end{equation}
		on \(\overline{B(0,1)}\), where \(M\) is a constant depending only on \(t, d, e,\) and \(f\).
		
		We assume that the function \(|\textbf{Q}^*|:B(0,1) \to \mathbb{R}\) attains its maximum at the interior point \(\boldsymbol{r}^* \in B(0,1)\). Recall that \(\textbf{Q}^*\) is a solution of the Euler–Lagrange equations (\ref{6thELQ}). We multiply both sides of (\ref{6thELQ}) by \(Q_{ij}\) to find
		\begin{equation} \nonumber
			\frac{\varepsilon^2}{2}\Delta|\textbf{Q}^*|^2 = t|\textbf{Q}^*|^2 - 3\sqrt{6}\tr\textbf{Q}^{*3} + 2|\textbf{Q}^*|^4 + d|\textbf{Q}^*|^2\tr\textbf{Q}^{*3} + e|\textbf{Q}^*|^6 + \frac{(f - e)}{6}\big(\tr\textbf{Q}^{*3}\big)^2
		\end{equation}
		at \(\boldsymbol{r}^*\), since \(|\nabla\textbf{Q}^*| + Q_{ij}\Delta Q^*_{ij} = \frac{1}{2}\Delta|\textbf{Q}^*|^2\), and \(|\nabla\textbf{Q}^*| = 0\) at \(\boldsymbol{r}^*\). Note that \(\Delta|\textbf{Q}^*|^2 \leq 0\) at \(\boldsymbol{r}^* \in B(0,1)\) by assumption.
	Define
		$
			h(\textbf{Q}) = t|\textbf{Q}|^2 - 3\sqrt{6}\tr\textbf{Q}^{3} + 2|\textbf{Q}|^4 + d|\textbf{Q}|^2\tr\textbf{Q}^{3} + e|\textbf{Q}|^6 + \frac{(f - e)}{6}\big(\tr\textbf{Q}^{3}\big)^2.
		$
		Recalling that
		$		-\frac{1}{\sqrt{6}}|\textbf{Q}|^3 \leq \tr\textbf{Q}^{3} \leq \frac{1}{\sqrt{6}}|\textbf{Q}|^3, 
		$
		by \parencite[Lemma 1]{Majumdar2010}, consider two cases. First, if \(f - e \geq 0\), then
		\begin{equation} \nonumber
			h(\textbf{Q}) \geq t|\textbf{Q}|^2 - 3|\textbf{Q}|^3 + 2|\textbf{Q}|^4 - \frac{d}{\sqrt{6}}|\textbf{Q}|^5 + e|\textbf{Q}|^6 =: H_1(|\textbf{Q}|).
		\end{equation}
		The function \(H_1(|\textbf{Q}|)\) has \(n \leq 6\) real roots, \(\{|\textbf{Q}_{1i}|\}_{i = 1}^{n}\), with \(|\textbf{Q}_{11}| \leq ... \leq |\textbf{Q}_{1n}|\), and \(H_1\) is monotonically increasing for \(|\textbf{Q}| > |\textbf{Q}_{1n}|\) since \(e > 0\). If \(f - e < 0\), then
		\begin{equation} \nonumber 
			h(\textbf{Q}) \geq t|\textbf{Q}|^2 - 3|\textbf{Q}|^3 + 2|\textbf{Q}|^4 - \frac{d}{\sqrt{6}}|\textbf{Q}|^5 + e|\textbf{Q}|^6 + \frac{(f - e)}{36}|\textbf{Q}|^6 =: H_2(|\textbf{Q}|).
		\end{equation}
		Similarly, \(H_2(|\textbf{Q}|)\) has \(n \leq 6\) real roots, \(\{|\textbf{Q}_{2i}|\}_{i = 1}^{n}\), with \(|\textbf{Q}_{21}| \leq ... \leq |\textbf{Q}_{2n}|\), and \(H_2\) is monotonically increasing for \(|\textbf{Q}| > |\textbf{Q}_{2n}|\) since \(e > 0, f > 0\).
		
		If
		\begin{equation} \nonumber
			|\textbf{Q}^*(\boldsymbol{r}^*)| > \begin{cases}
				|\textbf{Q}_{1n}|, \quad &\text{if} \,\, f - e \geq 0, \\
				|\textbf{Q}_{2n}|, \quad &\text{if} \,\, f - e < 0,
			\end{cases}
		\end{equation} 
		then \(\Delta|\textbf{Q}^*|^2 > 0\) at \(\boldsymbol{r}^*\), which is a contradiction.
		Combining the above, set 
		\begin{equation}
			M(t, d, e, f) := \begin{cases}
				|\textbf{Q}_{1n}|, \quad &\text{if} \,\, f - e \geq 0, \\
				|\textbf{Q}_{2n}|, \quad &\text{if} \,\, f - e < 0,
			\end{cases}
		\end{equation}
		and obtain $
			|\textbf{Q}^*| \leq \max\big\{M(t, d, e, f), |\textbf{Q}_{s_+}|\big\}
		$	on \(\overline{B(0,1)}\).
		
		Next, we demonstrate the convexity of the LdG free energy (\ref{6thNDLdG}) for sufficiently large \(\varepsilon\), closely following arguments in \parencite{Lamy2013, HanHarris2021}. Let
			$ X = \big\{\textbf{Q}\in W^{1,2}(B(0,1), \bar{S}): |\textbf{Q}| \leq M'\big\}$. Then for
		\(\textbf{Q}_u, \textbf{Q}_v \in X\),
		\begin{align} \nonumber
			\mathcal{F}\bigg[\frac{1}{2}(\textbf{Q}_u + \textbf{Q}_v)\bigg] &= \int_{B(0,1)}\Bigg(\frac{\varepsilon^2}{8}|\nabla\textbf{Q}_u + \nabla\textbf{Q}_v|^2 + f_B\bigg(\frac{1}{2}(\textbf{Q}_u + \textbf{Q}_v)\bigg)\Bigg)\,dV \\
			&= \begin{aligned}[t]
				&\frac{1}{2}\mathcal{F}[\textbf{Q}_u] + \frac{1}{2}\mathcal{F}[\textbf{Q}_v] - \frac{\varepsilon^2}{8}||\nabla(\textbf{Q}_u - \textbf{Q}_v)||_{L^2}^2 \\
				&+ \int_{B(0,1)}\Bigg(f_B\bigg(\frac{1}{2}(\textbf{Q}_u + \textbf{Q}_v)\bigg) - \frac{1}{2}f_B(\textbf{Q}_u) - \frac{1}{2}f_B(\textbf{Q}_v)\Bigg)\,dV 
			\end{aligned}
		\end{align}
		where we have used the fact that
			$|\nabla\textbf{Q}_u + \nabla\textbf{Q}_v|^2 = 2|\nabla\textbf{Q}_u|^2 + 2|\nabla\textbf{Q}_v|^2 - |\nabla\textbf{Q}_u - \nabla\textbf{Q}_v|^2. $
		By the Poincar\'e inequality, we have that
		\begin{equation*}
			-\frac{1}{8}||\nabla(\textbf{Q}_u - \textbf{Q}_v)||_{L^2}^2 \leq - c_1||\textbf{Q}_u - \textbf{Q}_v||_{L^2}^2,
		\end{equation*}
		for some positive constant \(c_1\). Therefore,
		\begin{multline*}
			\mathcal{F}\bigg[\frac{1}{2}(\textbf{Q}_u + \textbf{Q}_v)\bigg] \leq  \frac{1}{2}\mathcal{F}[\textbf{Q}_u] + \frac{1}{2}\mathcal{F}[\textbf{Q}_v] - c_1\varepsilon^2||(\textbf{Q}_u - \textbf{Q}_v)||_{L^2}^2 \\
			+ \int_{B(0,1)}f_B\bigg(\frac{1}{2}(\textbf{Q}_u + \textbf{Q}_v)\bigg) - \frac{1}{2}f_B(\textbf{Q}_u) - \frac{1}{2}f_B(\textbf{Q}_v)\,dV.
		\end{multline*}
		Furthermore, we note that
		\begin{equation*}
			f_B\bigg(\frac{x + y}{2}\bigg) - \frac{1}{2}f_B(x) - \frac{1}{2}f_B(y) \leq ||f_B||_{W^{2,\infty}}|x - y|^2,
		\end{equation*}
		for \(x, y\) satisfying \(|x|, |y| \leq M'\). Hence, for some \(c_2 = c_2(M', f_B) > 0\), we may write
		\begin{equation*}
			\mathcal{F}\bigg[\frac{1}{2}(\textbf{Q}_u + \textbf{Q}_v)\bigg] \leq
			\frac{1}{2}\mathcal{F}[\textbf{Q}_u] + \frac{1}{2}\mathcal{F}[\textbf{Q}_v] - \frac{c_1\varepsilon^2}{2}||\textbf{Q}_u - \textbf{Q}_v||_{L^2}^2 - c_2\bigg(\frac{c_1\varepsilon^2}{2c_2} - 1\bigg)||\textbf{Q}_u - \textbf{Q}_v||_{L^2}^2.
		\end{equation*}
		Then, if \(\varepsilon^2 > \frac{2c_2}{c_1}\), we find that
		\begin{equation*}
			\mathcal{F}\bigg[\frac{1}{2}(\textbf{Q}_u + \textbf{Q}_v)\bigg] < \frac{1}{2}\mathcal{F}[\textbf{Q}_u] + \frac{1}{2}\mathcal{F}[\textbf{Q}_v],
		\end{equation*}
		\(\forall \textbf{Q}_u, \textbf{Q}_v \in X, \textbf{Q}_u \neq \textbf{Q}_v\). Thus, \(\mathcal{F}\) is strictly convex on \(X\).
		
		Let us assume for the remainder of the proof that we are working with \(\varepsilon\) large enough to guarantee convexity of \(\mathcal{F}\) in (\ref{6thNDLdG}). To show that a critical point of \(\mathcal{F}\) is unique, let us assume that there exist two distinct solutions \(\textbf{Q}_1\) and \(\textbf{Q}_2\) of (\ref{6thELQ}) in \(X\), as is done in the proof of \parencite[Lemma 8.3]{Lamy2014}. Then, for \(v \in [0,1]\), the derivative of \(\mathcal{F}[v\textbf{Q}_1 + (1 - v)\textbf{Q}_2]\) vanishes at \(v = 0\) and \(v = 1\). However, the strict convexity of \(\mathcal{F}\) implies that \(\mathcal{F}\) can have only one critical point. Therefore, \(\textbf{Q}_1\) and \(\textbf{Q}_2\) cannot both be solutions of the Euler–Lagrange equations (\ref{6thELQ}), so a critical point of \(\mathcal{F}\) must be unique.
		
		Finally, Proposition 1 guarantees the existence of a RH solution for any \(\varepsilon\) and we are also guaranteed the existence of a global LdG energy minimiser of \eqref{6thNDLdG} for all $\varepsilon$, so the RH configuration \(\textbf{Q}^*\) in (\ref{Qrh}) is the unique critical point and consequently, the unique global minimiser of the LdG free energy (\ref{6thNDLdG}) when \(\varepsilon\) is sufficiently large.
	\end{proof}
	
	Next, we demonstrate that the RH solution is not globally energy minimising for the LdG energy \eqref{6thNDLdG}, in the low temperature regime, by constructing a biaxial perturbation with lower energy, following arguments as in Proposition 3.3 in \parencite{Majumdar2012}.
	\begin{prop}
		The RH solution \(\normalfont \textbf{Q}^*\) in \eqref{Qrh} is not the global minimiser of the LdG free energy {\normalfont (\ref{6thNDLdG})} in the admissible space \(\mathcal{A}_{\textbf{Q}}\) when $t<0$ and $|t|$ is sufficiently large. In particular, the biaxial state
		\begin{equation}
			\normalfont \hat{\textbf{Q}}(\boldsymbol{r}) = \begin{cases}
				\textbf{Q}^*(\boldsymbol{r}) + \dfrac{1}{(r^2 + 12)^2}\bigg(1 - \dfrac{r}{\sigma}\bigg)\bigg(\boldsymbol{z}\otimes\boldsymbol{z} - \dfrac{1}{3}\textbf{I}\bigg), &0 \leq r \leq \sigma, \\
				\textbf{Q}^*(\boldsymbol{r}), &\sigma \leq r \leq 1, \label{perturbation}
			\end{cases}
		\end{equation}
		where \(\boldsymbol{z}\) is the unit vector in the \(z\)-direction, has lower LdG free energy than \(\normalfont \textbf{Q}^*\) for \(\sigma = 0.1\).
	\end{prop}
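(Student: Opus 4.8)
The plan is to compute the energy difference $\Delta\mathcal{F}:=\mathcal{F}_{six}[\hat{\mathbf{Q}}]-\mathcal{F}_{six}[\mathbf{Q}^*]$ directly and to show it is negative once $t<0$ and $|t|$ is large. Since $\hat{\mathbf{Q}}=\mathbf{Q}^*$ for $r\geq\sigma$, the difference localises to the ball $B(0,\sigma)$. Writing the perturbation as $\hat{\mathbf{Q}}=\mathbf{Q}^*+\phi(r)\mathbf{P}$, where $\mathbf{P}:=\boldsymbol{z}\otimes\boldsymbol{z}-\tfrac13\mathbf{I}$ is a constant traceless tensor and $\phi(r)=(r^2+12)^{-2}(1-r/\sigma)$, I would split $\Delta\mathcal{F}$ into an elastic and a bulk contribution and estimate each in turn, using spherical averaging (the average $\langle\cdot\rangle$ over the unit sphere, for which $\langle(\hat{\boldsymbol r}\cdot\boldsymbol z)^2\rangle=\tfrac13$) as the main simplifying tool.

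For the elastic part, since $\mathbf{P}$ is constant we have $\hat Q_{ij,k}=Q^*_{ij,k}+\phi'(r)\hat r_k P_{ij}$, whence $|\nabla\hat{\mathbf{Q}}|^2-|\nabla\mathbf{Q}^*|^2=2\phi'\hat r_kP_{ij}Q^*_{ij,k}+|\phi'|^2\tr\mathbf{P}^2$. The angular part of $Q^*_{ij,k}$, proportional to $(\delta_{ik}-\hat r_i\hat r_k)$, is annihilated on contraction with $\hat r_k$, so the cross term reduces pointwise to $2\phi'(s^*)'\big[(\hat{\boldsymbol r}\cdot\boldsymbol z)^2-\tfrac13\big]$, whose spherical average vanishes. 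With $\tr\mathbf{P}^2=\tfrac23$, the elastic difference collapses to the fixed, strictly positive quantity $\tfrac{\varepsilon^2}{3}\int_{B(0,\sigma)}|\phi'(r)|^2\,dV$, which is independent of $t$.

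For the bulk part, I would expand $f_B(\mathbf{Q}^*+\phi\mathbf{P})$ as a polynomial in $\phi$. The crucial observation is that the linear term $\phi\,\tr\!\big(\nabla f_B(\mathbf{Q}^*)\mathbf{P}\big)$ averages to zero: because $f_B$ is an isotropic function, $\nabla f_B(\mathbf{Q}^*)$ is uniaxial and aligned with $\hat{\boldsymbol r}$, so its contraction with $\mathbf{P}$ is again proportional to $(\hat{\boldsymbol r}\cdot\boldsymbol z)^2-\tfrac13$ and vanishes on averaging. This removes the a priori dangerous contribution of order $(s^*)^5$. The $t$-dependent part of the quadratic term comes solely from $\tfrac{t}{2}\tr\mathbf{Q}^2$ and equals $\tfrac{t}{3}\int_{B(0,\sigma)}\phi^2\,dV<0$, while the cubic and higher terms in $\phi$ carry coefficients of order $(s^*)^3$ or lower and, being suppressed by extra powers of the small amplitude $\phi$, are subdominant.

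Assembling these pieces gives, schematically, $\Delta\mathcal{F}=\tfrac{\varepsilon^2}{3}\int|\phi'|^2+\tfrac12\int\phi^2\,\beta(s^*,t)\,dV+(\text{lower order})$, where $\beta(s^*,t)$ is the spherically averaged second variation of $f_B$ at the uniaxial state in the direction $\mathbf{P}$. The hard part is controlling this quadratic coefficient uniformly in $t$: for fixed $\varepsilon$ and large $|t|$ the core shrinks, so $s^*\approx s_+\sim|t|^{1/4}$ on most of $B(0,\sigma)$, and the quartic, quintic and sextic terms in $f_B$ contribute to $\beta$ at the same order $|t|$ as the favourable term $\tfrac{2t}{3}$. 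I would therefore show that $\beta(s_+,t)<0$, of order $-|t|$, for $t$ below a threshold, tying this directly to the bulk biaxial instability of the uniaxial state $(s_+,0)$ established in Section~\ref{sec:sixthfb}; the cubic-and-higher remainder is then dominated using the maximum-principle bound $|\mathbf{Q}^*|^2\leq\tfrac23\max\{s_+^2,s_-^2\}$ proved above. Once the sign of $\beta$ is secured, the $O(|t|)$ negative bulk term outweighs the fixed $O(\varepsilon^2)$ elastic cost, giving $\Delta\mathcal{F}<0$; for the explicit choice $\sigma=0.1$, $d=1$, $e=0$, $f=1$ the finitely many one-dimensional radial integrals can be evaluated directly to certify negativity for $|t|$ large.
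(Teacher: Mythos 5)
Your proposal follows essentially the same route as the paper's proof: localise the energy difference to $B(0,\sigma)$, spherically average so that the elastic cross term and the linear-in-$\phi$ bulk term vanish, and then balance the $O(|t|)$ destabilising contribution of $\tfrac{t}{3}\phi^2$ against the $O(|t|)$ stabilising $s_+^4\phi^2$ terms coming from the sextic part of $f_B$ (using $s_+^4\approx -27t/(2(5e+f))$) and the fixed elastic cost, with the higher powers of $\phi$ subdominant. The paper carries out exactly this expansion and concludes with an explicit threshold $t\lesssim 500\varepsilon^2(77184e+16437f)/(428440e-32975f)$ for the given $\tilde p$ and $\sigma=0.1$, which is the concrete realisation of the sign check you correctly flag as the decisive step.
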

	
	The above results make evident many parallels between the RH solution with the fourth-order bulk potential in the literature and the RH solution with the sixth-order bulk potential, at least for moderately low temperatures specified by \eqref{eq:ranget} and \eqref{eq:m2}. Key differences are that we do not have an explicit expression for \(s_+\) with the sixth-order potential \eqref{eq:f2}, and that there are parameter regimes for which the RH scalar order parameter \(s^*\) might not be unique and monotonic, and could be negative. We explore this further using some heuristic arguments, working in a parameter regime for which $g(s)$ in \eqref{gdefn} has two minimisers - a local positive minimiser $s_+$ and a global negative minimiser, $s_-$ i.e. deep in the nematic phase. In terms of \eqref{eq:f2}, these minima of $g$ do not translate to stable critical points of \eqref{eq:f2}, and \eqref{eq:f2} has a biaxial minimiser in these parameter regimes. Consider a profile, $s_1:[0,1] \to \mathbb{R}$, in the admissible space \eqref{eq:As} for which $s_1 \geq 0$ for $r\in [0,1]$. Then the energy is bounded from below by:
 \[
 I[s_1] \geq g(s_+),
 \]
 where $I$ is given by \eqref{LdGs} and we use the fact that $s_+$ is the global minimiser of $g$ for non-negative $s$.
 Consider a competitor map $s_2:[0,1] \to \mathbb{R} $ in the admissible space \eqref{eq:As}, with $s_2=s_{-}$ for $\varepsilon < r < 1- \varepsilon$ for $\varepsilon$ sufficiently small. Assuming a linear transition layer near $r=0$, where $s_2(0) = 0$ and a linear transition layer near $r=1$ to match the boundary condition, $s_2(1) = s_+$, $I[s_2]$ is bounded from above by:
 \[
 I[s_2] \leq g(s_-) (1 - 2\varepsilon) + C \varepsilon, 
 \] where the positive constant $C$ depends on $s_+, s_-$. Given that $d, e, f$ are fixed by assumption, this implies that the constant $C$ only depends on $t$. For fixed elastic constants, $\varepsilon$ only depends on the droplet radius $R$. Comparing $I[s_1]$ and $I[s_2]$, we deduce that
 $
 I[s_2] < I[s_1]
 $ if $g(s_-) < g(s_+)$ and $\varepsilon$ is sufficiently small or if $R$ is sufficiently large. Since the lower bound for $I[s_1]$ is valid for all RH order parameter profiles with non-negative order parameter, we deduce that provided $\varepsilon$ is sufficiently small, the global minimiser of \eqref{LdGs} cannot be positive for all $r\in [0,1]$, for sufficiently low temperatures.

    These heuristics can be verified numerically, using a finite element method to numerically compute solutions of the ODE (\ref{sODE}) for \(t = -100\), in a large droplet specified by \(\varepsilon = 0.1\). We numerically obtain at least two solutions in Figure \ref{fig:hedgehogprofiles}. Figure \ref{fig:+vehedegehog} follows from the initial condition \(s(r) = 0, r \in [0,1]\), and the initial guess \(s(r) = 0.5s_-, r \in [0,1]\) is used to numerically compute Figure \ref{fig:-vehedgehog}, where \(s_-\) is the negative minimiser of (\ref{gdefn}) at \(t = -100\). The second profile in Figure \ref{fig:-vehedgehog} has lower energy than the non-negative profile in Figure \ref{fig:+vehedegehog}.
	
	\begin{figure}[!ht]
		\begin{minipage}{0.49\textwidth}
			\centering
			\hspace*{-0.2cm}\includegraphics[width=0.6\textwidth,angle=-90]{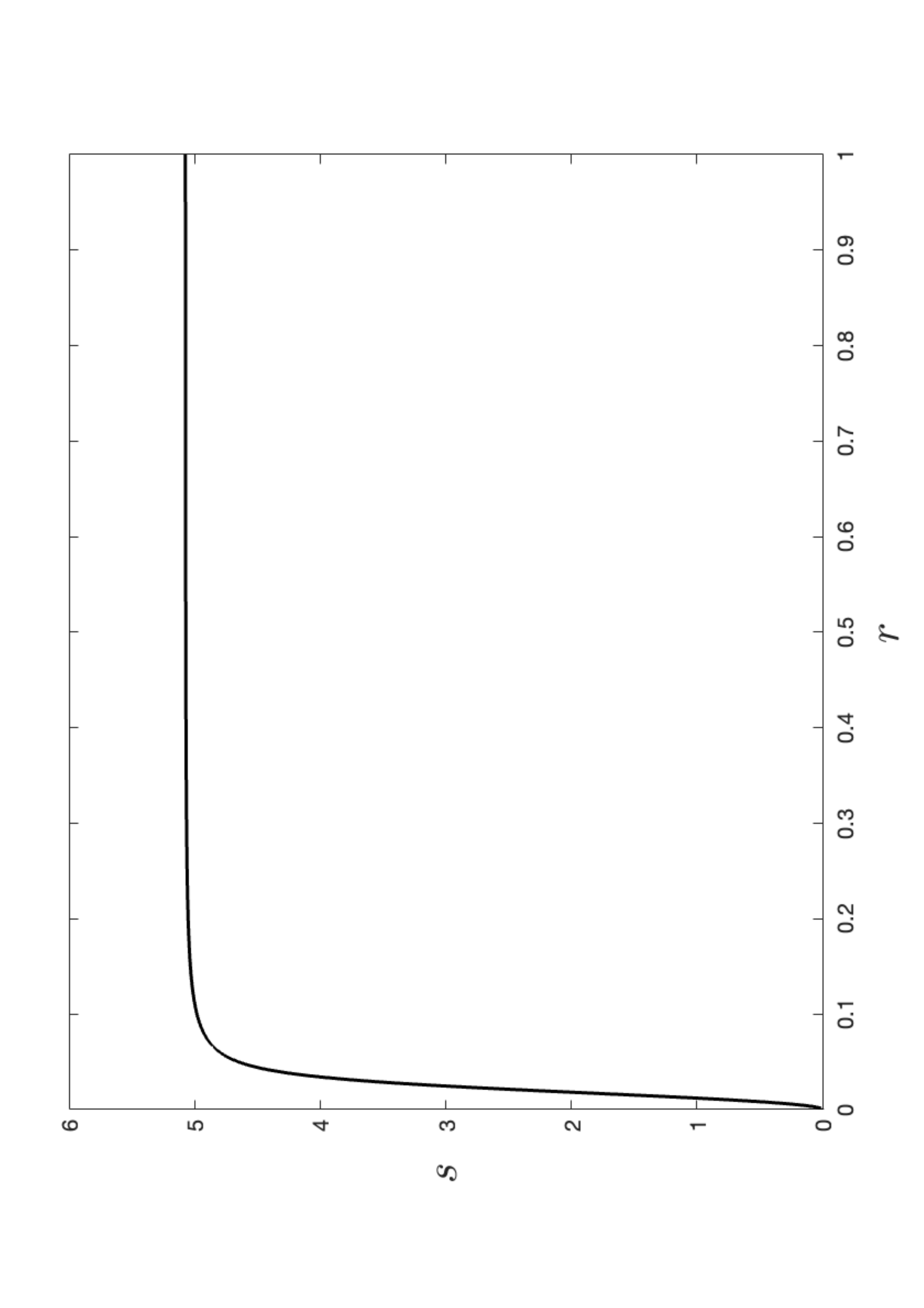}
			\subcaption{}\label{fig:+vehedegehog}
		\end{minipage}\hfill
		\begin{minipage}{0.49\textwidth}
			\centering
			\hspace*{-0.2cm}\includegraphics[width=0.6\textwidth,angle=-90]{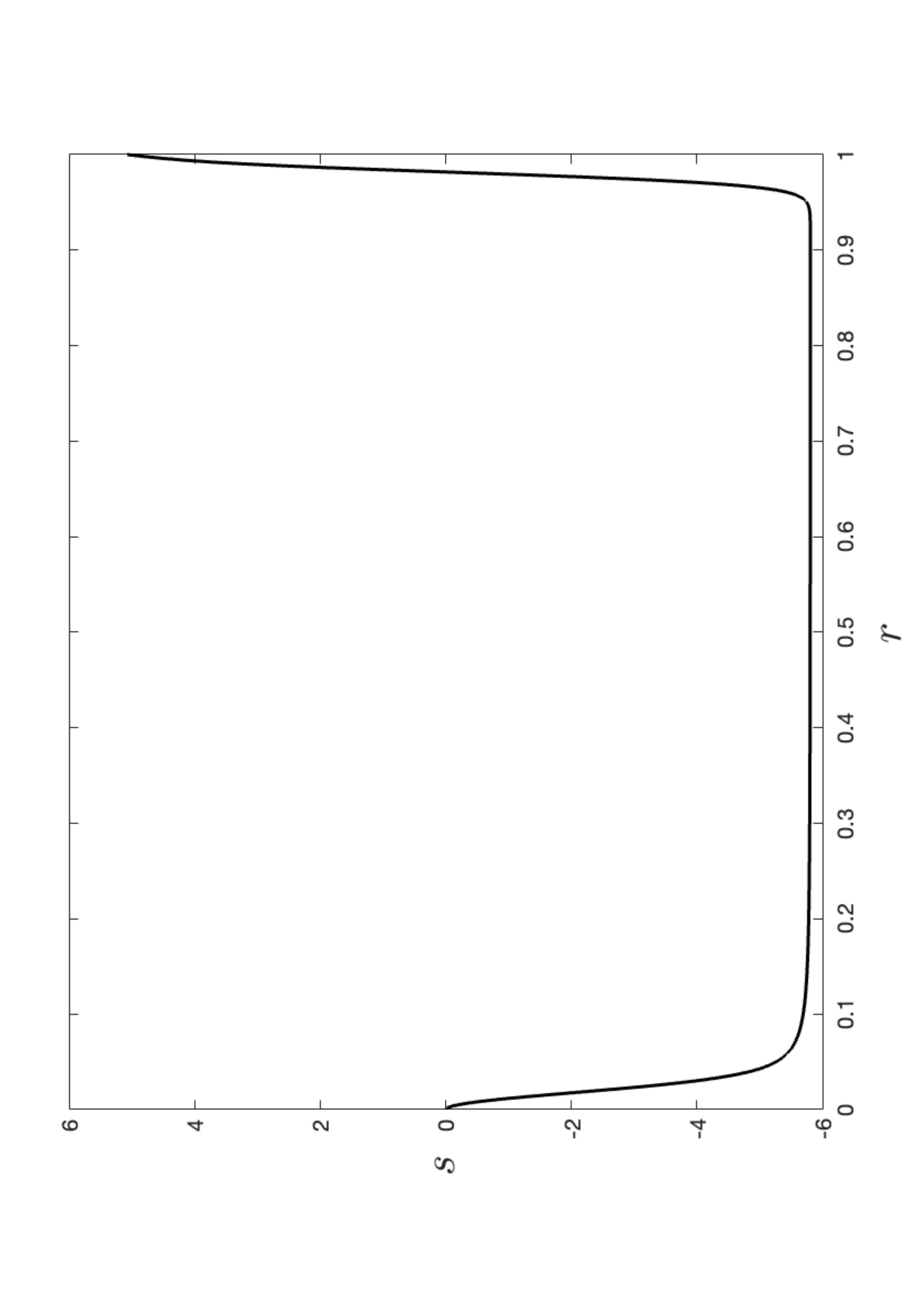}
			\subcaption{}\label{fig:-vehedgehog}
		\end{minipage}\hfill
		\caption{RH scalar order parameter profiles from initial guess (a) \(s(r) = 0, \, r \in [0,1]\), and (b) \(s(r) = 0.5s_-, \, r \in [0,1]\).}\label{fig:hedgehogprofiles}
	\end{figure}

\section{Numerical Results}
\label{sec:numerical}
	\subsection{Problem Formulation}
	The critical points of the LdG free energy with the fourth-order potential have been well-studied in a batch of papers \parencite{AlamaBronsard2016, RossoVirga1996, MkaddemGartland2000, SchopohlSluckin1987, Canevari2015}, and we do not claim to have exhaustive references. With the fourth-order potential in \eqref{eq:f1}, for large droplets and low temperatures, it is well-known that there exist at least two further critical points of the LdG energy in the admissible space \eqref{AQ} - the biaxial torus and the split core solutions, both of which have small biaxial regions near the origin and rotational symmetry, with mirror symmetry across the plane normal to the axis of rotational symmetry. 
The first question of interest is: do the biaxial torus and the split core critical points survive as critical points of the LdG energy with the sixth-order potential \eqref{eq:f2}? If they do survive as critical points, then their existence is naturally dictated by the symmetries of the geometry and the solution profiles, and less so by the precise form of the bulk potential. Following previous work in \parencite{MkaddemGartland2000} in this section, we numerically compute critical points of the LdG energy in \eqref{6thNDLdG} with rotational symmetry about the \(z\)-axis, and mirror symmetry across the \(xy\)-plane. In this case, our domain is reduced to a quarter circle rotated \(2\pi\) radians about the \(z\)-axis. We work in cylindrical polar coordinates \((r,\theta,z)\), where \(r \in [0,1]\) and \(z \in [0,1]\), while \(\theta\) is the angle in the \(xy\)-plane.
	
	Since the \textbf{Q}-tensor order parameter has five degrees of freedom in the most general setting, it is possible to represent the \textbf{Q}-tensor in terms of five basis tensors \parencite{SonnetKillian1995}. However, in this section, we assume that the \textbf{Q}-tensor always has an eigenvector in the direction \(\boldsymbol{e}_{\theta}\), normal to the \(rz\)-plane, so that the \(\textbf{Q}\)-tensor can be expressed in terms of just three basis tensors as
	\begin{equation}
		\textbf{Q}(r,z) = q_1(r,z)\textbf{E}_1 + q_2(r,z)\textbf{E}_2 + q_3(r,z)\textbf{E}_3, \label{Qrz}
	\end{equation}
	where
	\begin{equation} \nonumber
		\textbf{E}_1 = \frac{1}{\sqrt{6}}\begin{bmatrix}
			-1 & 0 & 0 \\
			0 & -1 & 0 \\
			0 & 0 & 2
		\end{bmatrix}, \quad \textbf{E}_2 = \frac{1}{\sqrt{2}}\begin{bmatrix}
			1 & 0 & 0 \\
			0 & -1 & 0 \\
			0 & 0 & 0
		\end{bmatrix}, \quad \textbf{E}_3 = \frac{1}{\sqrt{2}}\begin{bmatrix}
			0 & 0 & 1 \\
			0 & 0 & 0 \\
			1 & 0 & 0
		\end{bmatrix}.
	\end{equation}
	We can then rewrite the LdG energies \eqref{4thNDLdG} and \eqref{6thNDLdG} in terms of \(q_1, q_2, q_3\).
	First, the \textbf{Q}-tensor in (\ref{Qrz}) is transformed to cylindrical polar coordinates via the relations:
	\begin{equation} \nonumber
		\textbf{Q}(r,\theta, z) = \textbf{R}(\theta)\textbf{Q}(r,z)\textbf{R}(\theta)^T,
	\end{equation}
	where \(\textbf{R}(\theta)\) is the rotation matrix
	\begin{equation} \nonumber
		\textbf{R}(\theta)  = \begin{bmatrix}
			\cos\theta & -\sin\theta & 0 \\
			\sin\theta & \cos\theta & 0 \\
			0 & 0 & 1
		\end{bmatrix}.
	\end{equation}
	Then
	\begin{equation}
		\textbf{Q}(r,\theta,z) = \begin{bmatrix}
			-\dfrac{q_1}{\sqrt{6}} + \dfrac{q_2}{\sqrt{2}}\big(\cos^2\theta - \sin^2\theta\big) & \dfrac{2q_2}{\sqrt{2}}\cos\theta\sin\theta & \dfrac{q_3}{\sqrt{2}}\cos\theta \\
			\dfrac{2q_2}{\sqrt{2}}\cos\theta\sin\theta & -\dfrac{q_1}{\sqrt{6}} + \dfrac{q_2}{\sqrt{2}}\big(\sin^2\theta - \cos^2\theta\big) & \dfrac{q_3}{\sqrt{2}}\sin\theta \\
			\dfrac{q_3}{\sqrt{2}}\cos\theta & \dfrac{q_3}{\sqrt{2}}\sin\theta & \dfrac{2q_1}{\sqrt{6}}
		\end{bmatrix}. \label{4thQrtz}
	\end{equation}
	The LdG energy in (\ref{4thNDLdG}) is then given by:
	\begin{multline}
		\mathcal{F}_{four}[\textbf{Q}] = \int_{B(0,1)}\Bigg(\frac{t}{2}\big(q_1^2 + q_2^2 + q_3^2\big) - q_1^3 + 3q_1q_2^2 - \frac{3}{2}q_1q_3^2 - \frac{3\sqrt{3}}{2}q_2q_3^2 \\
		+ \frac{1}{2}\big(q_1^4 + q_2^4 + q_3^4 + 2q_1^2q_2^2 + 2q_1^2q_3^2 + 2q_2^2q_3^2\big) \\
		+ \frac{\varepsilon^2}{2}\bigg(q_{1,r}^2 + q_{2,r}^2 + q_{3,r}^2 + q_{1,z}^2 + q_{2,z}^2 + q_{3,z}^2 + \frac{1}{r^2}\big(4q_2^2 + q_3^2\big)\bigg)\Bigg)\,dV. \label{4thLdGqi}
	\end{multline}
	and the LdG energy with the sixth-order potential (\ref{6thNDLdG}) is given by:
	\begin{equation}
		\begin{aligned}
			\mathcal{F}_{six}[\textbf{Q}] = 	\int_{B(0,1)}\Bigg(&\frac{\varepsilon^2}{2}\bigg(q_{1,r}^2 + q_{2,r}^2 + q_{3,r}^2 + q_{1,z}^2 + q_{2,z}^2 + q_{3,z}^2 + \frac{1}{r^2}\big(4q_2^2 + q_3^2\big)\bigg) \\
			&+ \frac{t}{2}\big(q_1^2 + q_2^2 + q_3^2\big) - q_1^3 + 3q_1q_2^2 - \frac{3}{2}q_1q_3^2 - \frac{3\sqrt{3}}{2}q_2q_3^2 \\
			&+ \frac{1}{2}\big(q_1^4 + q_2^4 + q_3^4 + 2q_1^2q_2^2 + 2q_1^2q_3^2 + 2q_2^2q_3^2\big) \\
			&+ \begin{aligned}[t]
				\frac{d}{5}\bigg(\frac{\sqrt{6}}{6}q_1^5 &- \frac{\sqrt{6}}{3}q_1^3q_2^2 + \frac{5\sqrt{6}}{12}q_1^3q_3^2 - \frac{\sqrt{6}}{2}q_1q_2^4 - \frac{\sqrt{6}}{4}q_1q_2^2q_3^2 \\
                &+ \frac{\sqrt{6}}{4}q_1q_3^4 + \frac{3\sqrt{2}}{4}q_1^2q_2q_3^2 + \frac{3\sqrt{2}}{4}q_2^3q_3^2 + \frac{3\sqrt{2}}{4}q_2q_3^4\bigg) 
			\end{aligned} \\
			&+ \begin{aligned}[t]
				\frac{e}{6}\big(q_1^6 + q_2^6 + q_3^6 &+ 3q_1^4q_2^2 + 3q_1^4q_3^2 \\
				&+ 3q_1^2q_2^4 + 3q_1^2q_3^4 + 3q_2^4q_3^2 + 3q_2^2q_3^4 + 6q_1^2q_2^2q_3^2\big) 
			\end{aligned} \\
			&+ \begin{aligned}[t]
				\frac{(f - e)}{6}\bigg(\frac{1}{6}q_1^6 &- q_1^4q_2^2 + \frac{1}{2}q_1^4q_3^2 + \frac{\sqrt{3}}{2}q_1^3q_2q_3^2 + \frac{3}{2}q_1^2q_2^4 \\
                &- \frac{3}{2}q_1^2q_2^2q_3^2 - \frac{3\sqrt{3}}{2}q_1q_2^3q_3^2 + \frac{3}{8}q_1^2q_3^4 + \frac{3\sqrt{3}}{4}q_1q_2q_3^4 + \frac{9}{8}q_2^2q_3^4\bigg)\Bigg)\,dV.
			\end{aligned}
		\end{aligned} \label{6thLdGqi}
	\end{equation}
	
	The last step is to specify the boundary conditions for $q_1, q_2, q_3$ with these symmetry assumptions. 	We work with the Dirichlet boundary condition in (\ref{DirichletBC}) on $r^2 + z^2 =1$, which can be translated into conditions for \(q_1, q_2, q_3\). The unit vector \(\hat{\boldsymbol{r}}\) can be written as $
		\hat{\boldsymbol{r}} = r\boldsymbol{e}_r + z\boldsymbol{e}_z, \,\, r^2 + z^2 = 1,
	$
	so that \eqref{DirichletBC} can be written as 
	\begin{equation} \nonumber
		s_+\bigg(\hat{\boldsymbol{r}}\otimes\hat{\boldsymbol{r}} - \frac{1}{3}\textbf{I}\bigg) = s_+\begin{bmatrix}			
			r^2\cos^2\theta - \dfrac{1}{3} & r^2\cos\theta\sin\theta & rz\cos\theta \\
			r^2\cos\theta\sin\theta & r^2\sin^2\theta - \dfrac{1}{3} & rz\sin\theta \\
			rz\cos\theta & rz\sin\theta & z^2 - \dfrac{1}{3}
		\end{bmatrix}.
	\end{equation}
 Comparing with \eqref{4thQrtz}, we obtain
	\begin{equation}
		q_1 = \sqrt{\frac{2}{3}}\bigg(1 - \frac{3r^2}{2}\bigg)s_+, \quad q_2 = \frac{r^2}{\sqrt{2}}s_+, \quad q_3 = \sqrt{2}rzs_+ \quad \text{on} \,\, r^2 + z^2 = 1. \label{radialBCs}
	\end{equation}
	There are additional boundary conditions to account for the assumed  rotational and mirror symmetry:
	\begin{equation}
		q_{1,z} = q_{2,z} = q_3 = 0 \quad \text{on} \,\, z = 0 \label{symBC1}
	\end{equation}
	for mirror symmetry across the \(xy\)-plane, and
	\begin{equation}
		q_{1,r} = q_2 = q_{2,r} = q_3 = 0 \quad \text{on} \,\, r = 0 \label{symBC2}
	\end{equation}
	for rotational symmetry about the \(z\)-axis.
	
	\subsection{Stationary Points of the LdG Energy}
 \label{sec:LdGstationary}
	We use a finite element method to solve for stationary/critical points of the weak formulations associated with the LdG free energy (\ref{4thLdGqi}) with fourth-order potential and (\ref{6thLdGqi}) with the sixth-order potential respectively. The finite element method is implemented in the open-source computing package FEniCS \parencite{Fenics} and the visualisation is carried out in an open source post-processing visualisation application, ParaView \parencite{ParaView}.
	
	We plot the biaxiality parameter of the numerically computed critical points, since biaxiality often labels defects and biaxiality also distinguishes the sixth-order potential from the fourth-order potential.
	\begin{equation}
		\beta = 1 - 6\frac{\big(\tr\textbf{Q}^3\big)^2}{\big(\tr\textbf{Q}^2\big)^3}, \label{beta}
	\end{equation}
	where \(0 \leq \beta \leq 1\), with \(\beta = 0\) corresponding to uniaxiality and \(\beta = 1\) corresponding to `maximal' biaxiality \parencite{KaiserWieseHess1992}. The radial hedgehog solution is purely uniaxial with $\beta = 0$ everywhere, whereas the split core and biaxial torus solutions have signature regions of biaxiality near the origin. We also plot the leading eigenvector of the \textbf{Q}-tensor in the examples below, which is the eigenvector with the largest positive eigenvalue, regarded as the \emph{nematic director}. A further good marker is the sign of the scalar order parameter at the origin. However, since we set \(q_2 = q_3 = 0\) at the origin in \eqref{symBC2}, this reduces to the sign of \(q_1\) at the origin. The radial hedgehog solution is isotropic at the origin i.e. \(q_1 (r = 0) = 0\), while the split core is negatively ordered at the origin, requiring \(q_1 (r = 0) < 0\), and the biaxial torus is positively ordered at the origin, requiring \(q_1 (r = 0) > 0\).
	
	With the fourth-order potential, there are known results in the literature \parencite{Majumdar2012,MkaddemGartland2000, SonnetKillian1995} that demonstrate the stability of the RH solution for high temperatures and small droplets; and at least local stability of the split core and biaxial torus solutions at lower temperatures and in droplets of larger radius. In Figure \ref{fig:4thconfigs}, we plot the biaxiality parameter, \(\beta\), and the leading eigenvector of the radial hedgehog, split core, and biaxial torus configurations obtained with the fourth-order potential in Figure \ref{fig:4thconfigs}. We plot the RH configuration in Figure \ref{fig:4thrh} with \(t = 0\) and \(\varepsilon = 1\). The RH solution has $\beta =0$ everywhere, with an isotropic point at $r=0$ (with $q_1 =0$ at $r=0$) and the leading eigenvector is simply the radial unit vector. We plot the split core and biaxial torus solutions for \(t = -10, \varepsilon = 0.5\) in Figures \ref{fig:4thsc} and \ref{fig:4thbt}, respectively. We observe the signature regions of biaxiality associated with the split core and biaxial torus solutions, labelled by the red regions, along with \(q_1(r=0) < 0\) for the split core solution, while \(q_1(r=0) > 0\) for the biaxial torus solution respectively. We compute the Morse index of each configuration in Figure \ref{fig:4thconfigs}, and find that each is  a locally stable critical point of the LdG free energy with the fourth-order potential \eqref{4thLdGqi}, for the specified values of \(t\) and \(\varepsilon\). Local stability of a LdG critical point implies that it is potentially observable in experiments and applications.
	
	\begin{figure}[!ht]
            \hfill
		\begin{minipage}{0.25\textwidth}
			\centering
			\includegraphics[width=0.7\textwidth, angle=-90]{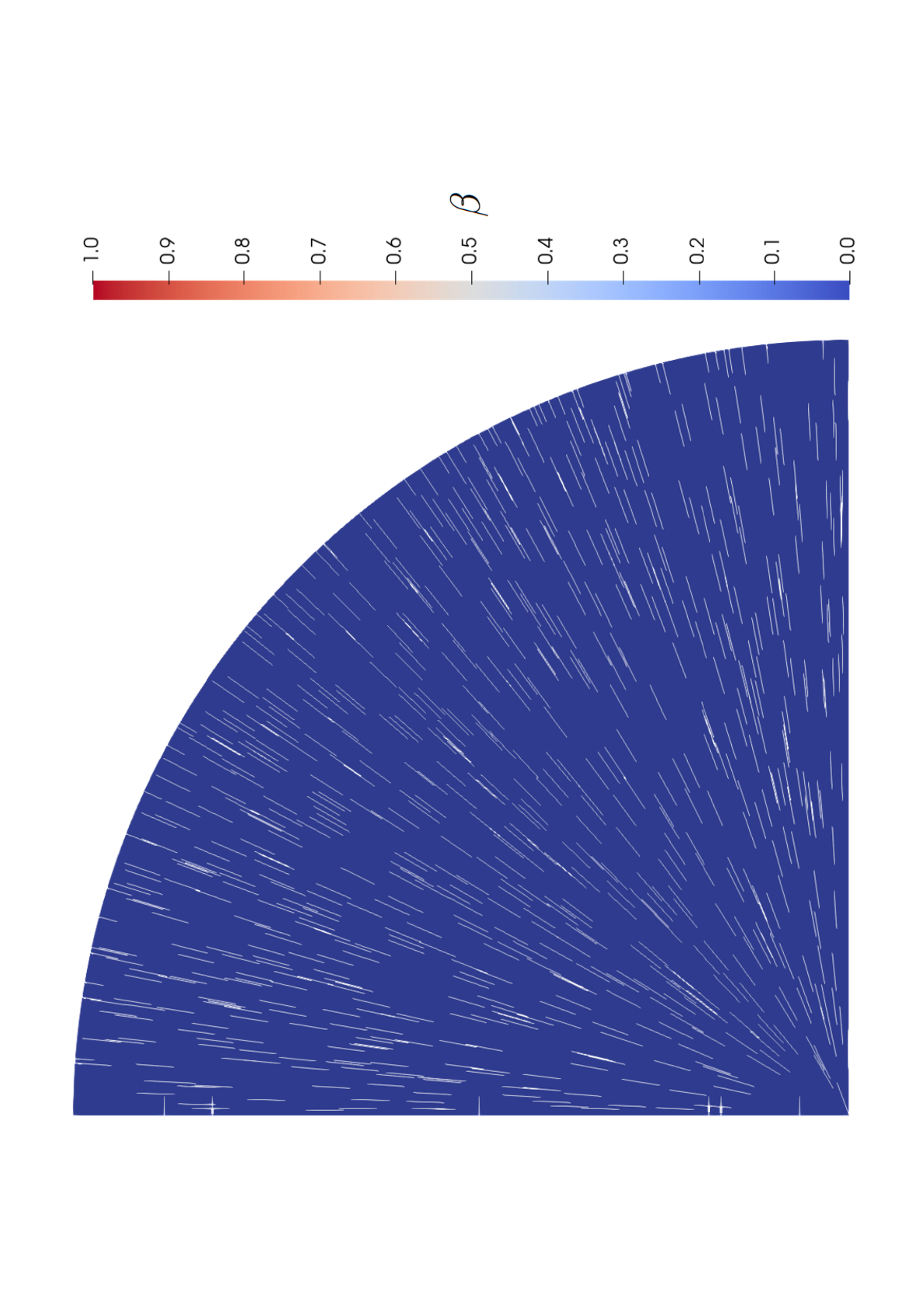}
			\subcaption{}\label{fig:4thrh}
		\end{minipage}\hfill
		\begin{minipage}{0.25\textwidth}
			\centering
			\includegraphics[width=0.7\textwidth,angle=-90]{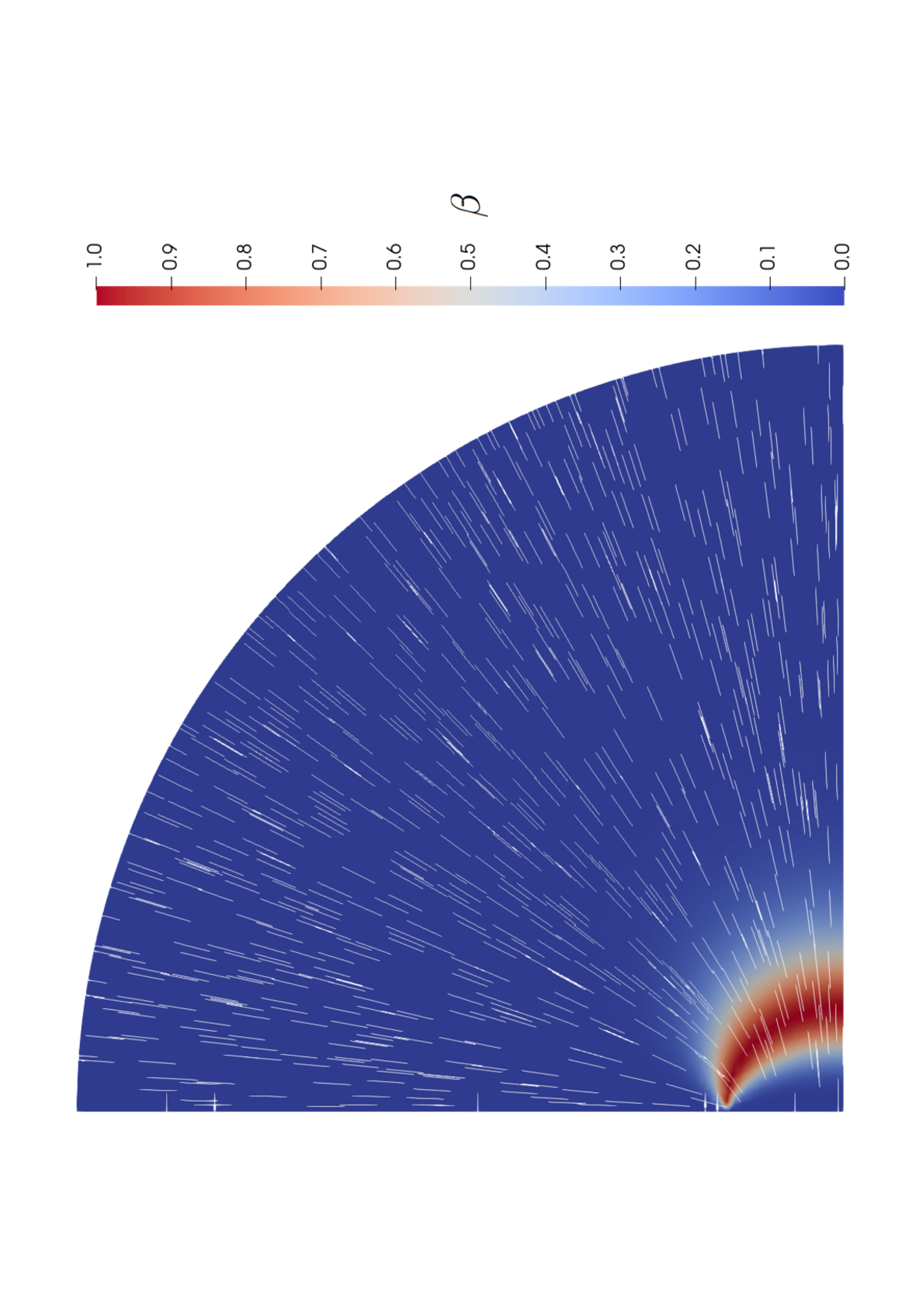}
			\subcaption{}\label{fig:4thsc}
		\end{minipage}\hfill
		\begin{minipage}{0.25\textwidth}
			\centering
			\includegraphics[width=0.7\textwidth,angle=-90]{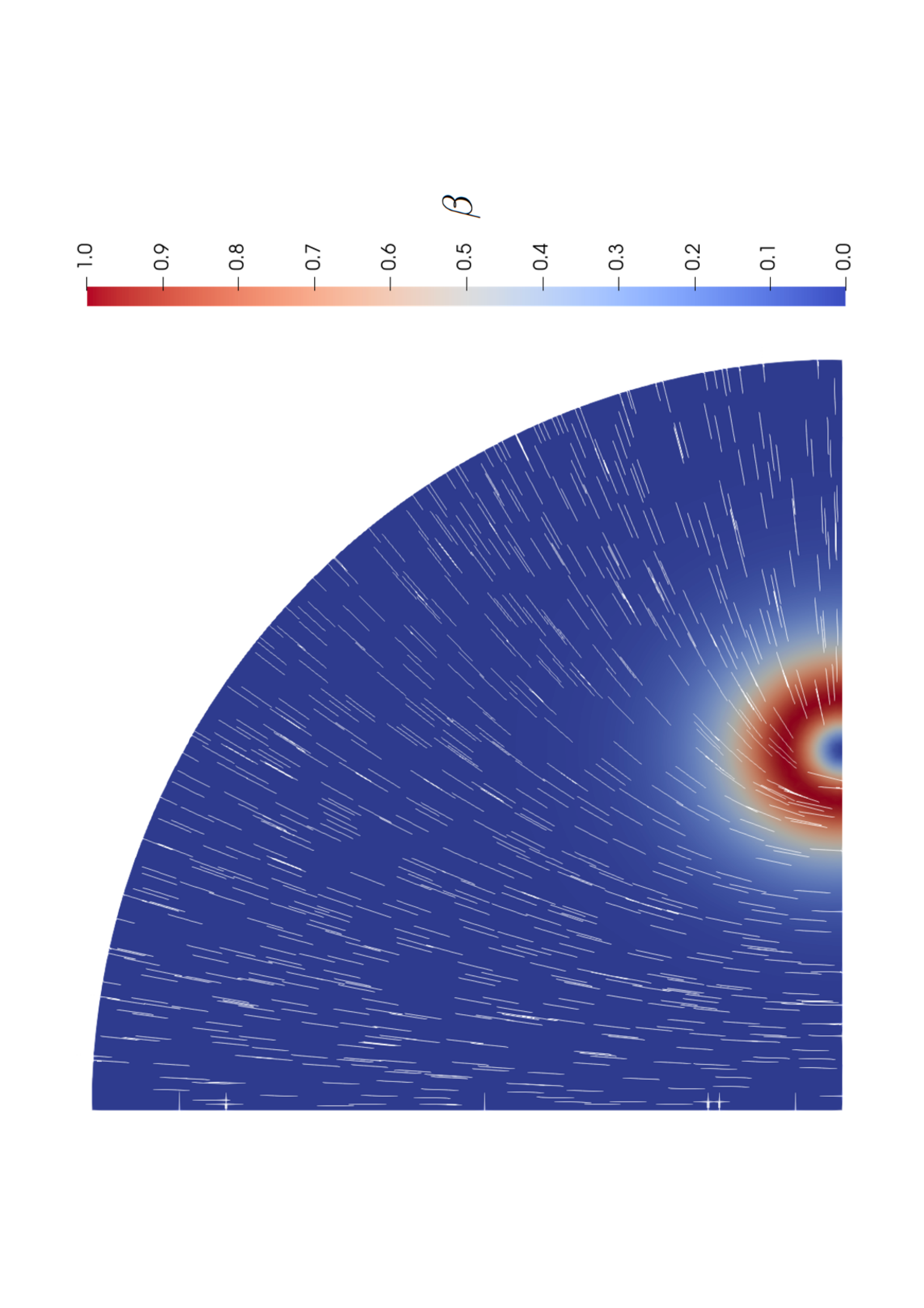}
			\subcaption{}\label{fig:4thbt}
		\end{minipage}
            \hfill
		\caption{Biaxiality parameter, \(\beta\), and leading eigenvector of LdG stationary points obtained with the fourth-order potential. (a) RH solution with \(t = 0, \varepsilon = 1\). (b) Split core solution with \(t = -10, \varepsilon = 0.5\). (c) Biaxial torus solution with \(t = -10, \varepsilon = 0.5\).}\label{fig:4thconfigs}
	\end{figure}
	
	Next, we repeat the same numerical investigations with the LdG energy with sixth-order potential, which has not been attempted in the literature to date. The behavior and trends are expected to be similar to those observed with the fourth-order potential, at least for moderately low temperatures, as suggested by the analysis in the previous section. In Figure \ref{fig:6thconfigs}, we plot stationary points of \eqref{6thNDLdG} with \(d = 1, e = 0, f = 1\), and correspond to the RH configuration at \(t = 0\) and \(\varepsilon = 1\), and the split core and biaxial torus solutions for \(t = -10\) and \(\varepsilon = 0.5\). We plot \(\beta\) and the leading eigenvector of the \textbf{Q}-tensor in each case. We find that \(q_1\) is approximately zero at the origin for the RH solution; negative at the origin for the split core solution; and positive at the origin for the biaxial torus solution. Comparing Figures \ref{fig:6thsc} and \ref{fig:6thbt} obtained with a sixth-order potential with Figures \ref{fig:4thsc} and \ref{fig:4thbt}, respectively, obtained at the same values of \(t\) and \(\varepsilon\) with the fourth-order potential, we observe that the regions of biaxiality of the split core and biaxial torus solutions are larger with the sixth-order potential. We again compute the smallest real eigenvalue of the Hessian associated with the LdG free energy \eqref{6thLdGqi} and each numerically computed stationary point is locally stable with the sixth-order potential as well.
	
	\begin{figure}[!ht]
            \hfill
		\begin{minipage}{0.25\textwidth}
			\centering
			\includegraphics[width=0.7\textwidth,angle=-90]{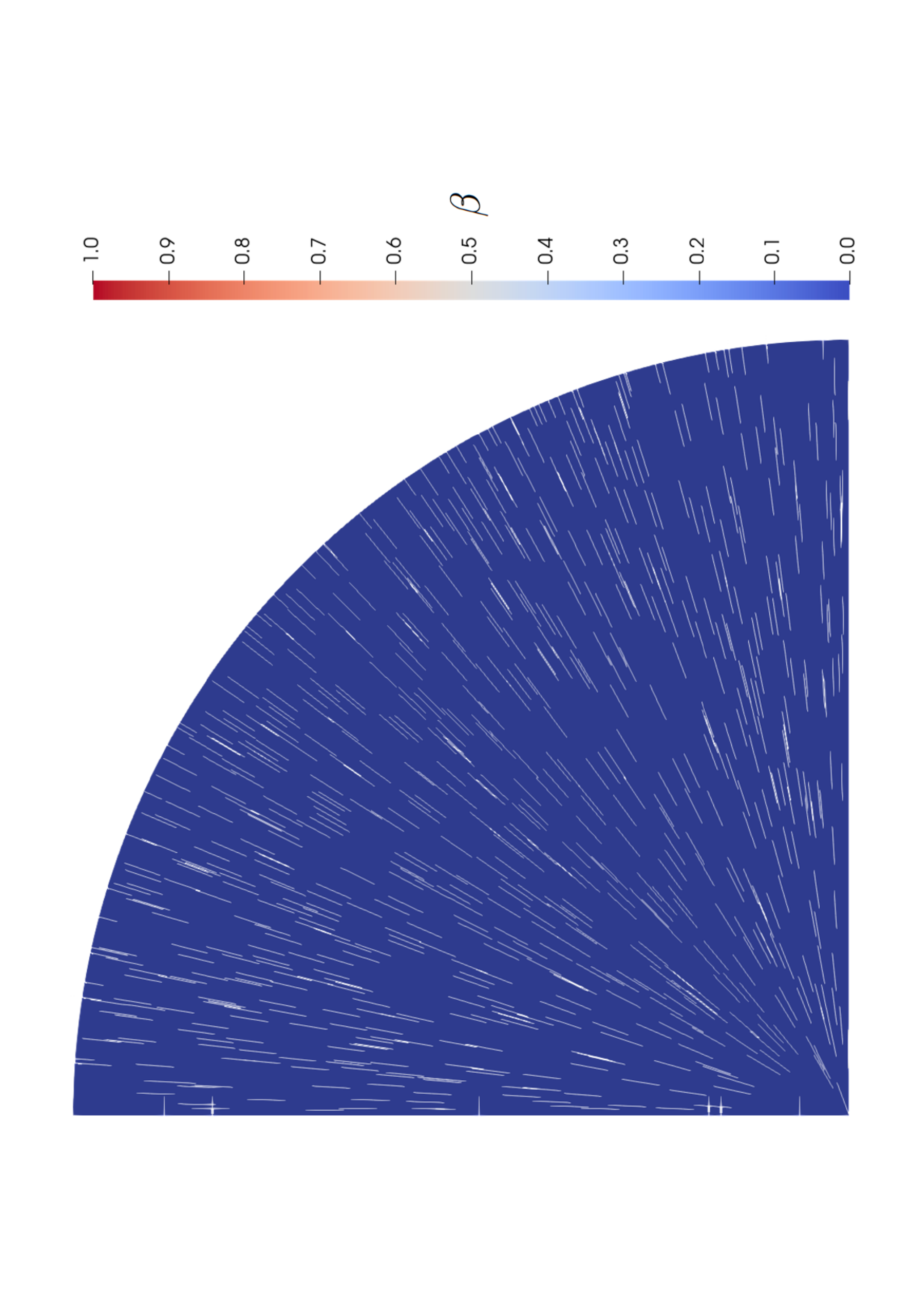}
			\subcaption{}\label{fig:6thrh}
		\end{minipage}\hfill
		\begin{minipage}{0.25\textwidth}
			\centering
			\includegraphics[width=0.7\textwidth,angle=-90]{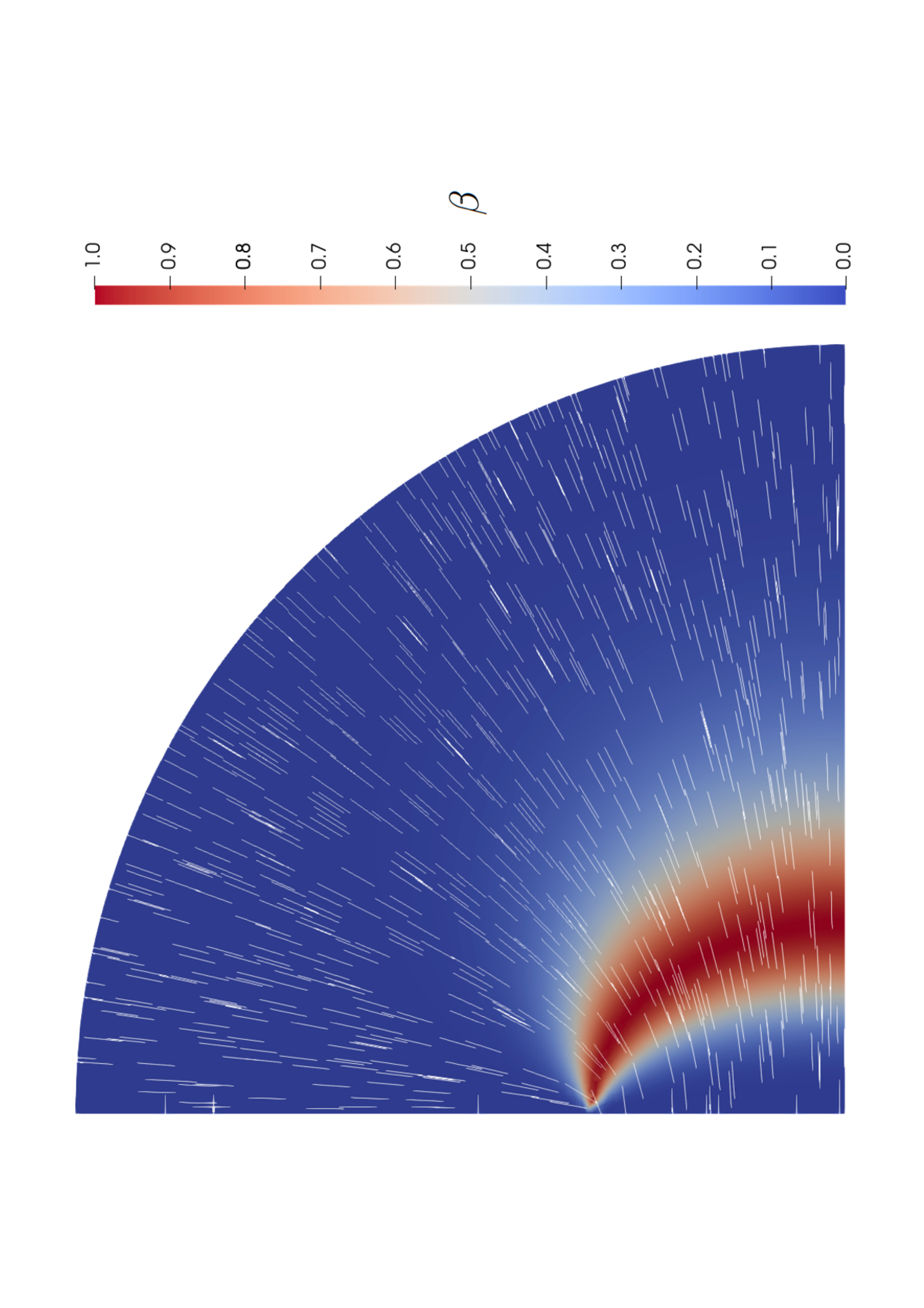}
			\subcaption{}\label{fig:6thsc}
		\end{minipage}\hfill
		\begin{minipage}{0.25\textwidth}
			\centering
			\includegraphics[width=0.7\textwidth,angle=-90]{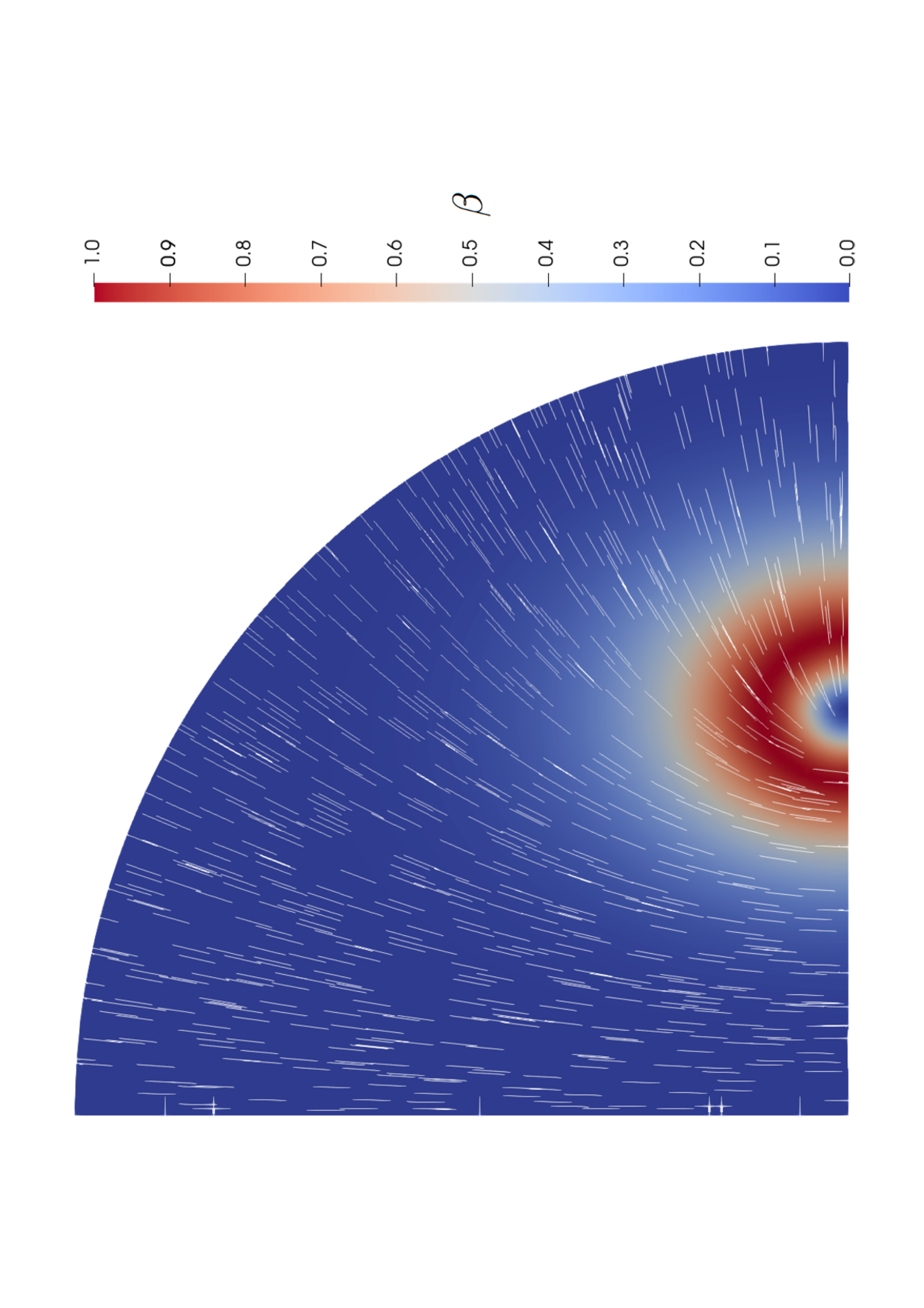}
			\subcaption{}\label{fig:6thbt}
		\end{minipage}
            \hfill
		\caption{Biaxiality parameter, \(\beta\), and leading eigenvector of LdG stationary points of \eqref{6thNDLdG} with \(d = 1, e = 0, f = 1\). (a) RH solution with \(t = 0, \varepsilon = 1\). (b) Split core solution with \(t = -10, \varepsilon = 0.5\). (c) Biaxial torus solution with \(t = -10, \varepsilon = 0.5\).}\label{fig:6thconfigs}
	\end{figure}
	
	\subsection{The Morse Index of the Radial Hedgehog Solution}
	We characterise the stability of the LdG stationary points using the Morse index, which is the number of negative real eigenvalues of their associated Hessian \parencite{MilnorSpivak1963}. The Morse index is calculated using the SLEPc eigenvalue solver \parencite{SLEPc}. An index-0 critical point, with no negative eigenvalues, is at least locally stable, while all index-\(k\) critical points, with \(k > 0\), are unstable.
	We numerically compute the Morse index of the RH solution, for a range of temperatures and droplet radii, to study the effects of temperature and droplet size on the stability of the RH solution, although our study is only restricted to the class of $\mathbf{Q}$-tensors with three degrees of freedom. We perform a parallel study of the RH solution as a critical point of the LdG energy with the fourth-order \eqref{4thNDLdG} and the sixth-order \eqref{6thNDLdG} potentials.
 In Tables \ref{fig:4thIndexTable} and \ref{fig:6thIndexTable}, we tabulate the Morse index of the RH solution with the fourth- and sixth-order potentials respectively, with \(d = 1, e = 0, f = 1\), for the given values of \(t\) and \(\varepsilon\). For each entry, the RH solution is in terms of $q^* = (q_1,q_2,q_3)$ where 
\begin{equation}\label{eq:RHq*}
		q_1 = \sqrt{\frac{2}{3}}\bigg(1 - \frac{3r^2}{2}\bigg)s^*, \quad q_2 = \frac{r^2}{\sqrt{2}}s^*, \quad q_3 = \sqrt{2}rzs^*. 
	\end{equation}
 The function \(s^*\) is the solution of the RH ODE (see \eqref{sODE} for the sixth-order potential).

	\begin{figure}[!ht]
		\centering
		\includegraphics[width=0.8\textwidth]{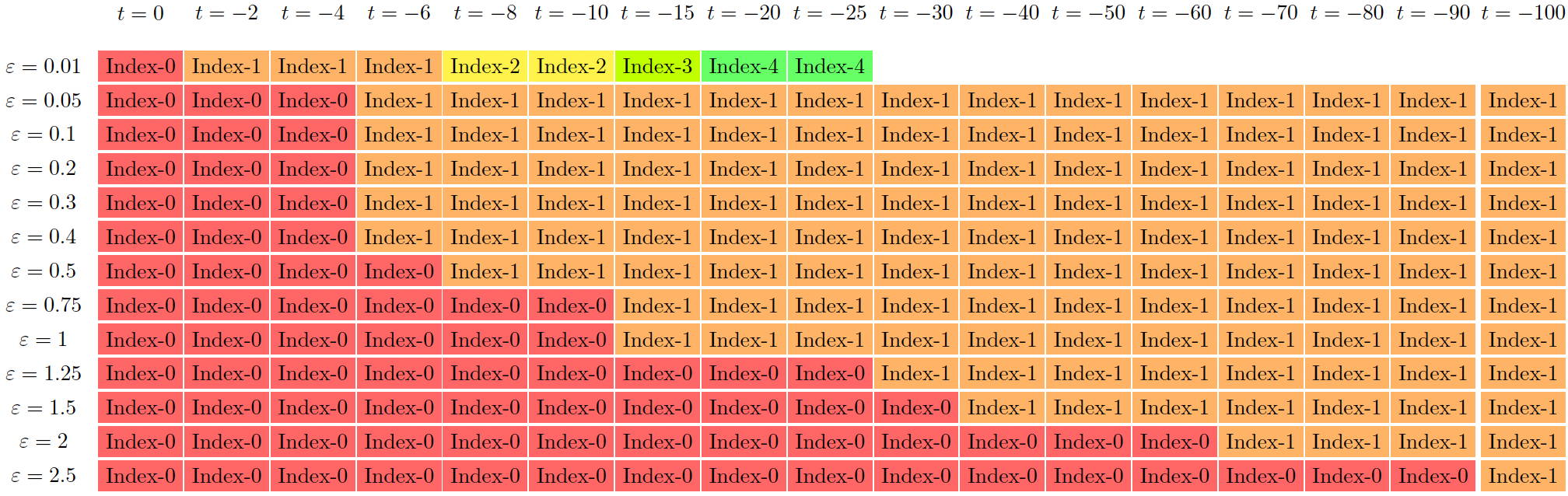}
		\caption{Morse index of the RH solution for the given value of \(t\) and \(\varepsilon\), with the fourth-order potential \eqref{4thLdGqi}.}
		\label{fig:4thIndexTable}
	\end{figure}
	
	\begin{figure}[ht!]
		\centering
		\includegraphics[width = 0.8\textwidth]{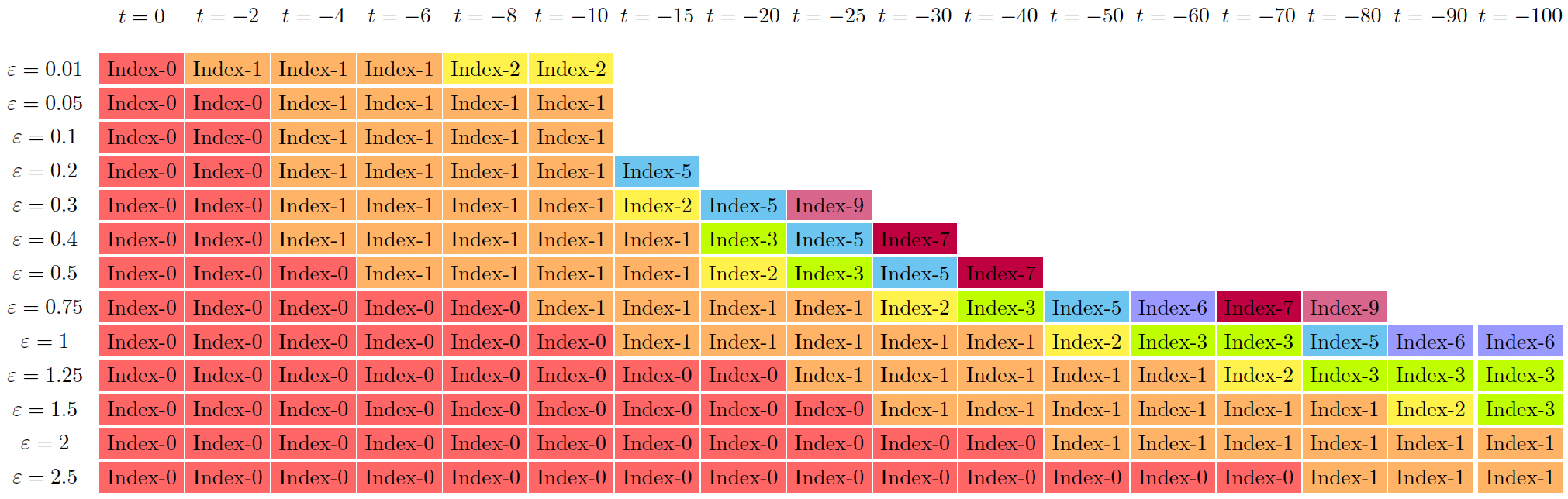}
		\caption{Morse index of the RH solution for the given value of \(t\) and \(\varepsilon\) with the sixth-order potential \eqref{6thLdGqi}, with \(d = 1\), \(e = 0\), \(f = 1\).} \label{fig:6thIndexTable}
	\end{figure}
	
	There are some generic trends - the index of the RH solution is lower for higher values of \(t\) and \(\varepsilon\) in both cases. This is consistent with the fact that the RH solution is stable closer to the isotropic-nematic transition temperatures and for smaller droplets. 
 Comparing the indices with the sixth- and fourth-order potentials, the RH solution has higher index in the sixth-order case compared to the fourth-order case. 
 This suggests that the sixth-order potential has a destabilising effect on the RH solution, which could be explained on the grounds that the sixth-order potential admits biaxial critical points, so there are more unstable biaxial eigendirections for the RH solution, resulting in a higher Morse index compared to the fourth-order potential which does not admit biaxial critical points.
	We compute at most the ten smallest eigenvalues. Blank spaces in the tables correspond to cases for which all ten computed eigenvalues are negative or when the solver fails to compute the ten smallest eigenvalues, but the numerically computed eigenvalues are negative.
	
	\subsection{The RH Solution as an Index-1 Transition State}
 \label{sec:transitionRH}
	
	In this section, we attempt to identify situations for which the RH solution acts as an index-$1$ saddle point, because index-$1$ saddle points are often referred to as \emph{transition states}, relevant for switching between two locally stable states \cite{KusumaatmajaMajumdar2015}. In other words, the transition state mediates the transition and may be observable in the non-equilibrium dynamics.
	
	We work with values of $t$ and $\varepsilon$ for which the RH solution is an index-$1$ critical point of the LdG energy \eqref{4thLdGqi} and \eqref{6thLdGqi}. Our aim is to compute the transition pathway between two index-$0$ LdG stationary points, through an index-$1$ RH solution $\boldsymbol{q}^*$ in \eqref{eq:RHq*}. Using a gradient flow method and taking small perturbations of the RH solution along the direction of the eigenvector associated with the negative eigenvalue of the Hessian as an initial condition, we solve the initial value problem
	\begin{equation}
		\frac{\partial\boldsymbol{q}}{\partial\tau} = -\nabla\mathcal{F}(\boldsymbol{q},\nabla\boldsymbol{q}) \quad \text{in} \,\, \Omega \,\, \text{for} \,\, \tau > 0, \label{GFIVP1}
	\end{equation}
	\begin{equation}
		\boldsymbol{q} = \boldsymbol{q}_0 = \boldsymbol{q}^* \pm \lambda\boldsymbol{u} \quad \text{in} \,\, \Omega \,\, \text{at} \,\, \tau = 0, \label{GFIVP2}
	\end{equation}
	with boundary conditions (\ref{radialBCs})-(\ref{symBC2}), where \(\Omega\) is the quarter circle domain; \(\mathcal{F}\) is the LdG energy with the fourth- or sixth-order potential, given by (\ref{4thLdGqi}) or (\ref{6thLdGqi}), respectively; \(s_+\) is the scalar order parameter of the global minimiser of the fourth- or sixth-order bulk potential in the class of uniaxial \textbf{Q}-tensors with positive scalar order parameter; the quantity \(\lambda\) is a small positive constant; and \(\boldsymbol{u}\) is the unstable eigendirection of the RH solution. Note that (\ref{GFIVP2}) describes two different initial values to compute two distinct index-$0$ LdG stationary points in the class of $\mathbf{Q}$-tensors with mirror and rotational symmetry.
 We discretise the \(\tau\)-dependent PDEs using an implicit Euler method. 
	
	As an example, the unstable eigendirection of the RH solution as a stationary point of \eqref{6thLdGqi} with \(t = -12, \varepsilon = 0.5, d = 1, e = 0, f = 1\) is 
		$\boldsymbol{u} = (6.11 \times 10^{-2}, 3.34 \times 10^{-18}, 1.15 \times 10^{-18})^T$,
	at the origin. A perturbation of the RH solution, \(\boldsymbol{q}_0 = \boldsymbol{q} + \lambda\boldsymbol{u}\), yields the biaxial torus configuration, while a perturbation \(\boldsymbol{q}_0 = \boldsymbol{q} - \lambda\boldsymbol{u}\) yields the split core solution. This is in agreement with the fact that \(q_1 > 0\) at the origin for the biaxial torus and \(q_1 < 0\) at the origin for split core solutions.
	
	Figures \ref{fig:4thgradientflow} and \ref{fig:6thgradientflow} show examples of two index-0 stationary points of the LdG free energy with the fourth- \eqref{4thLdGqi} and sixth-order potentials \eqref{6thLdGqi} respectively, for \(t = -12\), \(\varepsilon = 0.5\), with \(d = 1, e = 0, f = 1\) in the sixth-order case, via a gradient flow method with the perturbed index-1 RH solution as initial condition. This strongly suggests that there are transition pathways via our index-1 RH solutions between the index-$0$ biaxial torus and split core solutions at \(t = -12\) and \(\varepsilon = 0.5\) in both cases. Note that the split core solution may not be index-$0$ in the full class of admissible $\mathbf{Q}$-tensors without the symmetry constraints. Nevertheless, we speculate that these reduced examples can be generalized to show that the RH solution can act as a transition state between two index-$0$ LdG stationary points in the admissible class \eqref{AQ}, without the symmetry constraints and exploiting the full five degrees of freedom.
	
	\begin{figure}[!ht]
		\begin{minipage}{0.49\textwidth}
			\hspace*{-0.3cm}\includegraphics[width=0.7\textwidth,angle=-90]{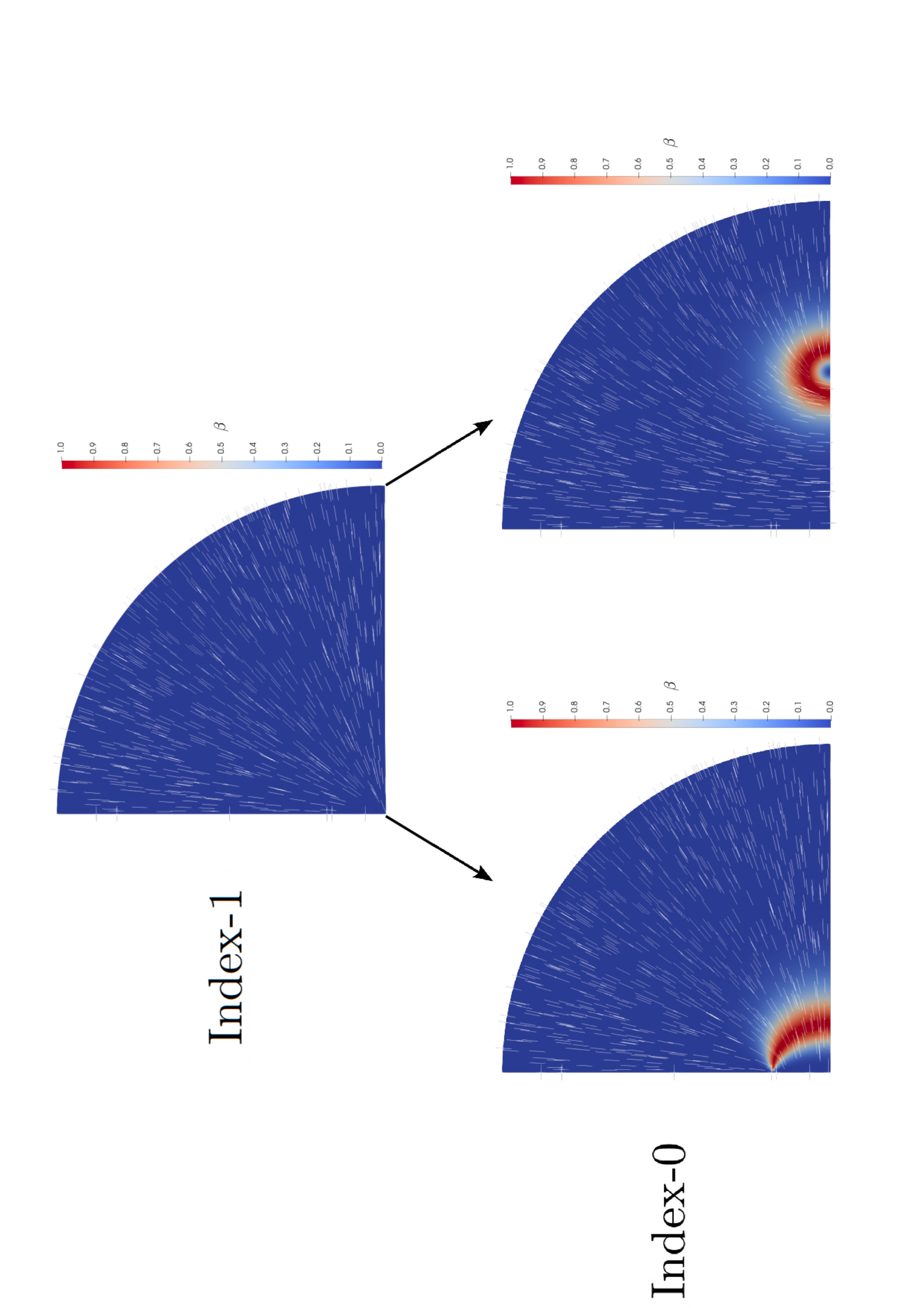}
                \subcaption{}\label{fig:4thgradientflow}
		\end{minipage}\hfill
		\begin{minipage}{0.49\textwidth}
			\hspace*{-0.3cm}\includegraphics[width=0.7\textwidth,angle=-90]{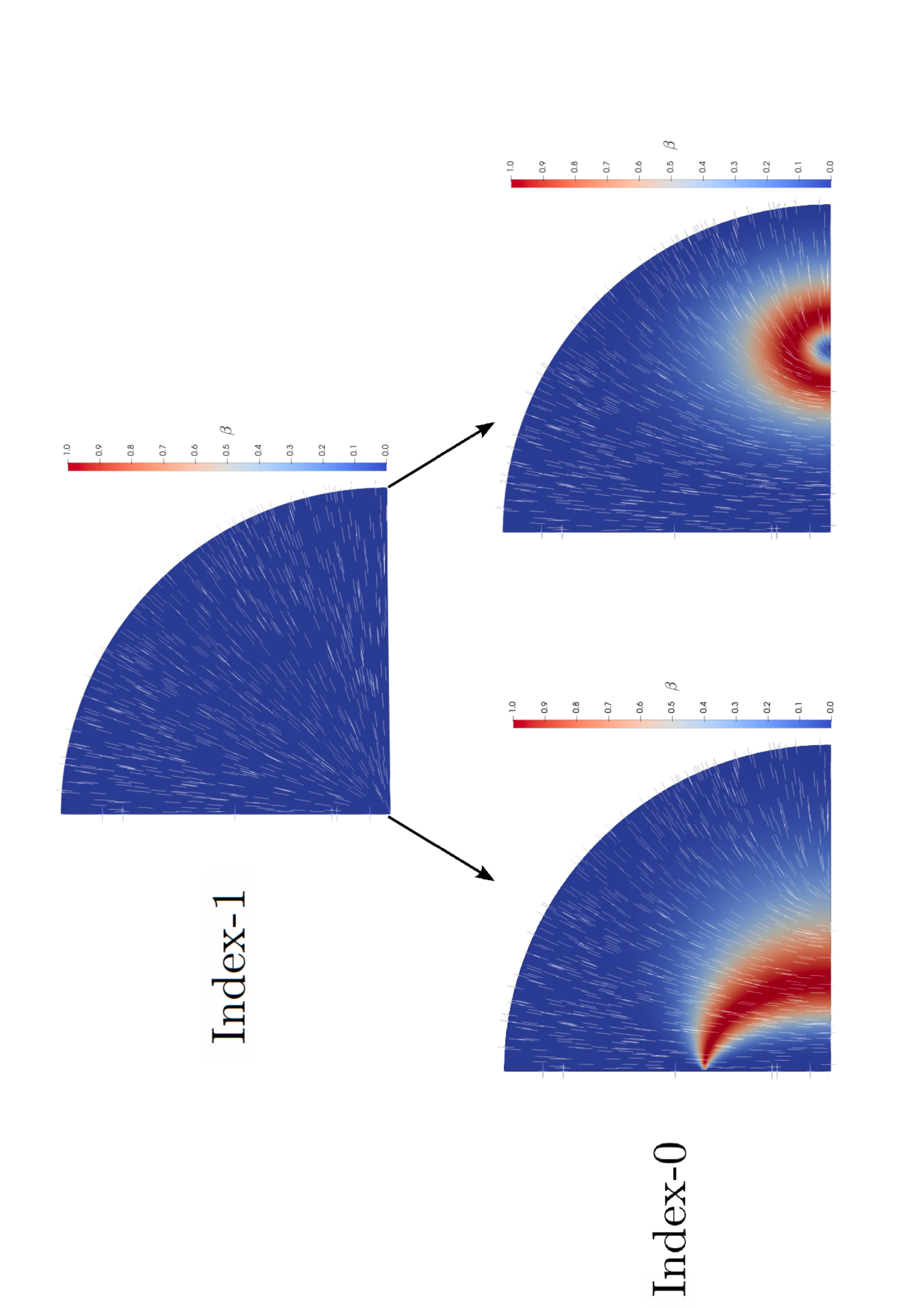}
			\subcaption{}\label{fig:6thgradientflow}
		\end{minipage}\hfill
		\caption{The transition pathways between two stable states. Split core and biaxial torus via index-1 transition state RH with (a) the fourth-order potential \eqref{4thLdGqi} at $t = -12$ and $\varepsilon = 0.5$, and (b) the sixth-order potential \eqref{6thLdGqi} with $t = -12$, $\varepsilon = 0.5$, $d = 1$, $e = 0$, $f = 1$.}

	\end{figure}

	\subsection{Bifurcation Diagrams}
 \label{sec:bifurcation}
	We numerically compute bifurcation diagrams with the LdG free energies \eqref{4thLdGqi} and \eqref{6thLdGqi}, in Figures \ref{fig:4thBifurcation} and \ref{fig:6thBifurcation} respectively.
 The value of the scalar order parameter, $s$, of each configuration at the origin, is plotted against temperature, noting that \(s(0) = \sqrt{\frac{3}{2}}q_1(0)\) and all configurations are uniaxial at the origin due to the boundary conditions (\ref{symBC1}) and (\ref{symBC2}). In what follows, we only consider RH solutions with positive order parameter profile, recalling that the global minimiser of \eqref{LdGs} can be negative for low temperatures. The two bifurcation diagrams are qualitatively similar and the bifurcation points are simply shifted: the RH solution, with \(s(0) = 0\), is the unique stationary point for high temperatures; the RH and biaxial torus configurations are stable at intermediate temperatures, where we also observe an unstable biaxial torus configuration. The RH configuration loses stability at low temperatures, while the globally minimising biaxial torus configuration remains stable, accompanied by the emergence of a locally stable split core configuration. Unsurprisingly, the RH solution loses stability at a higher critical temperature in the sixth-order case, compared to the fourth-order case, so that the RH solution is unstable over a wider temperature range with the sixth-order potential. We plot the stable and unstable biaxial torus configurations with the fourth-order potential at \(t = -6.5\) in Figure \ref{fig:4thstabletorus} and \ref{fig:4thunstabletorus}. We use a high-index optimisation-based shrinking dimer (HiOSD) method \parencite{JianyuanZhang2019} to compute the unstable biaxial torus configuration in both cases, and continuation methods to compute the bifurcation diagrams. The noticeable difference is that the biaxial region is much closer to the origin for the unstable biaxial torus solution. We deduce that there are only qualitative differences between the stationary points of the LdG free energies \eqref{4thLdGqi} and \eqref{6thLdGqi}, except that uniaxial solutions are more unstable (typically have higher Morse indices) in the sixth-order case and the biaxial stationary points have larger regions of stability, and larger biaxial regions in the sixth-order case compared to the fourth-order case. A relevant remark is that the biaxial torus solution is predominantly uniaxial away from the biaxial torus, and hence, it would be interesting to check if it retains stability when \eqref{eq:f2} strongly favours a bulk biaxial phase, with and without the symmetry constraints \eqref{symBC1}
	and \eqref{symBC2}.
	\begin{figure}[!ht]
		\begin{minipage}{0.49\textwidth}
			\hspace*{-0.1cm}\includegraphics[width=0.7\textwidth,angle=-90]{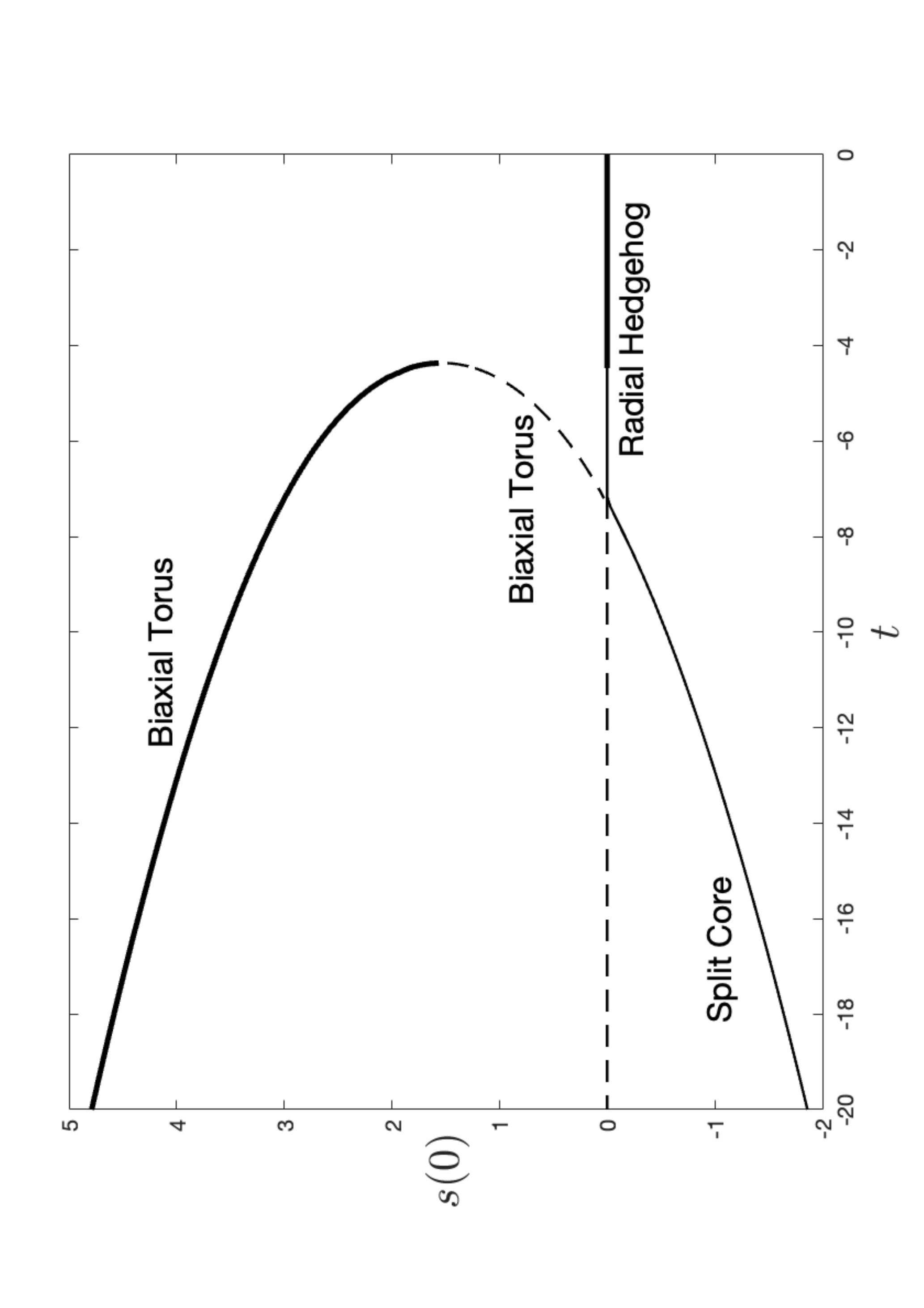}
			\subcaption{}\label{fig:4thBifurcation}
		\end{minipage}\hfill
		\begin{minipage}{0.49\textwidth}
			\hspace*{-0.1cm}\includegraphics[width=0.7\textwidth,angle=-90]{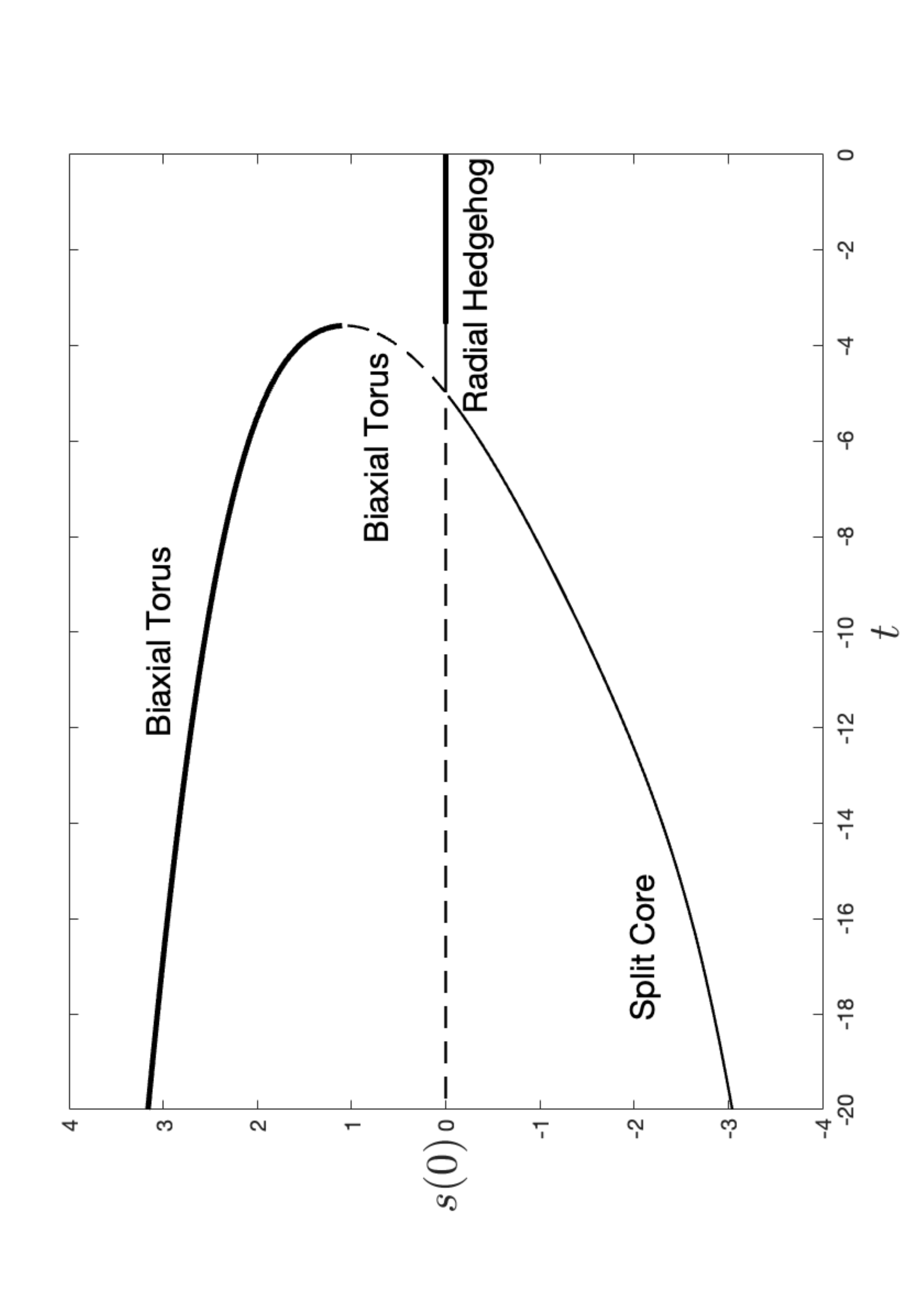}
			\subcaption{}\label{fig:6thBifurcation}
		\end{minipage}\hfill
		\caption{Bifurcation diagrams for the LdG free energies for \(\varepsilon = 0.5\) with (a) fourth-order potential \eqref{4thLdGqi} and (b) sixth-order potential \eqref{6thLdGqi} with \(d = 1, e = 0, f = 1\). We plot the scalar order parameter of each configuration. Bold solid lines indicate the global minimiser; thin solid lines indicate local minimality; and dashed lines indicate instability.}
	\end{figure}
        \begin{figure}[!ht]
            \centering
		\includegraphics[width=0.25\textwidth,angle=-90]{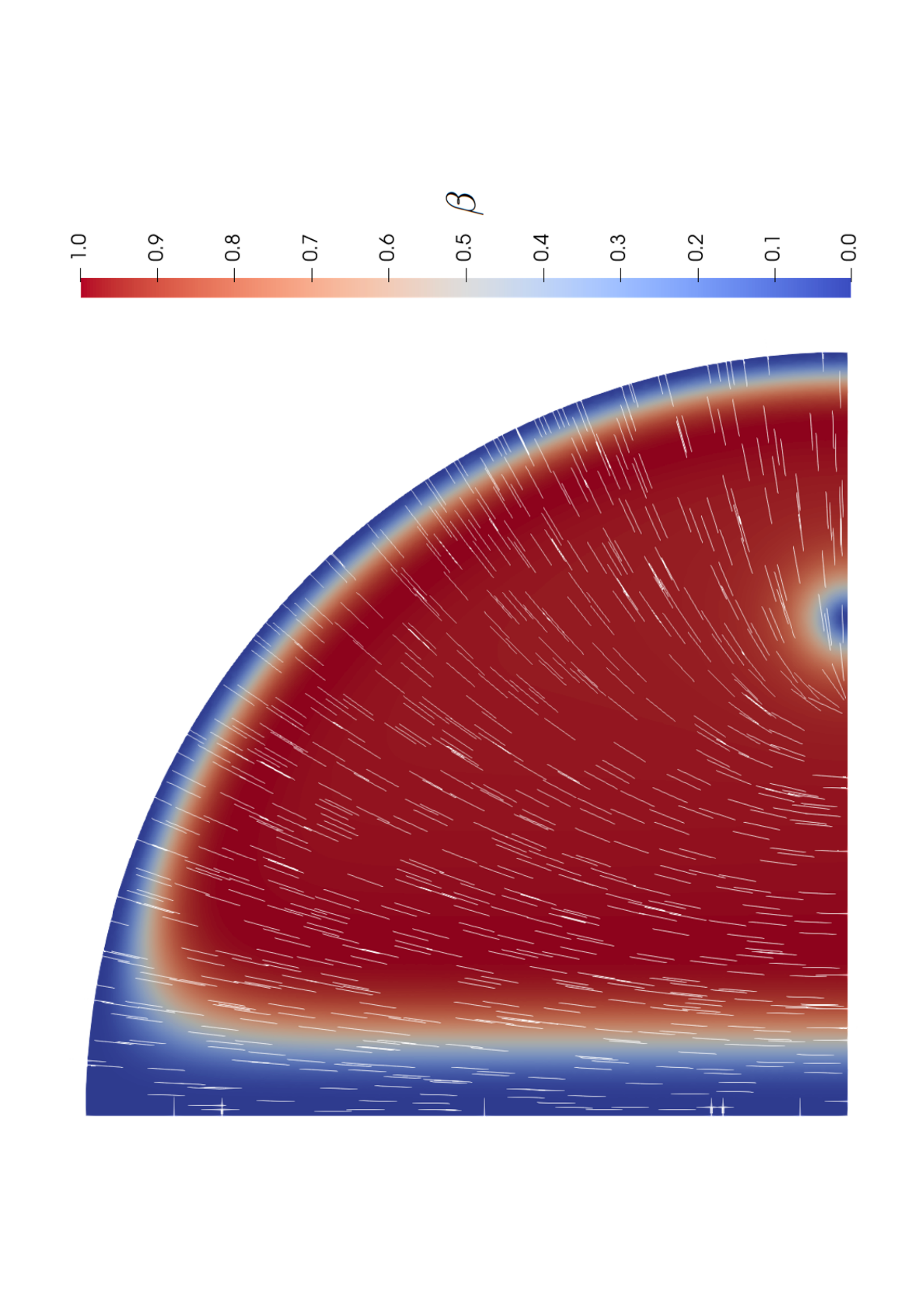}
		\caption{Biaxial torus configuration at \(t = -50\), \(\varepsilon = 0.5\), with \(d = 1, e = 0, f = 1\).} \label{fig:toruslowtemp}
	\end{figure}
	
	\begin{figure}[!ht]
		\begin{minipage}{0.245\textwidth}
			\centering
			\includegraphics[width=0.75\textwidth]{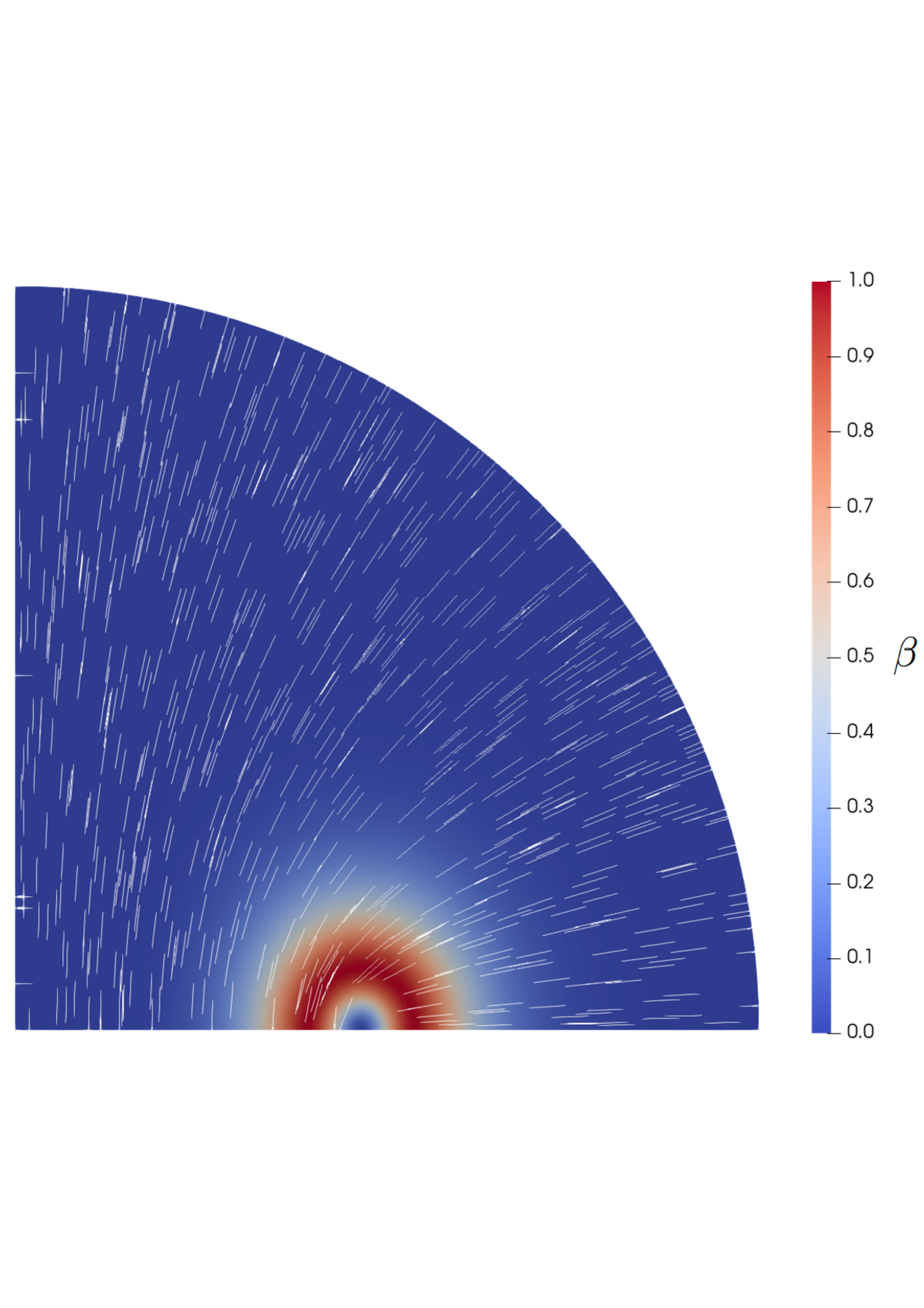}
			\subcaption{}\label{fig:4thstabletorus}
		\end{minipage}\hfill
		\begin{minipage}{0.245\textwidth}
			\centering
			\includegraphics[width=0.75\textwidth,angle=-90]{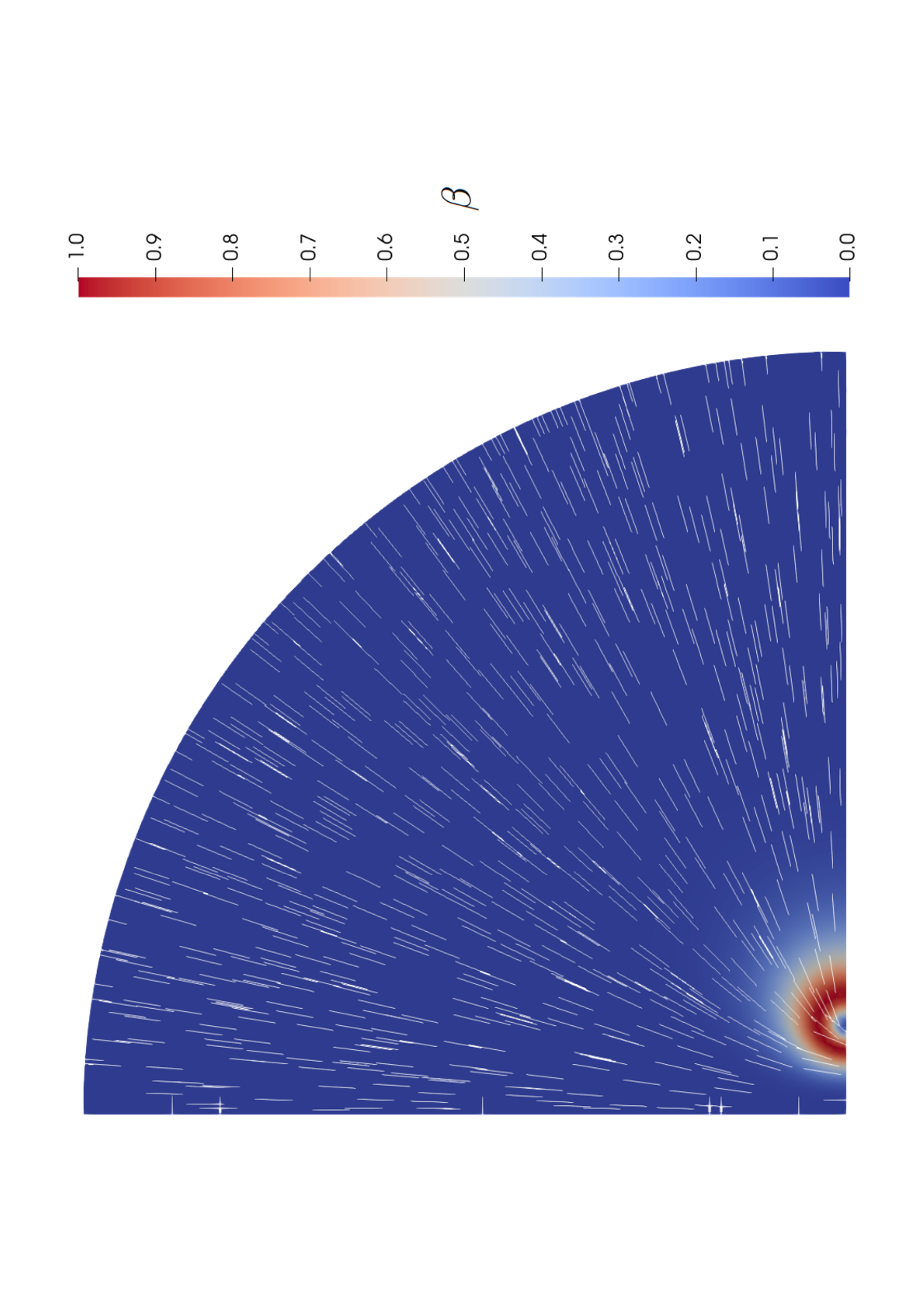}
			\subcaption{}\label{fig:4thunstabletorus}
		\end{minipage}
            \begin{minipage}{0.245\textwidth}
			\centering
			\includegraphics[width=0.75\textwidth,angle=-90]{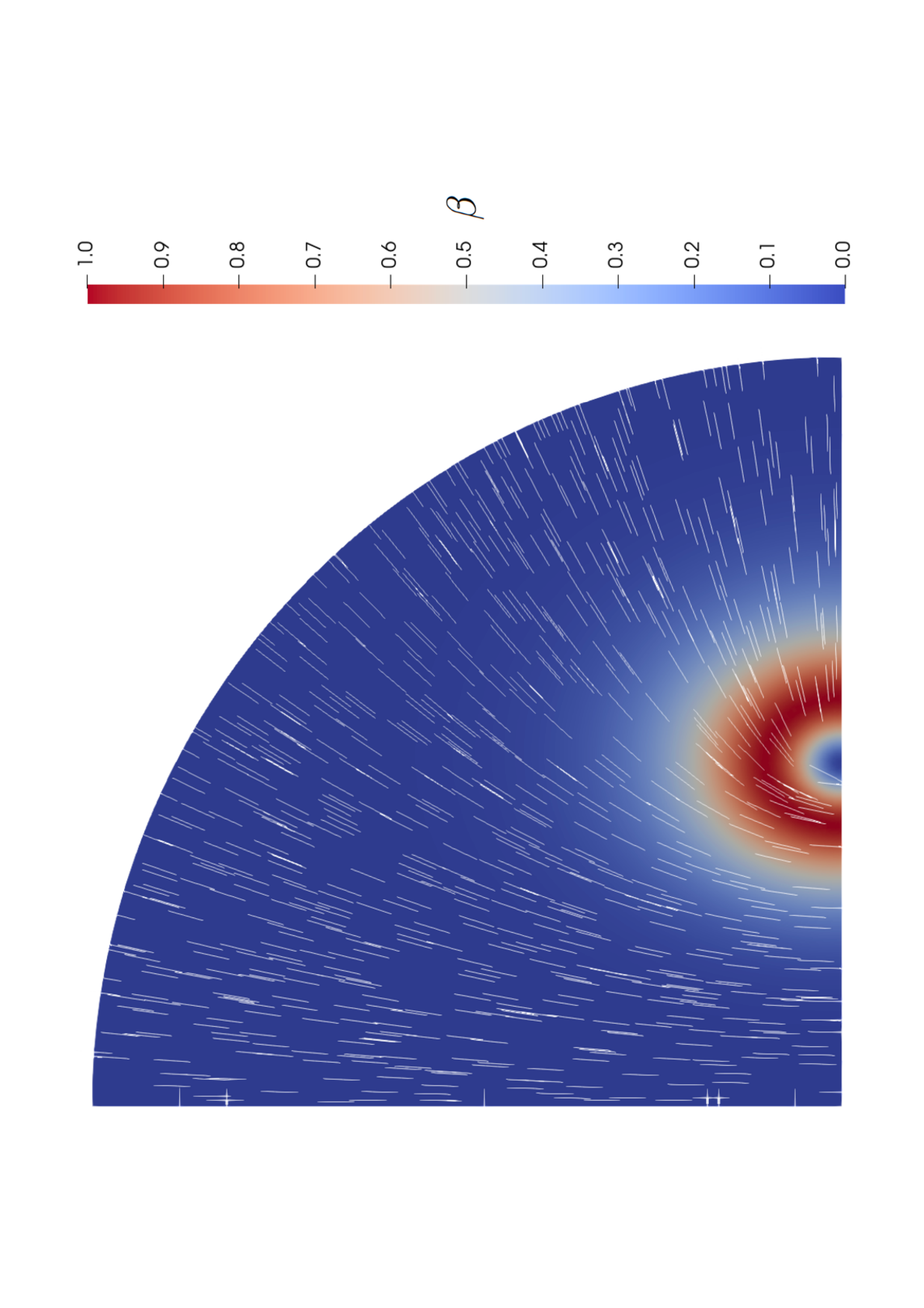}
			\subcaption{}\label{fig:6thstabletorus}
		\end{minipage}\hfill
		\begin{minipage}{0.245\textwidth}
			\centering
			\includegraphics[width=0.75\textwidth,angle=-90]{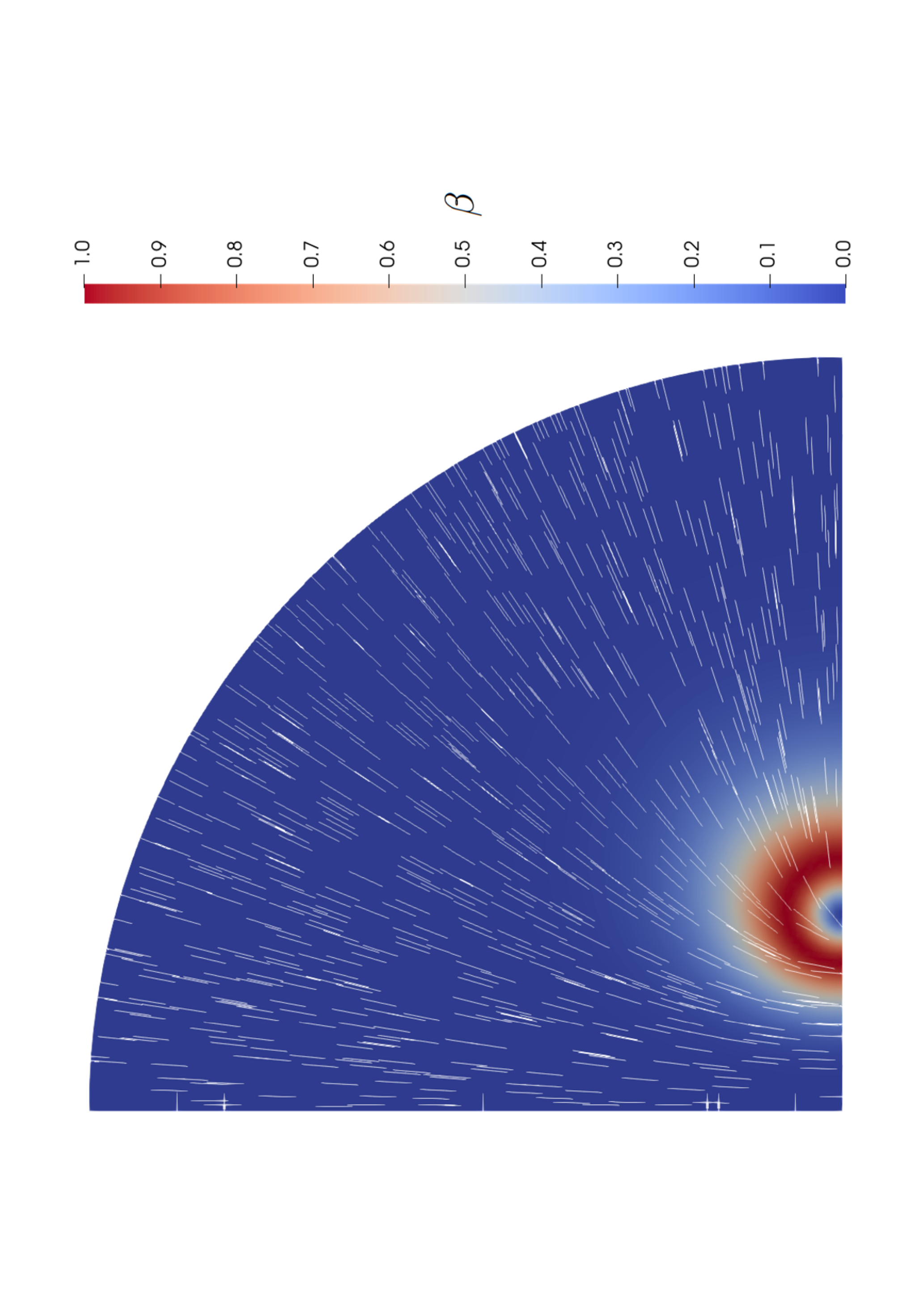}
			\subcaption{}\label{fig:6thunstabletorus}
		\end{minipage}
            \caption{(a) Stable biaxial torus configuration; and (b) unstable biaxial torus configuration for \eqref{4thLdGqi} with fourth-order potential, with \(t = -6.5\) with \(\varepsilon = 0.5\). (c) stable biaxial torus; and (d) unstable biaxial torus configuration for \eqref{6thLdGqi} with sixth-order potential, for \(t = -4\) and \(\varepsilon = 0.5, d = 1, e = 0, f = 1\).}
        \end{figure}

    \subsection{Conclusions}
    \label{sec:conclusion}

    In this paper, we perform some analytical and numerical studies of a LdG free energy with a sixth-order bulk potential \eqref{eq:f2}, as opposed to the vast majority of theoretical studies which rely on the fourth-order bulk potential \eqref{eq:f1}. The potential \eqref{eq:f2} admits a biaxial minimiser for sufficiently low temperatures, and in fact, does not admit stable uniaxial minimisers deep in the nematic phase. By direct analogy with \parencite{MajumdarZarnescu2010}, one can prove that global minimisers of \eqref{LdGgeneral} with \eqref{eq:f2}, will converge to minimisers of \eqref{eq:f2} almost everywhere (except for defects or boundary layers), for sufficiently large domains and are hence expected to demonstrate bulk biaxiality. This is of course, hugely interesting since bulk biaxiality is typically elusive and hard to detect experimentally. We give an example of a biaxial torus with a large biaxial region at the low temperature, \(t = -50\) in Figure \ref{fig:toruslowtemp}.

    We focus on the concrete example of the RH solution, as a critical point of \eqref{LdGgeneral} with \eqref{eq:f2}. There are many analogies with the fourth-order potential for moderately low temperatures, for which \eqref{eq:f2} admits a global uniaxial minimiser with positive order parameter, and differences arise deep in the nematic phase, when the RH solution need not be unique i.e. there are certainly multiple solutions of \eqref{sODE} for sufficiently low temperatures, and the global minimiser of \eqref{LdGs} is negative in the interior, away from $r=0$ and $r=1$. The non-uniqueness of solutions of \eqref{sODE} and negativity of the global minimiser of \eqref{LdGs} are outside the scope of the fourth-order potential. It is not clear if these results have physical implications. One could argue that the LdG model with the sixth-order potential is not necessarily valid for low temperatures, when \eqref{eq:f2} has a global biaxial minimiser. It is interesting that the global minimiser of \eqref{LdGs} corresponds to a critical point of \eqref{6thNDLdG} of the form
    \[
    \mathbf{Q}^* = s^* \left( \hat{\mathbf{r}} \otimes \hat{\mathbf{r}} - \frac{\mathbf{1}}{3}\textbf{I} \right)
    \]
    where $s^*$ is negative in the interior; this describes a uniaxial state for which the NLC molecules prefer to be orthogonal to the normal or prefer to be planar, which is consistent with cooling-induced homeotropic-planar structural transitions observed in some experiments on nematic shells \parencite{lagerwallmajumdarwang2020}. One could speculate that \eqref{eq:f2} captures this physical effect, which \eqref{eq:f1} cannot. Unsurprisingly, RH solutions have a smaller domain of stability as critical points of \eqref{6thNDLdG}, simply because \eqref{eq:f2} promotes bulk biaxiality for sufficiently low temperatures, and RH solutions are purely uniaxial with the exception of an isotropic point at the droplet centre.
    
  We also numerically compute the biaxial torus and split core solutions, as critical points of \eqref{6thLdGqi} with the additional symmetry constraints \eqref{symBC1} and \eqref{symBC2}. These critical points only exploit three out of the five degrees of freedom. We do not observe any significant differences between the fourth-order and sixth-order potential, except that the biaxial regions are larger with \eqref{eq:f2} and these ``locally" biaxial solutions have larger domains of stability as critical points of \eqref{6thLdGqi}, as opposed to critical points of \eqref{4thLdGqi}.

Besides the RH, split core, and biaxial torus solutions, which exist as critical points of \eqref{4thLdGqi} and \eqref{6thLdGqi}, we numerically compute a brand new biaxial critical point of \eqref{6thNDLdG} which exploits the full five degrees of freedom in Figure~\ref{fig:configuration}. This biaxial solution is almost maximally biaxial in the interior, except for the imposed uniaxial boundary condition and the labelled defect rings, and does not have rotational or mirror symmetry. Maximal biaxiality indicates a small or zero eigenvalue of the corresponding $\mathbf{Q}$-tensor. Focussing on the defect rings, this biaxial solution has a complete defect ring inside the sphere and two half defect rings connected to the boundary (Fig \ref{fig:configuration}(a)). The two half defect rings are located on the same plane, $x_1x_3$-plane (Fig \ref{fig:configuration}(d) and Fig \ref{fig:configuration}(c)), which is perpendicular to the $x_2x_3$-plane that contains the complete defect ring. From top to bottom in the $x_3$ direction, we cross the upper half-defect ring once, the complete defect ring twice, and the lower half-defect ring  once, and the defect lines are approximately uniaxial. Hence, in the second figure of Fig \ref{fig:line}, there are two low-biaxiality regions near $r = 1$ and the four low-biaxiality areas inside the sphere. In the third figure of Fig \ref{fig:line}, in the $x_2$ direction, the complete defect ring is crossed twice and hence, there are two uniaxial points at the end-points (because of the imposed boundary condition) and two regions of low biaxiality enclosed by the complete defect ring.
The eigenvector corresponding to the largest eigenvalue of $\mathbf{Q}$ is almost parallel to $x_1$ and perpendicular to the plane of the complete defect ring, inside the complete defect ring, and is almost radial elsewhere Fig \ref{fig:configuration}(b).  This biaxial solution exists when the temperature is low enough so that the global minimiser of \eqref{eq:f2} is biaxial. Are there experimental implications? Provided \eqref{6thNDLdG} is valid for such low temperatures ($t < -50$), Figure~\ref{fig:configuration}(b) suggests that the optical signature of this biaxial solution should be completely different from the optical signatures of the RH, biaxial torus and split core solutions which have a predominantly radial director. This could be verified by taking optical measurements in the $x_2 x_3$ plane. Of course, there are challenges related to the choices of the material parameters in \eqref{eq:f2}, about which little is known. Importantly, is this biaxial solution a potential route for observing bulk biaxiality, since it is almost maximally biaxial in the interior? We cannot comment on this, since approximately biaxial configurations are described by a triad of eigenvalues, $(\lambda_1, \lambda_2, \lambda_3) = (l + \nu, -\nu - \mu, -l + \mu)$ for some positive $l$, and small $\nu, \mu \in \mathbb{R}$. Physically, this means that the molecules prefer to align along the corresponding eigenvector $\mathbf{e}_1$ and orthogonal to the eigenvector $\mathbf{e}_3$, with disorder in the direction $\mathbf{e}_2$. From a theoretical perspective, this is biaxial but experimentally, one may only detect ordering along $\mathbf{e}_1$ and the optical textures may resemble uniaxial textures with the uniaxial director oriented along $\mathbf{e}_1$. In Figure~\ref{fig:eigenf2}, we plot the eigenvalues of the biaxial minimiser of \eqref{eq:f2}, when it exists and it seems that there is always one small eigenvalue. In this case, the eigenvector with the largest positive eigenvalue may be experimentally identifiable with the director, and one should certainly get different optical measurements along the remaining two eigenvectors. We cannot comment on whether this suggestion of taking experimental measurements in transverse cross-sections can provide a reliable tool for observing biaxiality, since these measurements can potentially be attributed to factors other than biaxiality too.
The details of the numerical method can be found in Appendix B. Future work will include a detailed study of solution landscapes of \eqref{6thELQ} for different model problems.

\begin{figure}[ht]
    \centering
\includegraphics[width=0.3\textwidth,angle=-90]{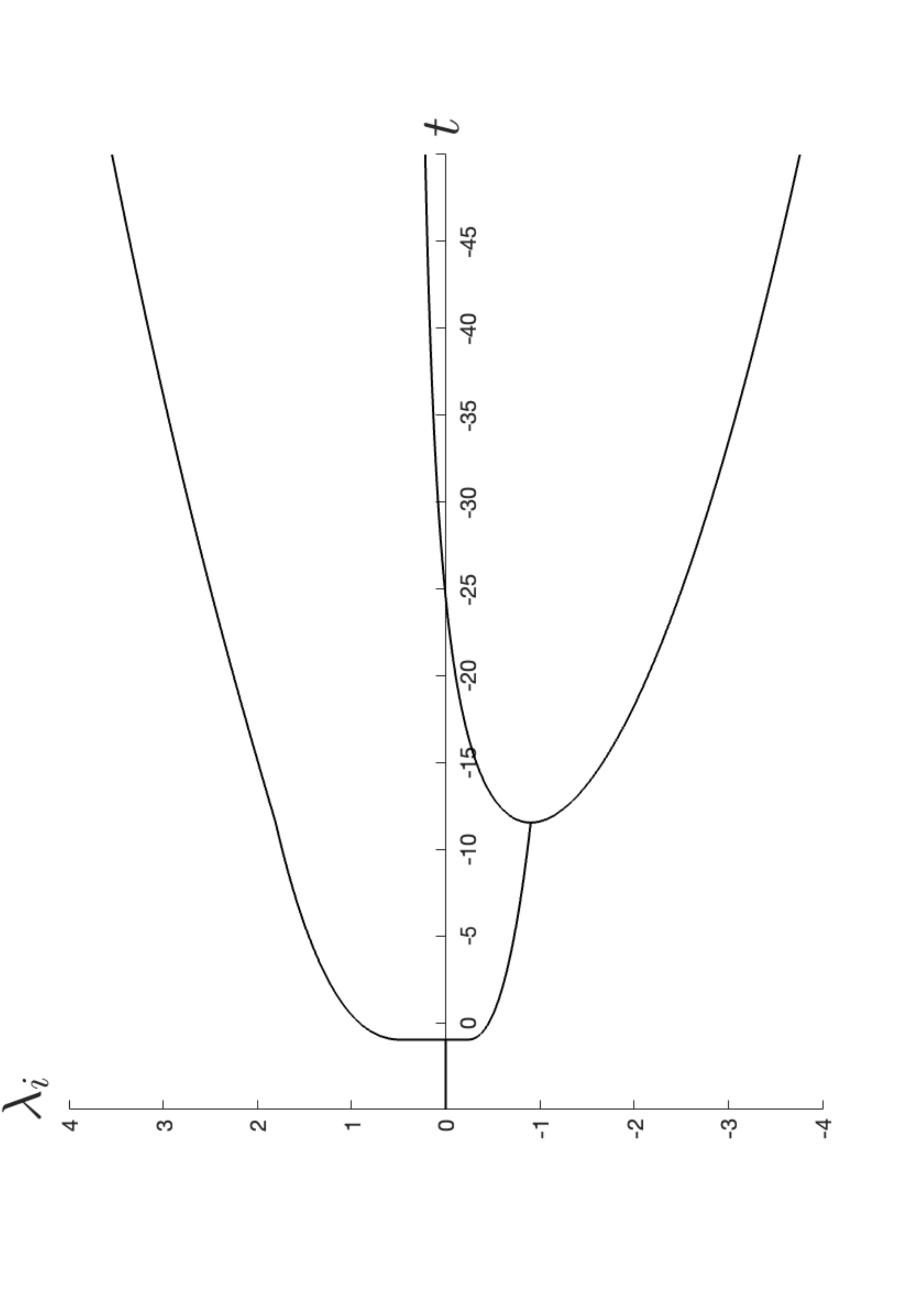}\\
\caption{The eigenvalues of the global minimiser of \eqref{eq:f2} as a function of the temperature.} \label{fig:eigenf2}
 \end{figure}
 \begin{figure}[ht]
    \centering
\includegraphics[width=\textwidth]{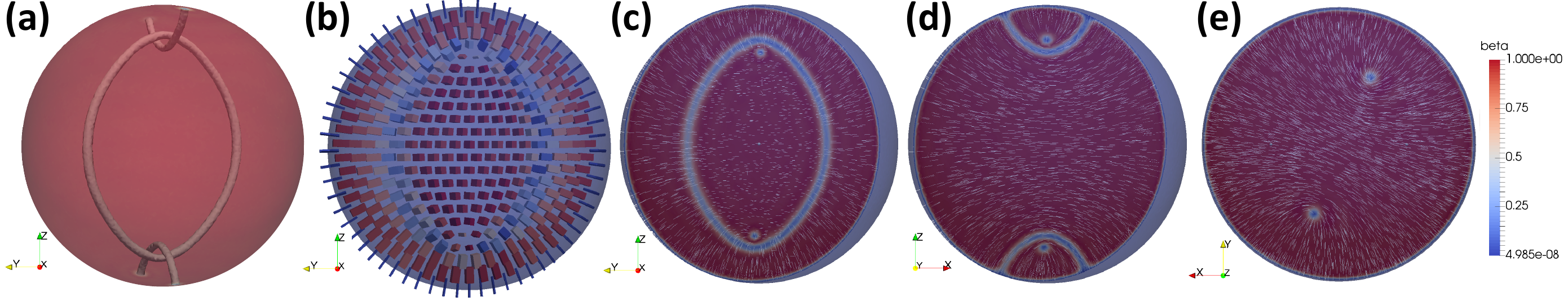}\\
    \caption{The profile of biaxial solution with parameters
    $t = -50$, $\epsilon = 0.2$, $d = 1$, $e = 0$, $f = 1$, $w=1e5$, $(N,L,M) = (64,64,32)$. (a) The contour of $\beta = 0.5$ inside the ball $B(0,0.9)$. (b) On the plane perpendicular to $x_1$, the box represents $\mathbf{Q}$-tensor with three edge lengths corresponding to $\lambda_i+s_+/3$, $i = 1,2,3$ where $\lambda_i$ are three eigenvalues of $\mathbf{Q}$ and three edge directions corresponding to the three eigenvectors of $\mathbf{Q}$, $\mathbf{n}_i$, $i = 1,2,3$.  (c-e) Cross-sections of the solution with normal vector $x_i$, $i = 1,2,3$ viewed along $x,y,z$-axes. The colour represents biaxiality $\beta$, and the white lines represent the leading eigenvector $\mathbf{n}_1$.}
    \label{fig:configuration}
\end{figure}
\begin{figure}[ht!]
    \centering
\includegraphics[width=\textwidth]{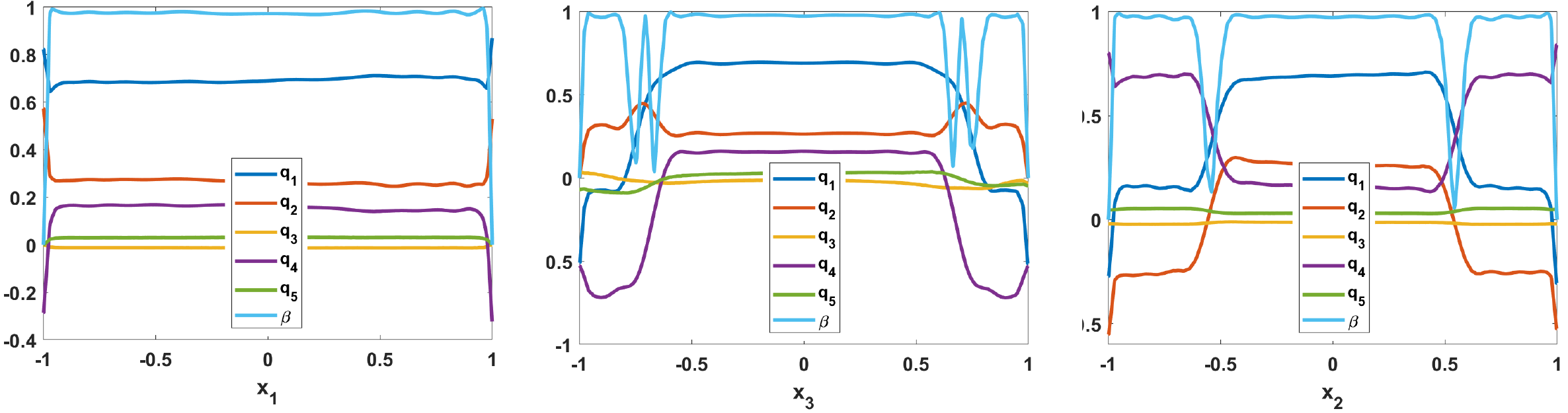}\\
    \caption{The plots of $q_i$, $i = 1,\cdots,5$ and $\beta$ on the lines via origin along $\mathbf{x}_1$, $\mathbf{x}_2$, and $\mathbf{x}_3$ directions.}
    \label{fig:line}
\end{figure}

 \section*{Acknowledgments}
AM is supported by a Leverhulme Research Project Grant, the University of Strathclyde New Professors Fund, a Leverhulme International Academic Fellowship, an OCIAM Visiting Fellowship at the University of Oxford and a Daiwa Foundation Small Grant. YH is supported by a Leverhulme Research Project Grant. SM is supported by an Engineering and Physical Sciences Research Council studentship.
 \newpage
	\printbibliography
 
	\newpage
	
	\section*{Appendix A}
	We note that the proofs of Propositions 2 and 3 begin with the general Euler–Lagrange equations corresponding to the free energy (\ref{6thNDLdG}). We then restrict both problems to the uniaxial case, and we use the structure of the RH solution to reduce the proofs to arguments involving just a scalar order parameter, \(s\).
	
	\noindent \textit{Proof of Proposition 2.} We consider two subsets:
	\begin{equation}
		\Omega_+ = \big\{\boldsymbol{r} \in B(0,1): s(\boldsymbol{r}) \geq 0\big\}, \quad \text{and} \quad  \Omega_- = \big\{\boldsymbol{r} \in B(0,1): s(\boldsymbol{r}) < 0\big\}.
	\end{equation}
	Suppose that the subset
	\begin{equation}
		\widetilde{\Omega} = \Bigg\{\boldsymbol{r} \in B(0,1): |\textbf{Q}^*(\boldsymbol{r}^*)|^2 > \frac{2}{3}\max\big\{s_+^2, s_-^2\big\}\Bigg\}\subset B(0,1)\setminus\partial B(0,1) \subset \Omega_+ \cup \Omega_-
	\end{equation}
	is nonempty. The subset \(\widetilde{\Omega}\) does not intersect \(\partial B(0,1)\) since \(\max\big\{s_+^2, s_-^2\big\} \geq s_+^2\). Moreover, the function \(|\textbf{Q}^*|:\overline{B(0,1)}\to\mathbb{R}\) attains a strict maximum at an interior point \(\boldsymbol{r}^*\in\widetilde{\Omega}\).
	
	We multiply the Euler–Lagrange equations (\ref{6thELQ}) by \(Q_{ij}^*\) to find that
	\begin{equation}
		\varepsilon\bigg(\frac{1}{2}\Delta|\textbf{Q}^*|^2 - |\nabla\textbf{Q}^*|^2\bigg) = |\textbf{Q}^*|^2 - 3\sqrt{6}\tr\textbf{Q}^{*3} + 2|\textbf{Q}^*|^4 + d|\textbf{Q}^*|^2\tr\textbf{Q}^{*3} + e|\textbf{Q}^*|^6 + \frac{(f - e)}{6}\big(\tr\textbf{Q}^{*3}\big)^2.
	\end{equation}
	since \(|\nabla\textbf{Q}^*|^2 + Q_{ij}^*\Delta Q_{ij}^* = \frac{1}{2}\Delta|\textbf{Q}^*|^2\). We note that \(\frac{1}{2}\Delta|\textbf{Q}^*(\boldsymbol{r}^*)|^2 - |\nabla\textbf{Q}^*(\boldsymbol{r}^*)|^2 \leq 0\) at the interior maximum. 
	
	Let us label
	\begin{equation}
		h(\textbf{Q}) := |\textbf{Q}|^2 - 3\sqrt{6}\tr\textbf{Q}^3 + 2|\textbf{Q}|^4 + d|\textbf{Q}|^2\tr\textbf{Q}^3 + e|\textbf{Q}|^6 + \frac{(f - e)}{6}\big(\tr\textbf{Q}^3\big)^2.
	\end{equation}
	The aim is to show that \(h\) is positive at \(\textbf{Q}^*(\boldsymbol{r}^*)\) for a contradiction. First, consider the case where \(\boldsymbol{r}^* \in \Omega_+\). Then we may write
	\begin{equation}
		h(\textbf{Q}^*(\boldsymbol{r}^*)) = t|\textbf{Q}^*(\boldsymbol{r}^*)|^2 - 3|\textbf{Q}^*(\boldsymbol{r}^*)|^3 + 2|\textbf{Q}^*(\boldsymbol{r}^*)|^4 + \frac{d}{\sqrt{6}}|\textbf{Q}^*(\boldsymbol{r}^*)|^5 + e|\textbf{Q}^*(\boldsymbol{r}^*)|^6 + \frac{(f - e)}{6}|\textbf{Q}^*(\boldsymbol{r}^*)|^6
	\end{equation}
	since \(\tr\textbf{Q}^{*3} = \frac{1}{\sqrt{6}}|\textbf{Q}^*|^3\). Note that
	\(h(\textbf{Q}) = \sqrt{\frac{3}{2}}|\textbf{Q}|g'\Big(\sqrt{\frac{3}{2}}|\textbf{Q}|\Big)\), and \(\sqrt{\frac{3}{2}}|\textbf{Q}| = |s|\), for an arbitrary uniaxial \textbf{Q}-tensor of the form \(\textbf{Q}_s = s\left(\boldsymbol{n}\otimes\boldsymbol{n} - \frac{1}{3}\textbf{I}\right)\).
	Therefore, the sign of \(h(\textbf{Q})\) is dictated by the sign of \(g'(|s|)\).
	
	Let us write \(|\textbf{Q}^*(\boldsymbol{r})| = \sqrt{\frac{2}{3}}|s^*(\boldsymbol{r})|\). We have noted in Section 3 that we are working in a parameter regime such that the function \(g\) is a double-welled potential with \(g'(s_-) = g'(s_+) = 0\) below some transition temperature \(t_0\). Moreover, we choose \(e\) and \(f\) so that \(g(s) \to +\infty\) as \(|s| \to +\infty\), and since \(|s^*(\boldsymbol{r}^*)| > \max\big\{s_+,|s_-|\big\}\), then we may conclude that \(g'(|s^*(\boldsymbol{r}^*)|) > 0\) at the interior maximum \(\boldsymbol{r}^* \in \widetilde{\Omega}\). Hence \(h(\textbf{Q}^*(\boldsymbol{r}^*)) > 0\), and there cannot be a strict interior maximum at \(\boldsymbol{r}^* \in \Omega_+\).
	
	Now consider the case where \(\boldsymbol{r}^* \in \Omega_-\). We may write
	\begin{equation}
		h(\textbf{Q}^*(\boldsymbol{r}^*)) = t|\textbf{Q}^*(\boldsymbol{r}^*)|^2 + 3|\textbf{Q}^*(\boldsymbol{r}^*)|^3 + 2|\textbf{Q}^*(\boldsymbol{r}^*)|^4 - \frac{d}{\sqrt{6}}|\textbf{Q}^*(\boldsymbol{r}^*)|^5 + e|\textbf{Q}^*(\boldsymbol{r}^*)|^6 + \frac{(f - e)}{6}|\textbf{Q}^*(\boldsymbol{r}^*)|^6, 
	\end{equation}
	since \(\tr\textbf{Q}^{*3} = -\frac{1}{\sqrt{6}}|\textbf{Q}^*|^3\),
	and we note that \(h(\textbf{Q}) = -\sqrt{\frac{3}{2}}|\textbf{Q}|g'\Big(-\sqrt{\frac{3}{2}}|\textbf{Q}|\Big)\).
	Therefore, the sign of \(h(\textbf{Q})\) is dictated by the sign of \(-g'(-|s|)\). Then, since \(g\) is a double-welled potential with \(g'(s_-) = g'(s_+) = 0\) such that \(g(s) \to +\infty\) as \(|s| \to +\infty\), and \(|s^*(\boldsymbol{r}^*)| > \max\big\{s_+, |s_-|\}\), then we may conclude that \(-g'(-|s^*(\boldsymbol{r}^*)|) > 0\) at the interior maximum \(\boldsymbol{r}^*\in\widetilde{\Omega}\). Hence \(h(\textbf{Q}^*(\boldsymbol{r}^*)) > 0\), and there cannot be a strict interior maximum at \(\boldsymbol{r}^* \in \Omega_-\).
	
	Thus, we combine the above two cases to find that the set \(\widetilde{\Omega}\) must be empty and the global minimiser \(\textbf{Q}^*\) in the class of uniaxial \textbf{Q}-tensors must satisfy the upper bound
	\begin{equation}
		|\textbf{Q}^*|^2 \leq \frac{2}{3}\max\big\{s_+^2,s_-^2\big\}.
	\end{equation}
	\qed
	\medskip
	
	\noindent \textit{Proof of Proposition 3.} We prove uniqueness via a contradiction argument, relying on a Pohozaev identity
	\begin{multline}
		\varepsilon^2\Bigg(\frac{1}{2}\int_{B(0,1)}Q_{ij,\ell}Q_{ij,\ell}\,dV + \int_{\partial B(0,1)}Q_{ij,k}x_kQ_{ij,\ell}x_{\ell}\,dS - \frac{1}{2}\int_{\partial B(0,1)}Q_{ij,\ell}Q_{ij,\ell}\,dS\Bigg) \\
		= \int_{\partial B(0,1)}f_B(\textbf{Q})\,dS - 3\int_{B(0,1)}f_B( \textbf{Q})\,dV, \label{Pohozaev}
	\end{multline}
	which is obtained from the Euler–Lagrange equations (\ref{6thELQ}) as is done in \parencite{MajumdarZarnescu2010}.
	
	We rewrite (\ref{Pohozaev}) as
	\begin{equation}
		\mathcal{F}[\textbf{Q}] + 2\int_{B(0,1)}f_B(\textbf{Q})\,dV = \int_{\partial B(0,1)}f_B(\textbf{Q})\,dS + \frac{1}{2}\int_{\partial B(0,1)}Q_{ij,\ell}Q_{ij,\ell}\,dS - \int_{\partial B(0,1)}\big(Q_{ij,k}x_k\big)^2\,dS. \label{uniqueeq1}
	\end{equation}
	Suppose for a contradiction that there exist \(s_1, s_2 \in \mathcal{A}_s, s_1 \neq s_2\), satisfying
	\begin{equation}
		I[s_1] = I[s_2] = \min_{\mathcal{A}_s}I.
	\end{equation}
	We apply (\ref{uniqueeq1}) to \(\textbf{Q}_{s_1}\) and \(\textbf{Q}_{s_2}\), simplify the resulting equations to obtain two equations involving \(s_1\) and \(s_2\), and subtract the second from the first to obtain the relation 
    \begin{equation}
		6\int_{0}^{1}r^2\big(g(s_1) - g(s_2)\big)\,dr = \big(s_2'(1)\big)^2 - \big(s_1'(1)\big)^2, \label{uniqueeq6}
	\end{equation}
	recalling that \(s_1(1) = s_2(1) = s_+\).
	
	The two functions \(s_1\) and \(s_2\) are distinct solutions of the Euler–Lagrange equation (\ref{sODE}) corresponding to the minimisation of \(I\). Hence Lemma 2 in \parencite{Lamy2013} ensures they cannot coincide on a neighbourhood of zero. Suppose, without loss of generality, that \(s_1 < s_2\) on \((0,\varepsilon)\). To show that we must in fact have \(s_1 < s_2\) on \((0,1)\), suppose for a contradiction that there exists an \(r_0 \in (0,1)\) such that \(s_1(r_0) = s_2(r_0)\). We define the function \(\tilde{s}\) by
	\begin{equation}
		\tilde{s}(r) = \begin{cases}
			s_2(r), &r \in (0,r_0], \\
			s_1(r), &r \in (r_0,1),
		\end{cases}
	\end{equation}
	and we show that \(\tilde{s}\) is a minimiser of \(I\). Denoting by \(h[s]\) the energy density (\(I[s] = \int_{0}^{1}h[s]\,dr\)), and setting
	\begin{equation}
		\bar{s}(r) = \begin{cases}
			s_1(r), &r \in (0,r_0], \\
			s_2(r), &r \in (r_0,1),
		\end{cases}
	\end{equation}
	we find that
	\begin{equation}
		I[s_2] \leq I[\bar{s}] = \int_{0}^{r_0}h[s_1]\,dr + \int_{r_0}^{1}h[s_2]\,dr,
	\end{equation}
	since \(s_2\) is a minimiser and \(\bar{s}\) lies in the admissible space. Therefore, it holds that \(\int_{0}^{r_0}h[s_2]\,dr \leq \int_{0}^{r_0}h[s_1]\,dr,\) since
	\begin{equation}
		I[s_2] = \int_{0}^{r_0}h[s_2]\,dr + \int_{r_0}^{1}h[s_2]\,dr \leq \int_{0}^{r_0}h[s_1]\,dr + \int_{r_0}^{1}h[s_2]\,dr.
	\end{equation}
	Adding \(\int_{r_0}^{1}h[s_1]\,dr\) to both sides of the inequality yields \(I[\tilde{s}] \leq I[s_1]\), so we may conclude that \(\tilde{s}\) is a minimiser. Since \(\tilde{s}\) is a minimiser, it must be analytic by Proposition 1. Therefore, at \(r_0\) all of its right derivatives are equal to those of \(s_1\). This tells us that \(\tilde{s} = s_1\) on a neighbourhood of \(r_0\), which implies that \(s_1 = s_2\). This contradicts the assumption that \(s_1 < s_2\) on \((0,\varepsilon)\). Therefore, we find that \(s_1 < s_2 \,\, \text{on} \,\, (0,1)\). This implies, together with \(s_1(1) = s_2(1)\), that
	\begin{equation}
		s_1'(1) \geq s_2'(1),
	\end{equation}
	so the right-hand side of (\ref{uniqueeq6}) is non-positive.
	
	On the other hand, since \(t < 0\), we can show that \(g'(u) < 0\) for \(u \in (0,s_+)\), where \(g\) is defined in \eqref{gdefn} which is the same as
	\begin{equation}
		\frac{4(f + 5e)}{81}s_+^3 + \frac{4d}{27}s_+^2 + \frac{8}{9}s_+ - \frac{2\sqrt{6}}{3} < 0,
	\end{equation}
	provided \(d, f > 0\) and \(e > - \frac{1}{5}f\). Hence the energy density \(g\) is decreasing on \([0,s_+]\). Then since \(s_1 < s_2\), we find that
	\begin{equation}
		\int_{0}^{1}\big(g(s_1) - g(s_2)\big)\,dr > 0.
	\end{equation}
	Therefore the left-hand side of (\ref{uniqueeq6}) is positive, and we have reached our contradiction.
 
 We prove nonnegativity via a contradiction with the assumption that there exists an interior measurable subset
	\begin{equation}
		\Gamma = \big\{r \in (0,1):s^*(r) < 0\big\}\subset [0,1],
	\end{equation}
	with \(s^*(r) = 0\) on \(\partial\Gamma\). We define the perturbation
	\begin{equation}
		\bar{s}^* = \begin{cases}
			s^*(r), &r \in [0,1]\setminus\Gamma, \\
			-s^*(r), &r \in \Gamma.
		\end{cases}
	\end{equation}
	Then
	\begin{equation}
		I[\bar{s}^*] - I[s^*] = \int_{\Gamma}\Bigg(\frac{4\sqrt{6}}{9}s^{*3} - \frac{8d}{135}s^{*5}\Bigg)r^2\,dr < 0,
	\end{equation}
	where \(I\) is defined in (\ref{LdGs}), if \(s^{*2} < \frac{15\sqrt{6}}{2d} \,\, \text{for} \,\, d > 0\) since \(s^*(r) < 0\) on \(\Gamma\) by assumption. Also, since \(s^{*2} \leq s_+^2\) by Proposition 2, we can guarantee that \(I[\bar{s^*}] - I[s^*] < 0\) if \(s_+^2 < \frac{15\sqrt{6}}{2d}\). However, this contradicts the energy minimality of \(s^*\). It follows that \(s^*(r)\geq 0\) for \(r \in [0,1]\) if \(s_+^2 < \frac{15\sqrt{6}}{2d}.\)
	
	To show that \(s^*(r) > 0\) for \(r > 0\) assume for a contradiction that there exists some \(r_0 \in (0,1]\) such that \(s^*(r_0) = 0\). Since we have already shown that \(s^*(r) \geq 0\) on \([0,1]\), the function \(s^*\) must therefore have a minimum at \(r_0\). Then
	\begin{equation}
		\frac{ds^*}{dr}\Big|_{r = r_0} = 0 \quad \text{and} \quad \frac{d^2s^{*}}{dr^2}\Big|_{r = r_0} \geq 0.
	\end{equation}
	However, if we substitute \(s^*(r_0)\) into (\ref{sODE}), we find that \(\frac{d^2s^{*}}{dr^2}\Big|_{r = r_0} = 0\). We can repeat this process to find that, in fact, \(\frac{d^ns^*}{dr^n}\Big|_{r = r_0} = 0\) for all \(n \in \mathbb{N}\). However, this cannot be true because we know from Proposition 1 that \(s^*\) is analytic and we have the boundary condition \(s^*(1) = s_+\). Therefore, we have reached a contradiction, so \(s^*(r) > 0\) on \((0,1]\).

    We prove monotonicity using an argument analogous to \parencite[Proposition 3]{Lamy2013}.
    
	\qed
	\medskip

    \noindent \textit{Proof of Proposition 5.} We consider a general biaxial perturbation
	\begin{equation}
		\hat{\textbf{Q}}(\boldsymbol{r}) = \begin{cases}
			\textbf{Q}^*(\boldsymbol{r}) + \tilde{p}(r)\bigg(\boldsymbol{z}\otimes\boldsymbol{z} - \frac{1}{3}\textbf{I}\bigg), &0 \leq r \leq \sigma, \\
			\textbf{Q}^*(\boldsymbol{r}), &\sigma \leq r \leq 1,
		\end{cases}
	\end{equation}
	where \(\tilde{p}:[0,1] \to \mathbb{R}\) is nonzero for \(0 < r < \sigma, \tilde{p}(0) = 0\), and \(\tilde{p}(r) = 0\) for \(\sigma \leq r \leq 1\), and \(\textbf{Q}^*\) is the RH solution. We find that
	\begin{equation}
		\frac{1}{4\pi}\big(\mathcal{F}[\hat{\textbf{Q}}] - \mathcal{F}[\textbf{Q}^*]\big) \leq \begin{aligned}[t]
			\int_{0}^{\sigma}\Bigg(\frac{\varepsilon^2}{3}\bigg(\frac{d\tilde{p}}{dr}\bigg)^2 &+ \frac{t}{3}\tilde{p}^2 - \frac{2\sqrt{6}}{9}\tilde{p}^3 + \frac{28}{45}s^{*2}\tilde{p}^2 \\
			&+ \frac{2}{9}\tilde{p}^4 + \frac{d}{5}\bigg(\frac{4}{27}s^{*3}\tilde{p}^2 + \frac{52}{135}s^{*2}\tilde{p}^3 + \frac{4}{27}\tilde{p}^5\bigg) \\
			&+ \frac{e}{6}\bigg(\frac{8}{5}s_+^4\tilde{p}^2 + \frac{128}{945}s^{*3}\tilde{p}^3 + \frac{8}{5}s^{*2}\tilde{p}^4 + \frac{16}{9}s^*\tilde{p}^5 + \frac{8}{27}\tilde{p}^6\bigg) \\
			&+ \frac{(f - e)}{6}\bigg(\frac{4}{45}s_+^4\tilde{p}^2 + \frac{112}{405}s^{*3}\tilde{p}^3 + \frac{4}{45}s^{*2}\tilde{p}^4 + \frac{4}{81}\tilde{p}^6\bigg)\Bigg)r^2\,dr
		\end{aligned}
	\end{equation}
	For large negative \(t\), we can approximate \(s_+\) by \(s_+ \approx \Bigg(\dfrac{-27t}{2(f + 5e)}\Bigg)^{1/4}\). Suppose we are working with large negative \(t\). Then, setting \(\sigma = 0.1\) and substituting 
	\begin{equation}
		\tilde{p}(r) = \frac{1}{(r^2 + 12)^2}\bigg(1 - \frac{r}{\sigma}\bigg) \label{pertpdefn}
	\end{equation}
	into the above, we find that \(\frac{1}{4\pi}\big(\mathcal{F}[\hat{\textbf{Q}}] - \mathcal{F}[\textbf{Q}^*]\big) < 0\) if
	\begin{equation}
		t\lesssim\frac{500\varepsilon^2(77184e + 16437f)}{428440e - 32975f}.
	\end{equation}
	Therefore we may conclude that the biaxial perturbation (\ref{perturbation})
	with \(\sigma = 0.1\), and \(\tilde{p}\) as in \eqref{pertpdefn} has lower free energy than the RH in the low temperature regime.
	\qed
	\medskip
	
	\section*{Appendix B}
 \subsection*{Numerical Method for Finding the Biaxial State}
Instead of using a Dirichlet boundary condition, the homeotropic anchoring is imposed on the surface of the sphere, $\partial B(0,1)$ by the following surface energy:
\begin{equation}\label{eq:surface}
\tilde{F}_s = \int_{\partial B(0,1)} \frac{w}{2}(\tilde{\mathbf{Q}}-\mathbf{Q}_{s_+})^2 dS,
\end{equation}
where $\mathbf{Q}_{s_+}$ is defined in \eqref{DirichletBC}, $w = \frac{27CW}{B^2R}$ is the nondimensionalised anchoring strength, and $W$ is the anchoring strength. We set $w = 1e5$ in our numerical calculation, which is extremely strong anchoring which plays almost the same role as the Dirichlet boundary conditions in \eqref{DirichletBC}.

Assuming the order parameter is given by
\begin{equation}
\mathbf{Q}=
\begin{pmatrix}
q_1 & q_2 & q_3\\
q_2 & q_4 & q_5\\
q_3 & q_5 & -q_1-q_4
\end{pmatrix},
\end{equation}
we can expand the component of $\mathbf{Q}$-tensor in terms of Zernike polynomials,
\begin{equation}\label{eq:expension}
q_i(r,\theta,\phi) = \sum_{m = 1-M}^{M-1}\sum_{l = |m|}^{L-1}\sum_{n=l}^{N-1} A_{nlm}^{(i)}Z_{nlm}(r,\theta,\phi),
\end{equation}
where $N\geq L\geq M\geq 0$ specify the truncation limits of the expanded series, with
\begin{align}
&Z_{nlm}(r,\theta,\phi) = R_n^{(l)}(r)P_l^{|m|}(\cos\theta)X_m(\phi),\\
&R_n^{(l)}(r) = \begin{cases}
\sum_{s=0}^{(n-l)/2}N_{nls} r^{n-2s},  & \text{if $\frac{n-l}{2}\geq 0$, $\frac{n-l}{2}\in\mathrm{Z}$,} \\
0, & \text{otherwise,}
\end{cases}\\
&N_{nls} = (-1)^s\sqrt{2n+3}\prod_{i=1}^{n-l}(n+l-2s+1+i)\prod_{i=1}^l(\frac{n-l}{2}-s+i)\frac{2^{l-n}}{s!(n-s)!}.
\end{align}
where
\begin{equation}
X_m(\phi)=
\begin{cases}
\cos m\phi,  & \text{if $m\geq 0$}, \\
\sin|m|\phi, & \text{if $m<0$.}
\end{cases}\\
\end{equation}
and 
$P_{l}^m(x)$ $(m\geq 0)$ are the normalized associated Legendre polynomials.

Substituting \eqref{eq:expension} into the sum of the energy functional in \eqref{6thNDLdG} and the surface energy in \eqref{eq:surface}, we obtain a free energy as a function of these unknown coefficients $A_{nlm}^{(i)}$.
We minimise the energy function by using a standard optimisation method, L-BFGS \cite{nocedal1999numerical} with a random initial condition $A_{nlm}^{(i)} = 0.2(2rand()/RAND_{MAX}-1)$, where $rand()$ returns a pseudo-random number in the range of $[0, RAND_{MAX})$ in C++, $i = 1,\cdots,5$.

\end{document}